\crefname{equation}{}{}
\DeclareMathOperator*{\argmax}{arg\,max}
\DeclareMathOperator*{\argmin}{arg\,min}
\newcommand\remove[1]{}
\newtheorem{theorem}{Theorem}
\newtheorem{lemma}{Lemma}[section]
\newtheorem*{lemma*}{Lemma}
\newtheorem{corollary}[lemma]{Corollary}
\newtheorem*{corollary*}{Corollary}
\theoremstyle{definition}
\newtheorem*{theorem*}{Theorem}
\newtheorem{definition}[lemma]{Definition}
\newtheorem*{rem*}{Remark}
\newcommand\R{\mathbb{R}}
\newcommand\Z{\mathbb{Z}}
\newcommand\E{\mathbb{E}}
\newcommand{\eps}{\varepsilon}
\renewcommand{\O}{\widetilde{O}}
\newcommand{\pe}{\preceq}
\newcommand{\se}{\succeq}
\newcommand{\bs}{\backslash}
\newcommand{\hf}{\hat{f}}
\newcommand{\assign}{\leftarrow}
\newcommand{\otilde}{\O}
\renewcommand{\forall}{\mathrm{\text{ for all }}}
\renewcommand{\d}{\delta}
\newcommand{\wpe}{w^+_e}
\newcommand{\wme}{w^-_e}
\newcommand{\upe}{u^+_e}
\newcommand{\ume}{u^-_e}
\newcommand{\rpe}{\rho^+_e}
\newcommand{\rme}{\rho^-_e}
\newcommand{\g}{\nabla}
\newcommand{\diag}{\mathrm{diag}}
\renewcommand{\E}{\mathcal{E}}
\newcommand{\new}{\mathrm{new}}
\newcommand{\up}{u^+}
\newcommand{\um}{u^-}
\renewcommand{\wp}{w^+}
\newcommand{\wm}{w^-}
\newcommand{\rp}{\rho^+}
\renewcommand{\rm}{\rho^-}
\newcommand{\normrho}{\|\rho\|}
\newcommand{\val}{\mathrm{val}}
\newcommand{\Maxflow}{\textsc{Maxflow}}
\newcommand{\RecursivePreconditioning}{\textsc{RecursivePreconditioning}}
\newcommand{\X}{\mathcal{X}}
\newcommand{\T}{\mathcal{T}}
\newcommand{\Center}{\textsc{Center}}
\newcommand{\Progress}{\textsc{Progress}}
\newcommand{\ControlCongestion}{\textsc{ControlCongestion}}
\newcommand{\ComputeWeights}{\textsc{ComputeWeights}}
\newcommand{\ReduceWeights}{\textsc{ReduceWeights}}
\newcommand{\PerfectCenter}{\textsc{PerfectCenter}}
\newif\ifrandom
\newcommand{\maxflowruntime}{m^{11/8+o(1)}U^{1/4}}
\newcommand{\defeq}{\stackrel{\mathrm{\scriptscriptstyle def}}{=}}
\newcommand{\poly}{{\mathrm{poly}}}
\newcommand{\err}{\frac{1}{2^{\poly(\log m)}}}
\newcommand{\todolater}[1]{}
\newcommand{\etal}{\textit{et~al.}}
\crefname{algocf}{Algorithm}{Algorithms}
\author{
Yang P. Liu \\
Stanford University \\
\texttt{yangpliu@stanford.edu}
\thanks{Research supported by the U.S.
Department of Defense via an NDSEG fellowship.} 
\and
Aaron Sidford \\
Stanford University \\
\texttt{sidford@stanford.edu}
\thanks{Research supported by NSF CAREER Award CCF-1844855.}
}
\begin{document}

\title{Faster Energy Maximization for Faster Maximum Flow}

\begin{titlepage}
\clearpage\maketitle
\thispagestyle{empty}

\begin{abstract}
	
In this paper we provide an algorithm which given any $m$-edge $n$-vertex directed graph with integer capacities at most $U$ computes a maximum $s$-$t$ flow for any vertices $s$ and $t$ in $\maxflowruntime$ time with high probability. This running time improves upon the previous best of $\otilde(m^{10/7} U^{1/7})$ (M\k{a}dry 2016), $\otilde(m \sqrt{n} \log U)$ (Lee Sidford 2014), and $O(mn)$ (Orlin 2013) when the graph is not too dense or has large capacities.

We achieve this result by leveraging recent advances in solving undirected flow problems on graphs. We show that in the maximum flow framework of (M\k{a}dry 2016) the problem of optimizing the amount of perturbation of the central path needed to maximize energy and thereby reduce congestion can be efficiently reduced to a smoothed $\ell_2$-$\ell_p$ flow optimization problem, which can be solved approximately via recent work (Kyng, Peng, Sachdeva, Wang 2019). Leveraging this new primitive, we provide a new long-step interior point method for maximum flow with faster convergence and simpler analysis that no longer needs global potential functions involving energy as in previous methods (M\k{a}dry 2013, M\k{a}dry 2016).

\end{abstract}

\end{titlepage}

\newpage

\section{Introduction}
\label{sec:intro}
The maximum flow problem is one of the most well studied problems in combinatorial optimization and encompasses a broad range of cut, matching, and scheduling problems \cite{CLRS01}. Given a directed graph $G = (V, E)$ with $m = |E|$ edges, $n = |V|$ vertices, integer capacities $u \in \Z^E_{\geq 0}$, and vertices $s, t \in V$, the maximum flow problem asks to find a $s$-$t$ flow $f \in \R^E_{\geq 0}$ which sends the maximum number of units of flow from $s$ to $t$ without violating capacity constraints, i.e. satisfies that $f_e \leq u_e$ for all $e \in E$ (see \cref{sec:prelim}).

The main result of this paper is a new algorithm for solving maximum flow with high probability in time $\maxflowruntime = m^{3/2 - 1/8 + o(1)} U^{1/4}$, improving upon the previous best running times of $\O(m^{3/2 - 1/14} U^{1/7})$ \cite{Madry16}, $\O(m \sqrt{n})$ \cite{LS14}, and $O(mn)$ \cite{Orlin13} whenever the graph is not too dense or has large capacities. Our result immediately implies comparable improvements for the problems discussed above, including computing maximum matchings in bipartite graphs.

\subsection{Beyond the Laplacian Paradigm}

Our results are motivated by the particularly challenging problem of improving the running time for solving the maximum flow problem in unit capacity graphs, i.e. when $u_e = 1$ for all $e \in E$. This problem is known to be equivalent to computing a maximum set of disjoint $s$-$t$ paths \cite{Menger27,LRS13}, is sufficient for solving maximum cardinality bipartite matching, and has had few running time improvements in its long history. See \cref{sec:previous_work} for a more comprehensive discussion.

The state-of-the-art running times for unit capacity maximum flow of $\O(m^{10/7})$ \cite{Madry13} and $\O(m\sqrt{n})$ \cite{LS14} are instances of what is known as the \emph{Laplacian Paradigm}. Each result leverages a seminal result of Spielman and Teng in 2004 \cite{ST04} which showed that Laplacian systems could be solved in nearly linear time (see \cref{sec:prelim,sec:lap}). The results of \cite{Madry13,LS14} cast the maximum flow problem as a linear programming problem and applied interior point methods (IPMs), i.e. the state-of-the-art optimization methods for provably solving linear programs \cite{LS14,LeeS15,CohenLS19,LeeS19arXiv}; IPMs reduce solving linear programs to solving a sequence of linear systems, which are Laplacians in the special case of the maximum flow linear program. Tailoring these IPMs to the structure of maximum flow problem \cite{Madry13} and improving upon classic IPM methods \cite{LS14} then 
yield the previous state-of-the-art running times for unit capacity maximum flow.

Consequently, in the previous state-of-the-art maximum flow algorithms, much of the particular structure of maximum flow is abstracted away through the application of Laplacian system solvers in IPM frameworks. \cite{Madry13, Madry16} does provide additional graph-specific analysis and techniques, however these methods are somewhat local and do not perform the same type of global graph processing that goes into solving Laplacian systems.

Though the Laplacian paradigm, i.e. using Laplacian system solvers to solve graph problems efficiently, has been widely successful \cite{SpielmanS08,KM09,CKMST11,OrecchiaSV12,LRS13,KLOS14,MST15,ZhuLO15,LeeS15a,CohenMSV17,Lee017,Schild18}, they appear to be a sub-optimal primitive for solving maximum flow in certain cases. In another breakthrough of Christiano, Kelner, M\k{a}dry, Spielman, and Teng in 2011 \cite{CKMST11}, it was noted that solving Laplacian systems correspond to a particular flow optimization problem known as \emph{electric flow}, i.e. minimizing weighted $\ell_2$-norms of flows routing a particular demand; this is in contrast to solving undirected maximum flow, which is a particular $\ell_\infty$ minimization problem.  \cite{CKMST11} leveraged this insight and combined it with new optimization techniques to improve the running time for approximately solving maximum flow in undirected graphs, laying the groundwork for \cite{Madry13, Madry16}.

Interestingly, it was more recently shown that directly building crude generalizations of Laplacian system solvers for $\ell_\infty$ flow minimization, namely congestion approximators \cite{Sherman13} and oblivious routings \cite{KLOS14}, give almost linear time algorithms for approximating undirected maximum flow. This spawned a line of research on building new graph primitives and has since lead to even faster undirected maximum flow algorithms, nearly linear time solvers for directed graph variants of Laplacians, and more (see \cref{sec:previous_work}). These results raise the tantalizing possibility of using these new graph primitives to improve the running time for solving maximum flow and move beyond the Laplacian paradigm, i.e. the direct use of Laplacian system solvers. However despite extensive research obtaining such improvements for solving unit-capacity maximum flow has been elusive.

In this paper we overcome this barrier and show how to leverage a recent almost linear time flow primitive, namely the undirected graph smoothed $\ell_2$-$\ell_p$ flow algorithm of \cite{KPSW19}, in conjunction with Laplacian system solvers to improve upon the running time of \cite{Madry13, Madry16}. By leveraging this stronger primitive we not only obtain a faster running time, but in our opinion a slightly simpler algorithm than was obtained by previous interior point frameworks. We hope this work opens the door to further merging these diverged lines of research on directed and undirected flow algorithms and the varied optimization techniques which underlie them.

\subsection{Our Results}

The main result of this paper is a faster running time for solving maximum flow.
\begin{theorem}[Maximum Flow] \label{thm:main}
There is an algorithm which given a maximum flow problem instance on an integer capacitated graphs with $n$ vertices, $m$ edges, and maximum capacity $U$ computes a maximum flow with high probability in time $\maxflowruntime$. 
\end{theorem}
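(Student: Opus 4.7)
The plan is to work in the interior point method (IPM) framework of \cite{Madry16} for maximum flow, but to replace the inner ``energy maximization'' subroutine (and the associated global potential argument) with a direct reduction to the smoothed $\ell_2$-$\ell_p$ flow problem solvable in almost-linear time by \cite{KPSW19}. First, I would formulate maximum flow as the usual linear program with log-barrier $-\sum_e \log(u_e - f_e) - \sum_e \log f_e$, parametrize the central path by a target flow value $F$, and run a long-step IPM that increases $F$ by some increment $\delta$ at each iteration. In the standard short-step analysis one takes $\delta = \Theta(1)$ per step (after suitable rescaling) and needs $\tilde{O}(\sqrt{m})$ electric flow computations; the goal is to take $\delta$ polynomially larger in $m$ and thereby reduce the iteration count to roughly $m^{3/8+o(1)}$ when $U$ is small.

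To justify a larger $\delta$, one must control edge congestion after the Newton step. Following \cite{Madry13,Madry16}, I would split the update into an electric flow plus a perturbation supported on a small set of edges whose role is to dampen congestion. The key new observation is that choosing this perturbation to maximize the resulting step size is precisely a smoothed $\ell_2$-$\ell_p$ flow optimization on the current residual graph: the $\ell_2$ term encodes the electrical energy (with resistances derived from the barrier Hessian), while the $\ell_p$ term with $p = \Theta(\log m)$ approximates the $\ell_\infty$ congestion constraint on the perturbation. I would then verify that this reduction is efficient and that a $1/\poly(\log m)$-accurate solution suffices to certify the IPM progress bound.

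Plugging the almost-linear-time solver of \cite{KPSW19} into this subproblem, and using a Laplacian solver for the underlying electric step, each IPM iteration runs in $m^{1+o(1)}$ time. The per-iteration progress analysis, which no longer relies on a global energy potential and is thus simpler than in \cite{Madry16}, yields an $\tilde{O}(m^{3/8}U^{1/4})$ bound on the number of iterations by carefully balancing the size of the perturbation set against the allowed step size and the capacity-dependent weight perturbation. Multiplying the two factors gives the claimed $m^{11/8+o(1)}U^{1/4}$ runtime, with the high-probability guarantee inherited from the randomized solver of \cite{KPSW19}; a standard rounding of the final near-central iterate to an integral flow then produces an exact maximum flow.

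The main obstacle, and the step I expect to require the most care, is the centering analysis for this long-step IPM in the absence of a global potential: one must show that, after the smoothed $\ell_2$-$\ell_p$-based update, the iterate remains in a neighborhood of the central path in the appropriate local norm and that no edge's congestion grows too quickly across iterations. Making these invariants inductive for $\Omega(m^{3/8})$ steps, while tolerating an approximate oracle in place of exact energy maximization, is the delicate part; the remaining IPM convergence and rounding arguments should follow from now-standard interior point machinery.
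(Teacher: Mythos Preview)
Your high-level plan is the right one and matches the paper's strategy: use the KPSW19 smoothed $\ell_2$-$\ell_p$ solver to carry out a budget-constrained energy maximization that controls $\|\rho\|_\infty$, thereby enabling a longer step in the M\k{a}dry IPM without a global energy potential. The iteration count $\tilde{O}(m^{3/8}U^{1/4})$ and per-iteration cost $m^{1+o(1)}$ are also the right targets.

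However, there is a genuine gap in the weight-control argument, and it is exactly where the $11/8$ exponent comes from. The congestion-control subroutine returns resistance increases with $\|r'\|_1 \le m^{6\eta+o(1)}$, which by the weight-versus-resistance conversion costs up to $m^{6\eta+o(1)}U^2$ in $\|w\|_1$ per iteration. Over $\tilde{O}(m^{1/2-\eta})$ iterations this is $m^{1/2+5\eta+o(1)}U^2$, and keeping $\|w\|_1 \le 3m$ (which is needed for the preconditioning bound on $\|\rho\|_{w,2}$) then only allows $m^\eta \approx m^{1/10}U^{-2/5}$, giving runtime $m^{7/5+o(1)}U^{2/5}$, not $m^{11/8+o(1)}U^{1/4}$. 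The paper closes this gap with two additional mechanisms you do not have: a \emph{weight-reduction} step that, after the electric-flow step has been taken, reverses most of the temporary weight increase (exploiting that the increase was symmetric in the two slacks before the step and is nearly so after), bringing the net per-step weight change down to $O(m^{5\eta}U)$; and a \emph{perfect-centering} step on two explicit edge sets $S_1$ (high-congestion edges) and $S_2$ (edges whose resistance dropped significantly under reduction), which restores $\frac{1}{100}$-coupling at an additional weight cost that is also $O(m^{5\eta}U)$. Both steps require nontrivial per-edge accounting (using the auxiliary guarantee $s_e\max\{\rho_e^+,\rho_e^-\}\le m^{-3\eta}\|\rho\|_{w,2}/100$ from the energy-maximization analysis), and neither falls out of ``standard interior point machinery.''

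A secondary inaccuracy: the update is not ``an electric flow plus a perturbation supported on a small set of edges.'' The $\ell_2$-$\ell_p$ solve produces a dense vector of resistance increases $r'$; one then takes the ordinary electric $\chi$-flow for the \emph{new} resistances $r+r'$ as the progress direction. There is no small-support perturbation, and sparsity plays no role in the analysis.
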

This improves on the $\O(m^{10/7}U^{1/7})$ time algorithm of \cite{Madry16} as long as $U \le m^{1/2-\eps}$ for some $\eps > 0.$ In the case that
$U \ge m^{1/2}$ both the algorithm of \cite{Madry16} and \cref{thm:main} have runtime $\O(m^{3/2})$, which is already known through the work of Goldberg and Rao \cite{GR98}. Hence, we assume $U \le \sqrt{m}$ throughout the paper. An immediate corollary of \cref{thm:main} is the following result on solving maximum cardinality bipartite matching.
\begin{corollary}[Bipartite Matching] \label{cor:matching}
	There is an algorithm which given a bipartite graph with $n$ vertices and $m$ edges computes a maximum matching with high probability in time $m^{11/8 + o(1)}$. 
\end{corollary}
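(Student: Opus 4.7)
The plan is to apply the classical reduction from bipartite maximum matching to unit-capacity $s$-$t$ maximum flow and then invoke \cref{thm:main}. First I would discard isolated vertices (which cannot participate in any matching), after which we may assume $n \le 2m$. Then I would construct the standard flow network $G'$ from the bipartite input $G = (L \cup R, E)$: add a source $s$ and sink $t$, include a unit-capacity directed edge $(s,u)$ for every $u \in L$, a unit-capacity edge $(v,t)$ for every $v \in R$, and orient every edge of $E$ from its $L$-endpoint to its $R$-endpoint with capacity $1$. The resulting directed graph has $n + 2 = O(m)$ vertices, $m + n = O(m)$ edges, and maximum capacity $U = 1$.

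Next I would invoke the textbook correspondence: since $G'$ has integer capacities, \cref{thm:main} in fact returns an integer maximum $s$-$t$ flow $f^*$ (maximum flows in integer-capacitated networks can be assumed integral, and an integral flow is easily rounded from any optimal fractional one via a residual-cycle cancellation that costs $O(m)$ extra time), and the set of edges of $E$ carrying one unit of flow under $f^*$ forms a matching of $G$ whose size equals $|f^*|$, which in turn equals the maximum matching number of $G$. Finally, plugging the bounds $|V(G')|, |E(G')| = O(m)$ and $U = 1$ into \cref{thm:main} gives running time $m^{11/8 + o(1)} \cdot 1^{1/4} = m^{11/8 + o(1)}$ with high probability. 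The only potential obstacle is ensuring the output is an integer flow and not merely a fractional one of the correct value, but this is handled by standard flow-rounding arguments and does not affect the stated runtime; all the substantive work is carried out by \cref{thm:main}, and the reduction is purely syntactic.
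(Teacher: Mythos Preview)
Your proposal is correct and follows exactly the approach the paper has in mind: the corollary is stated as an ``immediate corollary'' of \cref{thm:main} with no proof given, and the intended argument is precisely the standard reduction from bipartite matching to unit-capacity $s$-$t$ maximum flow that you describe. Your care about integrality is also addressed within the paper's main algorithm, which explicitly rounds to an integral flow before the final augmenting-path phase (see \cref{sec:algo}).
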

We achieve this result by providing a new interior point method, broadly inspired by \cite{Madry16}, which allows us to leverage recent advances in computing smoothed $\ell_2$-$\ell_p$ norm minimizing flows to achieve faster running times. Further, we primarily use the smoothed $\ell_2$-$\ell_p$ flow minimization procedure to solve a particular type of energy maximization or congestion minimization problem that also may be of independent interest and useful for further results (see \cref{lemma:rhoinf} and \cref{sec:proofrhoinf}).

\subsection{Previous Work}
\label{sec:previous_work}

Given the well-studied nature of the maximum flow problem, it is impossible to comprehensively cover its storied history in this short section. Instead, we cover the lines of research on maximum flow most relevant to the results of this paper. Throughout, we let $m$ denote the number of edges, $n$ the number of vertices, and $U$ the maximum integer capacity in the graph. Further, we use $\tilde{O}$ to hide polylogarithmic factors in $m, n$, and $U$ and for simplicity we ignore these factors in the comparison of algorithms and do not distinguish between deterministic and randomized algorithms.

\textbf{Unit Capacity Graphs:} Obtaining running time improvements for exact maximum flow is a notoriously difficult open problem and obtaining improvements in the the special case of unit capacity graphs is particularly challenging. The problem was first shown to be polynomial time solvable in the work of Ford and Fulkerson \cite{FF56} and a running time of $\O( \min \{ m^{3/2}, m n^{2/3}  \} )$ was achieved in the classic work of Karzanov and Even-Tarjan \cite{Karzanov73,ET75} in the 1970s. Though there has been extensive progress on improving the running time of exact maximum flow on unit capacity graphs in varied special cases (e.g. planar graphs \cite{Reif83,Federickson87,Wulff10,IS10,KS19} and undirected graphs \cite{Karger97,Karger98a,Karger98b,Karger99}) and more general settings (e.g. multicommodity flow \cite{Fleischer00,KapoorV96,GargK98,Karakostas08,Madry10,KelnerMP12,KLOS14,LeeS15,Sherman17} and minimum cost flow \cite{GT89,DS08,CohenMSV17}), there have only been two improvements to this bound since. The work of \cite{Madry13} provided a $\O(m^{10/7})$ time algorithm and \cite{LS14} provided a $\O(m \sqrt{n})$ time algorithm for this problem. In the special case of bipartite matching, the only further improvement was the work of Hopcroft and Karp \cite{HK73} which showed the the problem could be solved in $O(m\sqrt{n})$, well in advance of \cite{LS14}.

\textbf{Capacitated Graphs:} In the 1990s there was a long line of improved algorithms for solving maximum flow on capacitated graphs (see \cite{GR98} for a more comprehensive discussion). In the case of weakly-polynomial time algorithms (i.e. those that depend poly-logarithmically on $U$) these results culminated in the seminal result of Goldberg and Rao in 1998 \cite{GR98} which showed that the problem could be solved in $\O( \min \{ m^{3/2}, m n^{2/3}  \} \log U)$. This was then improved to $\otilde(m \sqrt{n} \log U)$ by \cite{LS14}. In the case of strongly-polynomial time algorithms (i.e. those that do not depend on $U$ at all) the state-of-the-art is \cite{Orlin13} which showed that the problem can be solved in $O(mn)$. Further, \cite{Madry16} simplified components of \cite{Madry13} and obtained a $\O(m^{10/7} U^{1/7})$ time algorithm for maximum flow.

\textbf{Undirected Graphs:} Though the state-of-the-art running times for exact maximum flow on directed graphs has been difficult to improve, the last decade has witnessed numerous advances in solving maximum flow on undirected graphs. One line of work \cite{Karger97,Karger98a,Karger98b,Karger99} achieved improved randomized combinatorial algorithms for exact maximum flow on undirected graphs. In another line of work techniques from combinatorial and continuous optimization were combined to obtain faster algorithms for approximately solving maximum flow on undirected graphs \cite{CKMST11,Sherman13,KLOS14,Peng16,Sherman17,ST18} and Eulerian directed graphs \cite{EMPS16,IT18}. For these problems $\eps$-approximate solutions can now be computed in nearly linear $\O(m/\eps)$ time and even faster in some cases \cite{ST18}.

\textbf{Undirected Flow Problems:} Many of the above advances in solving maximum flow hinged on advances in solving related flow problems. The algorithms of \cite{CKMST11,Madry13,LS14,Madry16} all use Laplacian system solvers to compute electric flow, i.e. weighted $\ell_2$ norm minimizing flows on graphs. There have been numerous advances in solving Laplacian systems over the past decade \cite{ST04, KMP10, KMP11, KOSZ13, CKMPPRX14, KLPSS16, KS16} and more recently advances in solving $\ell_p$ norm variants of the problem \cite{AKPS19,KPSW19}. Our results hinge on a recent result  \cite{KPSW19} on computing particular combinations of the $\ell_2$ and $\ell_p$ norm of graphs efficiently.

\subsection{Paper Organization}
\label{sec:organization}
In \cref{sec:prelim} we give the preliminaries. In \cref{sec:approach} we give an overview for our algorithms. In \cref{sec:setup} we set up our main IPM framework. In \cref{sec:weight} we give a framework for understanding weight changes in the interior point method and give our main improved weight change algorithms. Finally, in \cref{sec:proofrhoinf} we prove \cref{lemma:rhoinf} which gives an efficient algorithm for controlling congestion via weight change and in \cref{sec:discussion} we conclude with possible simplifications and open problems.

Many proof details are deferred to the appendix, which is organized as follows. In \cref{sec:appprelim} we give additional preliminaries for convex optimization and Laplacian solvers. In \cref{sec:proofs,sec:madryweightproofs} we provide missing proofs. In \cref{sec:approximate,sec:numerical} we discuss issues arising from the approximate nature of the Laplacian and smoothed $\ell_2$-$\ell_p$ flow solvers, and numerical stability of IPMs.
\section{Preliminaries}
\label{sec:prelim}

\paragraph{General notation}
We let $\R_{\ge\alpha}^m$ denote the set of $m$-dimensional real vectors which are entrywise at least $\alpha$. For a vector $v \in \R^m$ and real number $p \ge 1$ we define $\|v\|_p$, the $\ell_p$-norm of $v$, as $\|v\|_p = \left(\sum_{i=1}^m |v_i|^p \right)^{1/p}.$ We define $\|v\|_\infty = \lim_{p \to \infty} \|v\|_p = \max_{i=1}^m |v_i|.$ Throughout the vectors $0$ and $1$ denote the vectors where all coordinates are $0$ and $1$ respectively. For a vector $v \in \R^m$ let $V = \diag(v)$ denote the $m \times m$ diagonal matrix with the entries of $v$ on the diagonal.

\paragraph{Graphs} Throughout this paper, in the graph problems we consider, we suppose that there are both upper and lower capacity bounds on all edges. We let $G$ be a graph with vertex set $V$, edge set $E$, and upper and lower capacities $\up_e \ge 0$ and $\um_e \ge 0$ respectively on edge $e$. We use $U$ to denote the maximum capacity of any edge, so that $\max\{\up_e, \um_e\} \le U$ for all edges $e$. We let $n$ denote the number of vertices $|V|$, and let $m$ denote the number of edges $|E|$. Further we view undirected graphs as directed graphs with $\upe = \ume$ by arbitrarily orienting the edges.

\paragraph{The Maximum Flow Problem} 
Given a graph $G = (V,E)$ we call any assignment of real values to the edges of $E$, i.e. $f \in \R^E$, a \emph{flow}. For a flow $f \in \R^E$, we view $f_e$ as the amount of flow on edge $e$ and if $f_e > 0$ we interpret this as sending $f_e$ units in the direction of the edge orientation, and if $f_e < 0$ we interpret this as sending $|f_e|$ units in the direction opposite the edge orientation. 

We say that  $f \in \R^E$ is a \emph{$d$-flow} if it routes demands $d \in \R^V$, meaning that the net flow into each vertex $v \in V$ is given by $d_v$, i.e.  $\sum_{e\in E^+(v)}f_e - \sum_{e\in E^-(v)}f_e = d_v$ where $E^+(v)$ and $E^-(v)$ denote the incoming and outgoing edges to $v$ respectively.
The \emph{incidence matrix} for a graph $G$ is an $m \times n$ matrix $B$, where the row corresponding to edge $e = (u,v)$ has a $1$ (respectively $-1$) in the column corresponding to $v$ (respectively $u$). Note that $f \in \R^E$ is a $d$-flow if and only if $B^Tf=d.$

A specific flow we consider is an $ab$-flow, where $a$ is the single source, and $b$ is the single sink. An $ab$-flow is a flow which routes demand vector $d = t\chi_{ab}$ for some real number $t \ge 0$ where $\chi_{ab} = 1_b - 1_a$ is a demand vector with a $1$ in position $a$ and $-1$ in position $b$. When $a$ and $b$ are implicit, we write $\chi = \chi_{ab}$. We say that a $d$-flow $f$ is \emph{feasible} in $G$ if
$-\ume \le f_e \le \upe \forall e \in E,$ so that $f$ satisfies the capacity constraints. We define the \emph{maximum flow problem} as the problem of given a graph $G$ with upper and lower capacities $\up$ and $\um$, and source and sink vertices $a$ and $b$, to compute a maximum feasible $ab$-flow. We denote the maximum value as $t^*$.

\paragraph{Electric Flow and Laplacian Systems} 
Our algorithms make heavy use of electric flows and Laplacian system solvers. Here, we review their definitions and important properties.

Let $G$ be a graph, and let $r \in \R_{>0}^E$ be a vector of edge \emph{resistances}, where the resistance of edge $e$ is denoted by $r_e.$ For a flow $f \in \R^E$ on $G$, we define the \emph{energy} of $f$ to be $\E_r(f) \defeq f^TRf = \sum_{e\in E} r_ef_e^2,$ where $R = \diag(r)$. For a demand $d$, we define the \emph{electrical $d$-flow} $\hf$ to be the $d$-flow which minimizes energy, i.e. $\hf = \argmin_{B^Tf=d} \E_r(f).$
As energy is strictly convex, this flow is unique.

The \emph{Laplacian} of the graph $G$ with resistances $r$ is defined as $L \defeq B^TR^{-1}B$. Then the electric $d$-flow $\hf$ is given by the formula $\hf = R^{-1}BL^\dagger d$, where $L^\dagger$ denotes the \emph{Moore-Penrose pseudoinverse} of $L$. We define \emph{potentials} $\phi \defeq L^\dagger d$, so that $\hf = R^{-1}B\phi$. There is a long line of work \cite{ST04, KMP10, KMP11, KOSZ13, CKMPPRX14, KLPSS16, KS16} building up from early results of Vaidya \cite{Vaidya99} and Spielman and Teng \cite{ST03} which give algorithms to solve such Laplacian systems to compute $\phi$ and $\hf$ in nearly linear time. We defer the precise theorem statements and other results to \cref{sec:lap}.
\section{Overview of Approach}
\label{sec:approach}

Our maximum flow algorithm builds on the framework of M\k{a}dry \cite{Madry16}. We consider the setup described in \cref{sec:prelim} and design algorithms that maintain a flow  $f \in \R^E_{> 0}$, a parameter $t > 0$, and weights $\wp, \wm \in \R_{\ge1}^m$ such that $B^T f = t \chi$ and $f$ is an approximate solution to the following optimization problem.
\begin{equation}
\label{eq:central_path_formula}
\min_{B^Tf=t\chi} V(f)
\enspace
\text{ where }
\enspace
V(f) \defeq -\sum_{e\in E} \left(\wp_e \log(\up_e-f_e) + \wm_e \log(\um_e+f_e)\right)
~.
\end{equation}
Here, $V(f)$ is known as the \emph{weighted logarithmic barrier function} and penalizes how close $f$ is to violating the capacity constraints and $t$ controls how much flow is sent from $a$ to $b$.

Broadly speaking, IPMs iterate towards optimal solutions by trading off the quality of the current iterate, e.g. $B^T f = t\chi$, and the proximity of the point to the constraints, e.g. $V(f)$. The previous maximum flow IPMs \cite{Madry13,LS14,Madry16} and the methods of this paper all follow this template. In particular, \cite{Madry16} and our method alternate between taking Newton steps to improve the optimality of $f$ in solving \cref{eq:central_path_formula} (known as \emph{centering}) and computing a new flow and new weights which approximately solve \cref{eq:central_path_formula} for larger values of $t$ (which we call a \emph{progress step}).

Centering and progress steps in both \cite{Madry16} and our method involve computing an electric flow for edge resistances induced by the Hessian of $V(f)$. Centering is achieved by using this flow to add a circulation to the current flow and progress steps are made by adding a multiple of the $\chi$-electric flow to the current flow. \cite{Madry16} showed that by a minor modification to the graph, called preconditioning (see \cref{sec:precondition}), it can be easily shown that even without weight changes alternating between such centering and progress steps suffices to obtain a $\otilde(m^{3/2} \log U)$ time algorithm. Obtaining such a $\otilde(m^{3/2} \log U)$ time algorithm for solving maximum flow by IPMs and Laplacian system solvers is easily achieved by classic interior point theory \cite{Renegar88,DS08} and state-of-the-art maximum flow IPMs \cite{LS14,Madry16} all improve by more carefully changing weights and designing progress steps.

In particular, \cite{Madry16} achieves its $\O(m^{10/7}U^{1/7})$ runtime by considering a more fine-grained analysis of how much progress a progress step can make. The paper considers the congestion vector, defined as the ratio of amount of $\chi$-electric flow induced by Hessian resistances to the residual capacity of each edge (see \cref{sec:congest}). In other words, the congestion vector bounds how quickly adding the electric flow to the current flow would saturate each edge. The work of \cite{Madry13,Madry16} both note that the $\ell_\infty$-norm of this vector bounds how much flow can be added without violating a constraint and the $\ell_4$-norm bounds how much flow can be added such that centering can still be performed in $\otilde(m)$ time. This is reminiscent of the classic predictor-corrector IPM analysis \cite{Mehrotra92,YTM94}.

Further, \cite{Madry13,Madry16} note that when the $\ell_\infty$ and $\ell_4$ norm of the congestion vector are large then increasing the resistances of the congested edges would dramatically increase the energy of the electric flow. This was a key insight of \cite{CKMST11}, albeit in a different setting. Consequently, \cite{Madry16} repeatedly increases the weights of congested edges (called \emph{boosting}) until the congestion vector is sufficiently small. By using electric energy as a global potential function and analyzing how it evolves over progress, centering, and boosting steps, the paper is able to control the total amount of weight change and total runtime of boosting steps necessary to reduce the $\ell_\infty$ and $\ell_4$ norms of the congestion vector. Carefully trading off these quantities yields the running time of \cite{Madry16}.

To improve on \cite{Madry16} we bypass the cost of computing such repeated weight increases and ultimately alleviate the need for such a global energy potential amortized over the life of the algorithm. We treat the problem of computing a small amount of resistance increase needed to increase energy (and thereby decrease congestion) as a self-contained optimization problem. We analyze the budget-constrained energy maximization problem and use this to show that a small amount of weight change suffices to obtain good bounds on the $\ell_\infty$ norm of congestion.

Further, we show that this budget-constrained maximization problem can be solved approximately in almost-linear time, yielding congestion-reducing weight changes, by leveraging recent smoothed $\ell_2$-$\ell_p$ flow solvers of \cite{KPSW19}. The connection between smoothed $\ell_2$-$\ell_p$ flows and energy maximization in the context of implementing a low-width multiplicative weights oracle as in \cite{CKMST11} had been noted by \cite{KPSW19} and we formally show that this machinery applies in our setting. This involves a variety of convex-optimization reduction techniques and analysis provided in \cref{sec:proofrhoinf}.

We leverage this congestion reduction procedure to design a new progress step. We compute congestion reducing weights and then look at the electric flow induced and show that after taking a progress step it is possible to undo some of the weight change we performed. This yields better bounds on the total weight change performed, but unfortunately this step still has too large a congestion for centering to be efficient. To overcome this, we show that further weight changes can be performed after the progress step so that centering can be performed efficiently.

Putting this all together, we show that for all $\eta$, provided that there is at least an additive $m^{1/2-\eta}$ units of flow still left to route, we can route a $m^{-1/2+\eta}$-multiplicative fraction of the remaining flow in almost linear time per step, while increasing the weights by at most $m^{5\eta}U$, where $U$ is the maximum capacity of any edge. Picking $\eta = \log_m(m^{1/8-o(1)}U^{-1/4})$ to trade-off these two cases and computing the remaining flow by augmenting paths \cite{LRS13} ultimately yields our final result.

We believe that our algorithm makes two broader contributions to the theory of IPMs and maxflow beyond improving its runtime. First, our algorithm uses both $\ell_2$ norm and smoothed $\ell_2$-$\ell_p$ norm flow minimization procedures (Laplacian system solvers and \cite{KPSW19}). We hope that this use of undirected graph tools beyond Laplacian system solving to solve directed maximum flow will open the door to further advances by combining these two lines of research. Second, our algorithm is the first that breaks the $m^{3/2}$ barrier for maximum flow without using energy as a potential function over the life of an IPM. Interestingly, we show that even simpler variants of our method could achieve this goal (see \cref{sec:discussion}) and we hope this insight will also aid further research.

\section{Interior Point Method Setup}
\label{sec:setup}
It is known how to reduce directed maxflow to undirected maxflow with linear overhead, and we give a proof in \cref{proofs:undirreduc}. Throughout this section, we assume our graph is undirected, so that $\upe = \ume$. All other proofs in this section are deferred to \cref{sec:proofs}.
\subsection{The Central Path}
\label{sec:centralpath}
Here we give our interior point framework, which is broadly inspired by \cite{Madry16}, and introduce important notation.\footnote{Beyond the major algorithmic differences discussed in \cref{sec:approach} we depart from the IPM framework in \cite{Madry16} in several subtle ways to simplify our presentation and analysis (rather than for algorithmic necessity). We use weights instead of duplicating edges. We measure coupling in the inverse norm of the Hessian instead of an approximation. We use a two-sided congestion vector to make the weighted $2$-norm of the congestion vector exactly equal to the energy of the induced electric $\chi$-flow; in \cite{Madry16} these quantities were only a $2$-approximation of each other.}
As discussed in \cref{sec:approach}, we  work with the setup of the maximum flow problem defined in \cref{sec:prelim} and for a parameter $t$ and weight vectors $\wp, \wm \in \R_{\ge1}^m$, consider the optimization problem \eqref{eq:central_path_formula}. Assuming that there is a feasible $t\chi$-flow, optimality conditions give that the gradient of $V$ at the optimum $f^*_{t,w}$ of \eqref{eq:central_path_formula} is perpendicular to the kernel of $B^T$, i.e. there is a  dual vector $y \in \R^V$ such that
$B y = \g V(f^*_{t,w})$. Consequently, for parameter $t$ and weight vectors $\wp, \wm$ we say that a dual vector $y \in \R^V$ and flow $f$ are on the \emph{weighted central path} if and only if
\begin{align}
B^Tf
= t\chi 
\enspace
\text{ and }
\enspace
[By]_e
=  [\g V(f)]_e = \frac{\wp_e}{\up_e-f_e} - \frac{\wm_e}{\um_e+f_e}
\text{ for all } e\in E
 \label{eq:centralpath}
\end{align}
For simplicity, we write $w = (w^+, w^-) \in \R^{2E}_{\ge1}$, where we define $\R^{2E} \defeq \{(x,y) : x,y \in \R^E \}.$

For a flow $f$ and weights $w$ define the \emph{resistances} induced by $f$ and $w$ as
\begin{equation}r_e \defeq \frac{\wp_e}{(\up_e-f_e)^2} + \frac{\wm_e}{(\um_e+f_e)^2} \label{eq:defr}\end{equation}
and $R = \diag(r).$ $R$ is the Hessian of $V$ at $f$, i.e. $R = \g^2 V(f).$ Here $R$ implicitly depends on $f$.
We define the \emph{slacks} $s_e$ as $s_e = \min\{\upe-f_e,\ume+f_e\}.$ Note that $1/s_e^2 \le r_e.$

Our algorithm maintains points $(f,y,w)$ that are approximately on the central path. We quantify this through the \emph{gaps} $g \in \R^E$ and \emph{coupling} as follows.
\begin{equation} g_e \defeq (y_v-y_u) - \left(\frac{\wpe}{\upe-f_e} - \frac{\wme}{\ume+f_e} \right)
\text{ for all edges } e = (u,v) \in E ~. \label{eq:gg}
 \end{equation}
\begin{definition}[Coupling]
\label{def:coupling}
For a point $(f,y,w)$ and parameter $t$, let $g \in \R^E$ be defined as in \cref{eq:gg}. We say that edge $e$ is \emph{$\zeta_e$-coupled} if $|g_e| \le \zeta_e.$ We say that the point \emph{$(f,y,w)$ is $\gamma$-coupled for parameter $t$} if $B^Tf=t\chi$ and $\|g\|_{R^{-1}} \le \gamma$.
\end{definition}
When $t$ is clear from context, we simply say that a point $(f,y,w)$ is $\gamma$-coupled. We let $F_t \defeq t^*-t$ denote the difference between the value of the maxflow and the value of the current flow.

Our algorithm initializes a $0$-coupled solution for $t = 0, \wp = \wm = 1$. This is trivial to compute as for undirected graphs, i.e. where $\up = \um$, it is the case that $f = y = 0$ is $0$-coupled for $t = 0$. After initialization, our algorithm then follows the following general template.
\begin{enumerate}
\item \textbf{Progress step.} We start with a point $(f,y,w)$ that is $0$-coupled. The parameter $t$ is increased (say to $t^\new$), the weight vector is changed to $w^\new$, and we compute a new point $(f^\new,y^\new,w^\new)$ that is $\frac{1}{100}$-coupled with parameter $t^\new$. In increasing $t$ to $t^\new$, we will have multiplicatively decreased the value of $F_t.$
\item \textbf{Centering step.} We take the $\frac{1}{100}$-coupled point above and compute a new $0$-coupled point with the same parameter $t$ and weights $w$.
\end{enumerate}

\subsection{Congestion Vector and Norms}
\label{sec:congest}
In this section, we define the congestion vector and its norm, which appear in the analysis of the progress step. Consider a graph $H$ with the same edge set as $G$, but with edge resistances $r_e$ defined in \cref{eq:defr}. Let $\hf$ be the electric $\chi$-flow in $H$. We call $\hf$ the electric $\chi$-flow induced by the point $(f,y,w)$. Define the forward and backwards \emph{congestion vectors} $\rp, \rm \in \R^E$ as 
\[
\rp_e = \frac{|\hf_e|}{\up_e-f_e}
\text{ and }
\rm_e = \frac{|\hf_e|}{\um_e+f_e} \text{ for all } e\in E.
\] 
Define $\rho \in \R^{2E}$ as $\rho = (\rp, \rm)$ and  for $k \in [1,\infty]$, define the weighted $\ell_k$-norm of $\rho$ as
\[ \|\rho\|_{w,k} \defeq \left(\sum_{e \in E} \wp_e |\rp_e|^k + \wm_e |\rm_e|^k\right)^{1/k} \text{   and   } \|\rho\|_{w,\infty} \defeq \lim_{k\to\infty} \|\rho\|_{w,k} = \max_e\max\{\rpe,\rme\}. \]
We can relate $\|\rho\|_{w,2}^2$ to the electric energy of the flow.
\begin{lemma}
\label{lemma:rhoenergy}
We have that $\wp_e |\rp_e|^2 + \wm_e |\rm_e|^2 = r_e\hf_e^2$ for all edges $e$ and $\|\rho\|_{w,2}^2 = \E_r(\hf)$.
\end{lemma}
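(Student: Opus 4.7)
The plan is to prove both claims by direct substitution from the definitions; there is no real obstacle here since this is a bookkeeping identity rather than a substantive lemma.

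First I would handle the per-edge identity. By the definitions of $\rp_e$ and $\rm_e$, we have
\[
\wp_e |\rp_e|^2 + \wm_e |\rm_e|^2
= \wp_e \cdot \frac{\hf_e^2}{(\up_e-f_e)^2} + \wm_e \cdot \frac{\hf_e^2}{(\um_e+f_e)^2}
= \hf_e^2 \left(\frac{\wp_e}{(\up_e-f_e)^2} + \frac{\wm_e}{(\um_e+f_e)^2}\right).
\]
The parenthesized expression is exactly $r_e$ by the definition in \cref{eq:defr}, so the right-hand side equals $r_e \hf_e^2$, as desired.

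Then I would sum this identity over all edges $e \in E$. By the definition of $\|\rho\|_{w,2}$ given in the statement,
\[
\|\rho\|_{w,2}^2 = \sum_{e \in E} \left(\wp_e |\rp_e|^2 + \wm_e |\rm_e|^2\right) = \sum_{e \in E} r_e \hf_e^2 = \E_r(\hf),
\]
where the last equality uses the definition $\E_r(\hf) = \sum_{e \in E} r_e \hf_e^2$ from the electric flow preliminaries. This completes the proof. The only thing to be mildly careful about is that the weighted $2$-norm of $\rho$ is defined to include the weights $\wp_e, \wm_e$ separately on each side, which is precisely what makes it match the energy on the nose (and this is the design choice flagged in the footnote of \cref{sec:centralpath}, where the authors noted they chose a two-sided congestion vector so that the identity $\|\rho\|_{w,2}^2 = \E_r(\hf)$ holds exactly rather than up to a factor of $2$).
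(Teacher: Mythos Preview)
Your proof is correct and matches the paper's own proof essentially line for line: substitute the definitions of $\rp_e,\rm_e$ to get the per-edge identity via the definition of $r_e$, then sum over edges to obtain $\|\rho\|_{w,2}^2 = \E_r(\hf)$.
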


\subsection{Following the Central Path}
\label{sec:follow}
In this section, we explain the effect of progress and centering steps on how coupled our point $(f,y,w)$ is. We start by analyzing a progress step.
\begin{lemma}
\label{lemma:progress}
Let $(f,y,w)$ be a $0$-coupled point for parameter $t$. For any $\delta \le \frac{1}{10\|\rho\|_{w,\infty}}$ there is an algorithm that in $\O(m)$ time computes a point $(f^\new,y^\new,w)$ such that $f^\new$ is a $(t+\delta)\chi$-flow and
\begin{enumerate}
\item $f^\new = f + \d\hf$ where $\hf$ is the electric $\chi$-flow with resistances $r$ as defined in \cref{eq:defr}.
\item Edge $e$ is $5\d^2\left(\frac{\wpe|\rpe|^2}{\upe-f_e} + \frac{\wme|\rme|^2}{\ume+f_e}\right)$-coupled for all edges $e$.
\item The point $(f^\new,y^\new,w)$ is $10\d^2\normrho_{w,4}^2$ coupled.
\end{enumerate}
\end{lemma}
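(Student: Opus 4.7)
The plan is to set $f^\new = f + \delta \hf$ and $y^\new = y + \delta \phi$, where $\hf = R^{-1} B \phi$ is the electric $\chi$-flow and $\phi = L^\dagger \chi$ are its potentials; both are computable in $\O(m)$ time by invoking a Laplacian solver (\cref{sec:lap}), giving the claimed runtime. Item (1) is then true by construction, and $B^T f^\new = t\chi + \delta B^T R^{-1} B \phi = (t+\delta)\chi$ shows $f^\new$ is a $(t+\delta)\chi$-flow. The key preliminary observation is that the assumption $\delta \le 1/(10\|\rho\|_{w,\infty})$ forces $\delta \rpe \le 1/10$ and $\delta \rme \le 1/10$, so both per-edge slacks $\upe - (f_e + \delta' \hf_e)$ and $\ume + (f_e + \delta' \hf_e)$ for $\delta' \in [0,\delta]$ stay within $[9/10,\,11/10]$ of their initial values; in particular $r^\new_e \in [(1.1)^{-2},\,(0.9)^{-2}] \cdot r_e$ termwise.

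For item (2), I will Taylor-expand the per-edge barrier. Writing $\phi_e(x) \defeq -\wpe \log(\upe - x) - \wme \log(\ume + x)$, the $0$-coupling of $(f,y,w)$ gives $(By)_e = \phi_e'(f_e)$, while $(B\phi)_e = r_e \hf_e$ by definition of $\phi$ and $\hf$. Hence
\[
g^\new_e \;=\; \phi_e'(f_e) + \delta r_e \hf_e - \phi_e'(f_e + \delta \hf_e) \;=\; -\tfrac{1}{2}(\delta \hf_e)^2\, \phi_e'''(\xi_e)
\]
for some $\xi_e$ between $f_e$ and $f_e + \delta \hf_e$. Bounding $|\phi_e'''(\xi_e)| \le 2\wpe/(\upe - \xi_e)^3 + 2\wme/(\ume + \xi_e)^3$ and applying the slack-preservation observation (plus the identities $\hf_e^2/(\upe - f_e)^2 = |\rpe|^2$ and $\hf_e^2/(\ume + f_e)^2 = |\rme|^2$) yields $|g^\new_e| \le (0.9)^{-3}\,\delta^2 \bigl(\wpe |\rpe|^2/(\upe - f_e) + \wme|\rme|^2/(\ume + f_e)\bigr)$, which is well within the required $5\delta^2$ factor.

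Item (3) then follows by aggregating this per-edge bound in the $(R^\new)^{-1}$-norm. Squaring, applying $(a+b)^2 \le 2(a^2+b^2)$, and using the elementary inequalities $1/r_e \le \min\bigl((\upe - f_e)^2/\wpe,\; (\ume + f_e)^2/\wme\bigr)$ (each immediate from $r_e$ being a sum of two positive terms), I get $(g^\new_e)^2/r_e \le O(\delta^4)\bigl(\wpe |\rpe|^4 + \wme|\rme|^4\bigr)$, and the resistance-preservation bound $r^\new_e \ge r_e/(1.1)^2$ costs only another constant factor. Summing over edges gives $\|g^\new\|_{(R^\new)^{-1}}^2 \le C\delta^4 \|\rho\|_{w,4}^4$ for some explicit absolute constant, and a careful accounting shows $C \le 100$ so the claimed $10\delta^2\|\rho\|_{w,4}^2$ bound holds. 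Conceptually nothing is difficult here; the only real care needed is in tracking numerical constants through the Taylor remainder and the $R^{-1}$-norm conversion so that the concrete targets $5$ and $10$ in items (2) and (3) are both met.
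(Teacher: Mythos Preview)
Your proposal is correct and follows essentially the same approach as the paper. The paper likewise sets $f^\new = f+\delta\hf$, $y^\new = y+\delta\phi$, bounds the per-edge gap via a second-order Taylor estimate (packaged there as a separate lemma with constant~$5$ rather than your explicit remainder calculation yielding $(0.9)^{-3}$), and then aggregates to the $R^{-1}$-norm; the only cosmetic difference is that the paper uses Cauchy--Schwarz in the form $(a_1b_1+a_2b_2)^2 \le (a_1^2+a_2^2)(b_1^2+b_2^2)$ to pass directly to $\wpe|\rpe|^4+\wme|\rme|^4$, whereas you use the cruder $(a+b)^2\le 2(a^2+b^2)$ together with $1/r_e \le (\upe-f_e)^2/\wpe$ and the analogous bound---both routes land comfortably inside the stated constants.
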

Note that the point $(f^\new,y^\new,w)$ computed in \cref{lemma:progress} is still $\frac{1}{100}$-coupled for the choice $\d = \frac{1}{100\normrho_{w,4}}$. A major component of our new algorithm for maxflow is a new method for performing additional weight changes after taking a step using \cref{lemma:progress}. We introduce this in \cref{sec:weight}.

There is a $\O(m)$ time algorithm that takes a $\gamma$-coupled point $(f,y,w)$ for parameter $t$ and returns a $10\gamma^2$-coupled point $(f^\new,y^\new,w)$ for parameter $t$. This allows us to recenter our point $(f,y,w)$.
\begin{lemma}[Algorithm \Center]
\label{lemma:centering}
There is a $\O(m)$ time algorithm, $\Center(f,y,t,w)$, that given point $(f,y,w)$ that is $\gamma$-coupled for parameter $t$ and $\gamma \le \frac{1}{100}$ returns a point $(f^\new,y^\new,w)$ that is $10\gamma^2$-coupled for parameter $t$.
\end{lemma}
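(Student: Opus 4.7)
The plan is to implement \Center\ via a single Newton step toward the central path conditions \eqref{eq:centralpath}, computed by one Laplacian solve. Specifically, I would seek a correction $\Delta f$ to $f$ (with $B^T \Delta f = 0$ so feasibility is preserved) and a new dual $y^\new$ such that the linearized centrality condition holds:
\[
B y^\new \; = \; \g V(f) + R\,\Delta f \; = \; (By - g) + R\,\Delta f,
\]
where $R = \g^2 V(f) = \diag(r)$ and $g$ is the current gap from \eqref{eq:gg}. Eliminating $\Delta f = R^{-1}B(y^\new-y) + R^{-1}g$ and imposing $B^T\Delta f = 0$ yields the Laplacian system
\[
B^T R^{-1} B\,(y^\new - y) \; = \; -\,B^T R^{-1} g,
\]
which by the nearly-linear Laplacian solvers cited in \cref{sec:prelim} can be solved to high accuracy in $\O(m)$ time. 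The output is $(f^\new,y^\new,w) := (f+\Delta f,\,y^\new,\,w)$, with approximate-solver error handled as in \cref{sec:approximate}.

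Having constructed the step, feasibility $B^T f^\new = t\chi$ is immediate from $B^T\Delta f = 0$. The central estimate I would then prove is a bound on the step size in the Hessian norm. Using $B^T\Delta f = 0$,
\[
\|\Delta f\|_R^2 \; = \; \Delta f^T R\,\Delta f \; = \; \Delta f^T\bigl(B(y^\new-y) + g\bigr) \; = \; \Delta f^T g \; \le \; \|\Delta f\|_R \,\|g\|_{R^{-1}} \; \le \; \gamma\,\|\Delta f\|_R,
\]
so $\|\Delta f\|_R \le \gamma \le \tfrac{1}{100}$. In particular, using $r_e \ge 1/s_e^2$ from \cref{sec:centralpath}, this gives the edgewise bound $|\Delta f_e| \le \gamma\,s_e$, so the slacks $\up_e - f_e$ and $\um_e + f_e$ change by at most a constant factor and the new Hessian $R^\new$ is within constant factors of $R$ entrywise; in particular $f^\new$ remains strictly feasible.

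The remaining step is to bound the new gap $g^\new = By^\new - \g V(f^\new)$ in $(R^\new)^{-1}$-norm, which by construction equals the Taylor remainder
\[
g^\new_e \; = \; \g V(f)_e + r_e\,\Delta f_e \; - \; \g V(f^\new)_e \; = \; -\tfrac{1}{2} V'''(\xi_e)\,(\Delta f_e)^2
\]
for some $\xi_e$ on the segment from $f_e$ to $f^\new_e$. The third derivative of each edge term of $V$ has the form $2\wp_e/(\up_e-f_e)^3 + 2\wm_e/(\um_e+f_e)^3$, which is bounded by $2 r_e/s_e \le 2 r_e^{3/2}$; combined with $R^\new \asymp R$, this yields $(g^\new_e)^2/r^\new_e \lesssim (r_e (\Delta f_e)^2)^2$. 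Summing,
\[
\|g^\new\|_{(R^\new)^{-1}}^2 \; \lesssim \; \sum_e \bigl(r_e (\Delta f_e)^2\bigr)^2 \; \le \; \bigl(\max_e r_e (\Delta f_e)^2\bigr)\|\Delta f\|_R^2 \; \le \; \|\Delta f\|_R^4 \; \le \; \gamma^4,
\]
so $\|g^\new\|_{(R^\new)^{-1}} \le 10\gamma^2$ for the correct absolute constant.

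The main obstacle, and the place where the $\gamma \le \tfrac{1}{100}$ hypothesis gets used sharply, is tracking the constants in the last paragraph: one must show that the $R^\new \asymp R$ distortion, the $|V'''(\xi_e)|$ bound relative to $r_e$, and the Taylor remainder all combine to give the stated constant $10$ rather than merely $O(1)$. A second technical nuisance, deferred to \cref{sec:approximate}, is that the Laplacian solver returns only an approximate solution, so one must verify that solving to polylogarithmic inverse accuracy perturbs each of the bounds in the argument by at most a negligible additive amount and still yields a $10\gamma^2$-coupled point within the claimed $\O(m)$ runtime.
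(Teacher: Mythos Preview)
Your proposal is correct and essentially identical to the paper's own proof: both compute the Newton correction $\Delta f = R^{-1}B(y^\new-y)+R^{-1}g$ via one Laplacian solve, bound $\|\Delta f\|_R \le \gamma$ (the paper writes this as $\|P(R^{-1/2}g)\|_2 \le \|R^{-1/2}g\|_2$ using the projection form $R^{1/2}\Delta f = P(R^{-1/2}g)$), bound the new gap by a second-order Taylor remainder, and use $R^\new \approx_2 R$ to absorb a factor of~$2$ into the final constant~$10$. The only cosmetic differences are that the paper invokes an explicit Taylor lemma (\cref{lemma:taylor}) rather than the Lagrange remainder, and collapses your $\sum_e (r_e(\Delta f_e)^2)^2 \le \|\Delta f\|_R^4$ step into the single line $\|R^{1/2}f'\|_4^2 \le \|R^{1/2}f'\|_2^2$.
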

Applying \cref{lemma:centering} $\O(1)$ times allows us to get to a $\frac{1}{\poly(m)}$-centered point $(f,y,w)$. Consequently, in much of our analysis we assume that from a $1/100$-coupled point we can obtain a $0$ coupled point in $\otilde(m)$ time. See \cref{sec:numerical} for a discussion of how to correct the analysis without this assumption.

\subsection{Preconditioning}
\label{sec:precondition}
Combining \cref{lemma:progress} and \cref{lemma:centering}, we see that we can increment our total flow by $\Omega(\normrho_{w,4}^{-1})$ in $\O(m)$ time. Therefore, it is fruitful to understand $\|\rho\|_{w,k}$ for various $k$. As was done in \cite{Madry16}, we modify or \emph{precondition} our graph to obtain such bounds on the congestion vector. Here we describe the preconditioning procedure and state the bounds. 

To precondition our undirected graph $G$, we add $m$ undirected edges of capacity $2U$ between source $a$ and sink $b$. This increases the maximum flow value by $2mU.$ Throughout the remainder of the paper, we say that the graph $G$ is preconditioned if it is undirected and we have added these edges.  Now, we show the following lemma. Here, we have defined $w \in \R^{2E}$ as $w = (\wp, \wm)$.
\begin{lemma}
\label{lemma:rho2control}
\label{cor:rho2morecontrol}
\label{cor:rho2control}
Let $(f,y,w)$ be $\frac{1}{100}$-coupled for parameter $t$ in preconditioned graph $G$. Then we have that $\|\rho\|_{w,2}^2 \le 50\|w\|_1^3 m^{-2} F_t^{-2}$. If $\|w\|_1 \le 3m$ then we have that $\|\rho\|_{w,2}^2 \le 1500m F_t^{-2}.$ Additionally, if $F_t \ge m^{1/2-\eta}$ then we have that $\|\rho\|_{w,2} \le 40m^\eta$.
\end{lemma}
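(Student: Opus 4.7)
The plan is to first prove the main bound $\|\rho\|_{w,2}^2 \le 50 \|w\|_1^3 m^{-2} F_t^{-2}$, then deduce the remaining claims by elementary substitution. By \cref{lemma:rhoenergy}, $\|\rho\|_{w,2}^2$ equals $\E_r(\hat f)$, the energy of the electric $\chi$-flow $\hat f$. Since $\hat f$ minimizes energy over unit $\chi$-flows, any other unit $\chi$-flow $g$ yields an upper bound $\E_r(\hat f) \le \E_r(g)$. To exploit the preconditioning, I would restrict to unit $\chi$-flows supported on the $m$ added parallel edges between $a$ and $b$ (call this set $P$). The optimal such flow puts mass proportional to $1/r_e$ on edge $e \in P$ and has energy equal to the parallel resistance of these edges; thus
\[\E_r(\hat f) \;\le\; \Bigl(\sum_{e \in P} \tfrac{1}{r_e}\Bigr)^{-1},\]
and it suffices to show $\sum_{e \in P} 1/r_e \ge m^2 F_t^2 / (50 \|w\|_1^3)$.

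Writing $a_e = \upe - f_e$ and $b_e = \ume + f_e$, the elementary bound $\wpe b_e^2 + \wme a_e^2 \le (\wpe + \wme)\max(a_e, b_e)^2$ yields $1/r_e = a_e^2 b_e^2/(\wpe b_e^2 + \wme a_e^2) \ge \min(a_e, b_e)^2/(\wpe + \wme)$. Applying Cauchy--Schwarz then gives
\[\sum_{e \in P} \tfrac{1}{r_e} \;\ge\; \frac{\bigl(\sum_{e \in P} \min(a_e, b_e)\bigr)^2}{\sum_{e \in P}(\wpe + \wme)} \;\ge\; \frac{\bigl(\sum_{e \in P} \min(a_e, b_e)\bigr)^2}{\|w\|_1}.\]
So the remaining task is to lower bound $\sum_{e \in P} \min(a_e, b_e)$. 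I would combine two facts: (i) since the current flow has value $t$ and the preconditioned graph has maximum flow $t^* = t + F_t$, the residual graph supports $F_t$ units of flow from $a$ to $b$, and every edge of $P$ crosses the cut $(\{a\}, V \setminus \{a\})$; and (ii) the near-central coupling identity $\wpe/a_e - \wme/b_e \approx y_b - y_a$ on each preconditioning edge forces the slack pair $(a_e, b_e)$ to be compatible with the weight ratio $\wpe/\wme$, ruling out pathological asymmetries. Together these should yield $\sum_{e \in P} \min(a_e, b_e) = \Omega(F_t)$, which combined with $\|w\|_1 \ge 2m$ (weights are at least $1$) closes the main bound. This coupling-based slack inequality is the main technical obstacle.

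The second claim then follows by substituting $\|w\|_1 \le 3m$ into the main bound: $50 \|w\|_1^3/(m^2 F_t^2) \le 50 \cdot 27 m/F_t^2 = 1350 m/F_t^2 \le 1500 m/F_t^2$. The third follows by taking square roots and using $F_t \ge m^{1/2-\eta}$: $\|\rho\|_{w,2} \le \sqrt{1500 m}/F_t \le \sqrt{1500}\,m^{\eta} < 40 m^\eta$, since $\sqrt{m}/F_t \le m^\eta$.
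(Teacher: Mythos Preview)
Your high-level structure matches the paper: use $\|\rho\|_{w,2}^2 = \E_r(\hat f)$ and upper-bound the energy by a test flow supported on the preconditioning edges. The paper in fact uses the equal-split flow ($1/m$ on each preconditioning edge) after first proving that \emph{every} preconditioning edge has slack $\min(a_e,b_e)\ge F_t/(7\|w\|_1)$, whereas you aim for a Cauchy--Schwarz aggregate $\sum_{e\in P}\min(a_e,b_e)=\Omega(F_t)$. Either would close the argument. The issue is that your justification for that aggregate bound does not work.

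Your item (i) (a cut argument) does not give what you need. The cut $(\{a\},V\setminus\{a\})$ is crossed by the preconditioning edges \emph{and} by the original edges incident to $a$, so the residual capacity $F_t$ can be absorbed entirely by non-preconditioning edges; nothing forces $\sum_{e\in P} a_e$ (let alone $\sum_{e\in P}\min(a_e,b_e)$) to be $\Omega(F_t)$. Your item (ii) points in the right direction but omits both of the two ingredients the paper actually needs. First, one must bound the potential drop $|\chi^T y|=|y_b-y_a|\le 4\|w\|_1/F_t$; the upper bound comes from pairing $y$ against a feasible residual $F_t\chi$-flow and using coupling, and the lower bound requires the preconditioning edges together with the key structural fact that, by symmetry, all preconditioning edges carry the \emph{same} flow and the \emph{same} weights throughout the algorithm (so $f_e\ge -U$ and $\wpe+\wme\le \|w\|_1/m$ on each). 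Second, one then feeds this bound on $y_b-y_a$ back into the coupling inequality on a single preconditioning edge to force $\min(a_e,b_e)\ge F_t/(7\|w\|_1)$. Without the symmetry observation you cannot control individual weights $\wpe,\wme$ on preconditioning edges, and without the $|\chi^T y|$ bound the coupling identity alone does not pin down $(a_e,b_e)$. These two pieces are the actual content of the lemma; once you have them, your Cauchy--Schwarz route and the paper's per-edge route are equivalent.
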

At the start of the algorithm, as we initialized $\wp = \wm = 1$, we have $\|w\|_1 = 2m.$
To apply \cref{lemma:rho2control} we need to maintain the invariant that $\|w\|_1 \le 3m$ throughout.

Combining these pieces immediately gives us an algorithm with runtime $\O(m^{3/2}\log U).$ Precisely, note that $\|\rho\|_{w,4}^{-1} \ge \|\rho\|_{w,2}^{-1} \ge \Omega(F_t/m^{1/2})$ by \cref{cor:rho2control}. Therefore alternately using \cref{lemma:progress} with $\delta = \frac{1}{100\|\rho\|_{w,4}}$ and \cref{lemma:centering}, we route $m^{-1/2}$ fraction of the remaining flow in each iteration in $\O(m)$ time. Therefore, after $\O(m^{1/2}\log U)$ iterations, we will have computed a flow within one unit of optimal---at this point we can round the flow to an integral flow using known techniques (see \cite{LRS13,Madry13}) and do augmenting paths computations. Note that this algorithm does not perform any weight changes, so that $w^+ = w^- = 1$ throughout. 
\section{Maximum Flow in Time $m^{11/8+o(1)}U^{1/4}$}
\label{sec:weight}
In this section we introduce our new weight change algorithms and main progress step. In \cref{sec:madry16weight} we explain how to perform weight change while keeping our point $(f,y,w)$ coupled. In \cref{sec:newweight} we describe our main lemma for efficiently controlling congestion via weight change. In \cref{sec:finalweight} we introduce new weight change algorithms to control the coupling of our new point $(f^\new,y^\new,w^\new)$ beyond the guarantees provided by the control on congestion. In \cref{sec:algo} we give our full algorithm for maxflow and analysis.

Throughout the section, we assume that our graph $G$ is undirected and preconditioned. We run our algorithm until $F_t = m^{1/2-\eta}$, at which point we can use augmenting paths. Therefore, we assume $F_t \ge m^{1/2-\eta}$ throughout this section. In our analysis, we assume that $\eta \le \frac12$ and $U \le \sqrt{m}$, because if $U \ge \sqrt{m}$ then our desired runtime of $m^{11/8+o(1)}U^{1/4} = m^{3/2+o(1)}$ is already known through the work of Goldberg-Rao \cite{GR98}.

\subsection{Weight Change for Maintaining Coupling}
\label{sec:madry16weight}
We defer all proofs in this section to \cref{sec:madryweightproofs}.
During our algorithms, we need to perform weight increases while maintaining the coupling of our point $(f,y,w)$ in order to increase the resistances of edges with high congestion $\rho_e$. Algorithm \ComputeWeights~computes a weight increase inducing a desired resistance increase.
\begin{lemma}[Algorithm $\ComputeWeights$]
\label{lemma:weightvsres}
Consider a point $(f,y,w)$, and let $\zeta \in \R_{\ge0}^E$ be a vector such that edge $e$ is $\zeta_e$-coupled for all edges $e \in E$. Let $r$ be the resistances as defined in \cref{eq:defr}. Let $r' \in \R_{\ge0}^E$ be a vector of desired resistance changes. We can in $\O(m)$ time compute a vector $w'=((\wp)',(\wm)') \in \R_{\ge0}^{2E}$ of weight increases satisfying
\begin{enumerate}
\item $\frac{(\wpe)'}{\upe-f_e} = \frac{(\wme)'}{\ume+f_e}$, and edge $e$ is $\zeta_e$-coupled for the point $(f,y,w+w').$
\item $(\wpe)'+(\wme)' \le 4Ur_e's_e.$
\item The resistance on edge $e$ induced by weights $w+w'$ equals $r_e+r_e'.$
\end{enumerate}
\end{lemma}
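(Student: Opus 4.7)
The plan is to define the weight increases explicitly per edge so that conditions 1 and 3 are automatically enforced, and then verify that condition 2 follows from elementary bounds. Since the construction is local (one constant-time computation per edge), the overall running time is $O(m)$.

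First, I would use condition 1 to parameterize: set $\alpha_e \defeq \frac{(\wpe)'}{\upe-f_e} = \frac{(\wme)'}{\ume+f_e}$, so that $(\wpe)' = \alpha_e(\upe-f_e)$ and $(\wme)' = \alpha_e(\ume+f_e)$. With this parameterization, the change in the gap $g_e$ induced by replacing $w$ with $w+w'$ is $-\alpha_e + \alpha_e = 0$, so the old $\zeta_e$-coupling of $e$ is preserved at $(f, y, w+w')$. Thus condition 1 is satisfied for any choice of $\alpha_e \ge 0$.

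Next, I would use condition 3 to pin down $\alpha_e$. The new resistance on edge $e$ is
\[
\frac{\wpe+(\wpe)'}{(\upe-f_e)^2}+\frac{\wme+(\wme)'}{(\ume+f_e)^2} = r_e + \alpha_e\left(\frac{1}{\upe-f_e}+\frac{1}{\ume+f_e}\right) = r_e + \alpha_e\cdot\frac{\upe+\ume}{(\upe-f_e)(\ume+f_e)}.
\]
Setting the bracketed increment equal to $r_e'$ uniquely determines $\alpha_e = r_e'\cdot\frac{(\upe-f_e)(\ume+f_e)}{\upe+\ume} \ge 0$, and therefore both $(\wpe)'$ and $(\wme)'$. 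Each edge requires only $O(1)$ arithmetic, giving $O(m)$ total time.

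The remaining task is to verify condition 2. I compute
\[
(\wpe)'+(\wme)' = \alpha_e(\upe+\ume) = r_e'(\upe-f_e)(\ume+f_e).
\]
Since $s_e=\min\{\upe-f_e,\ume+f_e\}$ and $\max\{\upe-f_e,\ume+f_e\} \le \upe+\ume \le 2U$, we get $(\upe-f_e)(\ume+f_e) \le 2Us_e$, hence $(\wpe)'+(\wme)' \le 2Ur_e's_e \le 4Ur_e's_e$. I do not anticipate any significant obstacle: degenerate cases (very small slacks, or $r_e'=0$) are handled trivially by the closed form for $\alpha_e$, and no iterative solver is required.
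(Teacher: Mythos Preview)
Your proof is correct and follows the same overall strategy as the paper: define per-edge weight increases satisfying the ratio condition $\frac{(\wpe)'}{\upe-f_e}=\frac{(\wme)'}{\ume+f_e}$ so that coupling is preserved, then bound the total weight increase via the product $(\upe-f_e)(\ume+f_e)\le 2Us_e$.

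The one notable difference is in how condition 3 is handled. The paper assumes WLOG $\upe-f_e\le\ume+f_e$, sets $(\wpe)'=r_e'(\upe-f_e)^2$, and derives $(\wme)'$ from the ratio condition; this actually overshoots the target resistance (the new resistance is $r_e+r_e'+r_e'\frac{\upe-f_e}{\ume+f_e}>r_e+r_e'$), and the paper remarks that achieving resistance \emph{at least} $r_e+r_e'$ suffices for the downstream applications. Your parameterization via $\alpha_e$ solves a linear equation to hit $r_e+r_e'$ exactly, which matches the lemma statement as written and is arguably cleaner. The weight bound you obtain, $(\wpe)'+(\wme)'=r_e'(\upe-f_e)(\ume+f_e)\le 2Ur_e's_e$, is in fact tighter than the paper's $4Ur_e's_e$ (the paper uses $(\wpe)'+(\wme)'\le 2(\wme)'$ after the WLOG assumption, losing a factor of 2).
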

In order to control the weight change, we can bound the slacks $s_e$ for edges of high congestion. This was shown in \cite{Madry16}.
\begin{lemma}
\label{lemma:resbound}
We have that $s_e^{-1} \le \frac{\|\rho\|_{w,2}^2}{\max\{\rpe,\rme\}}.$
\end{lemma}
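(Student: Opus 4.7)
My plan is a direct per-edge computation combining \cref{lemma:rhoenergy} with the bound $r_e \ge 1/s_e^2$ already noted in \cref{sec:centralpath}. First I would rewrite the right-hand side: since $s_e = \min\{\upe-f_e,\ume+f_e\}$, whichever of the two denominators attains the minimum yields the larger of $\rpe, \rme$, and hence $\max\{\rpe,\rme\} = |\hf_e|/s_e$. The target inequality therefore reduces to showing $|\hf_e|/s_e^2 \le \|\rho\|_{w,2}^2$.

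Next, I would lower-bound $\|\rho\|_{w,2}^2$ by its contribution from edge $e$ alone. By \cref{lemma:rhoenergy}, the summand at edge $e$ in the sum defining $\|\rho\|_{w,2}^2$ equals $\wpe|\rpe|^2 + \wme|\rme|^2 = r_e\hf_e^2$, and since every other term in that sum is non-negative, $\|\rho\|_{w,2}^2 \ge r_e\hf_e^2$. Combined with $r_e \ge 1/s_e^2$, which is immediate from \cref{eq:defr} together with $\wpe,\wme \ge 1$ (one of the two reciprocal-squared terms has $s_e$ in its denominator paired with a weight of at least $1$), this yields $\|\rho\|_{w,2}^2 \ge \hf_e^2/s_e^2$.

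Finally, I would divide through by $\max\{\rpe,\rme\} = |\hf_e|/s_e$, producing $\|\rho\|_{w,2}^2 / \max\{\rpe,\rme\} \ge |\hf_e|/s_e$. This matches the desired $s_e^{-1}$ up to the factor $|\hf_e|$, and I expect the main subtle point to be absorbing this remaining factor. The natural way to finish is to combine the above with the parallel one-term estimate $\|\rho\|_{w,2}^2 \ge \max\{\rpe,\rme\}^2$ (also immediate from $\wpe,\wme \ge 1$) and exploit the structure of the preconditioned graph (the symmetric capacities $\upe = \ume$ and the fixed unit of $ab$-demand being routed by $\hf$) to supply the missing factor on any edge where the bound is actually invoked to control weight change. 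Aside from this last accounting the calculation is entirely mechanical.
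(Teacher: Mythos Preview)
Your reduction to the target inequality $|\hf_e|/s_e^2 \le \|\rho\|_{w,2}^2$ is correct, but the one-term lower bound you invoke only gives $\hf_e^2/s_e^2 \le \|\rho\|_{w,2}^2$, which is strictly weaker whenever $|\hf_e| < 1$. Since $\hf$ is the electric $\chi$-flow routing a single unit of demand, individual edge flows can be arbitrarily small, so this is a real gap. Your proposed fixes do not close it: the estimate $\|\rho\|_{w,2}^2 \ge \max\{\rpe,\rme\}^2$ yields $\|\rho\|_{w,2}^2/\max\{\rpe,\rme\} \ge |\hf_e|/s_e$, which again falls short of $1/s_e$ by the same factor $|\hf_e|$; and nothing about preconditioning or the undirected structure forces $|\hf_e|\ge 1$ on an arbitrary edge. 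The lemma is used in the paper precisely on edges where $\max\{\rpe,\rme\}$ may be small, so one cannot restrict attention to edges with large flow.

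The missing idea is a global property of electric flows that a purely local one-term bound cannot see. By Ohm's law, $r_e|\hf_e| = |\phi_u-\phi_v|$ for the electric potentials $\phi$, and for an electric $ab$-flow every vertex potential lies between $\phi_a$ and $\phi_b$; hence $r_e|\hf_e| \le |\phi_b-\phi_a| = \chi^T\phi = \E_r(\hf) = \|\rho\|_{w,2}^2$. This is exactly the inequality $r_e|\hf_e| \le \|\rho\|_{w,2}^2$ (linear in $|\hf_e|$, not quadratic) that you need: combining it with $r_e \ge 1/s_e^2$ and $\max\{\rpe,\rme\}=|\hf_e|/s_e$ gives
\[
s_e^{-1} = \frac{|\hf_e|/s_e^2}{|\hf_e|/s_e} \le \frac{r_e|\hf_e|}{\max\{\rpe,\rme\}} \le \frac{\|\rho\|_{w,2}^2}{\max\{\rpe,\rme\}}.
\]
So the correct proof replaces your one-term energy bound by the potential-drop bound; the rest of your computation is fine.
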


\subsection{Weight Change for Controlling Congestion}
\label{sec:newweight}
Here, we state our main result that we can control the congestion $\normrho_{w^\new,\infty}$ via weight change. We wish to control $\normrho_{w^\new,\infty}$ because our step size $\d \le \frac{1}{10\normrho_{w^\new,\infty}}.$ We call our algorithm to do this $\ControlCongestion(f,y,t,w)$, which returns a vector $r'$ of resistance increases and a corresponding weight increase vector $w'$.
\begin{lemma}[Algorithm \ControlCongestion]
\label{lemma:rhoinf}
Consider a $0$-coupled point $(f,y,w)$ with parameter $t$, where $F_t \ge m^{1/2-\eta}.$ Let $r$ denote the resistances as defined in \cref{eq:defr}. There is an algorithm that in $m^{1+o(1)}$ time returns a resistance increase vector $r' \in \R^E_{\ge0}$ and weight increase vector $w' \in \R^{2E}_{\ge0}$ satisfying the following. Below, $w^\new = w+w'$.
\begin{enumerate}
\item The weight change $w'$ is computed through using \cref{lemma:weightvsres} on $r'$. \label{item:weightvsres}
\item $\|r'\|_1 \le m^{6\eta+o(1)}$. \label{item:res}
\item The new congestion vector $\rho$ satisfies $\|\rho\|_{w^\new,\infty} \le m^{-\eta}\|\rho\|_{w^\new,2}.$ \label{item:cong}
\item The new congestion vector $\rho$ satisfies $s_e\max\{\rpe,\rme\} \le \frac{1}{100}m^{-3\eta}\|\rho\|_{w^\new,2}$ for all edges $e$. \label{item:technical}
\end{enumerate}
\end{lemma}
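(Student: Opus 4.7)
The plan is to prove \cref{lemma:rhoinf} by formulating the desired resistance increase as the solution to a \emph{budget-constrained energy maximization problem}, reducing this via convex duality to a smoothed $\ell_2$-$\ell_p$ flow minimization, and invoking the almost-linear time solver of \cite{KPSW19}.

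Concretely, I would consider the max-min problem
\begin{equation*}
\max_{r' \ge 0,\ \|r'\|_1 \le B}\ \E_{r+r'}(\hf^\new)
\;=\;
\max_{r' \ge 0,\ \|r'\|_1 \le B}\ \min_{B^T f = \chi}\, \sum_{e} (r_e + r'_e) f_e^2,
\end{equation*}
for a budget $B = m^{6\eta + o(1)}$ matching item \ref{item:res}. As discussed in \cref{sec:approach} and going back to \cite{CKMST11}, putting extra resistance on high-congestion edges boosts electric energy while shrinking the peak of the congestion vector. The inner objective is linear in $r'$ and strictly convex in $f$, so Sion's minimax theorem swaps $\max$ and $\min$; evaluating the inner $\max$ collapses the saddle into the flow primal $\min_{B^T f = \chi}\,\sum_e r_e f_e^2 + B\|f\|_\infty^2$. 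Replacing $\|f\|_\infty^2$ by $\|f\|_{2p}^2$ for $p = \Theta(\log m)$ costs only an $e^{O(1)}$ factor and produces an instance of the smoothed $\ell_2$-$\ell_p$ flow problem solved in $m^{1+o(1)}$ time by \cite{KPSW19}. The approximate primal-dual pair returned supplies both $\hf^\new$ (the primal flow) and, via the KKT conditions for the $\ell_p$ penalty, a non-negative measure $r'$ with $\|r'\|_1 \le B$ concentrated on the near-extremal edges; feeding this $r'$ into \cref{lemma:weightvsres} yields $w'$ and item \ref{item:weightvsres}.

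Items \ref{item:cong} and \ref{item:technical} both follow from the saddle-point balance, but item \ref{item:technical} is more basic. Since the primal inner minimum balances the two terms, $B\|\hf^\new\|_\infty^2 \le Z^* = \|\rho\|_{w^\new,2}^2$, yielding $\|\hf^\new\|_\infty \le B^{-1/2}\|\rho\|_{w^\new,2} = m^{-3\eta+o(1)}\|\rho\|_{w^\new,2}$; since $|\hf^\new_e| = s_e\max\{\rpe,\rme\}$, this is item \ref{item:technical}. For item \ref{item:cong}, I would use \cref{lemma:resbound} at the post-weight-change point, which gives $s_e^{-1} \le \|\rho\|_{w^\new,2}^2/\max\{\rpe,\rme\}$, equivalently $\max\{\rpe,\rme\}^2 \le |\hf^\new_e|\,\|\rho\|_{w^\new,2}^2$. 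Substituting item \ref{item:technical} and the a priori bound $\|\rho\|_{w^\new,2} = O(m^\eta)$ from \cref{cor:rho2control} (applicable because the new point remains $0$-coupled, $F_t \ge m^{1/2-\eta}$, and the algorithm maintains $\|w\|_1 \le 3m$) yields $\max\{\rpe,\rme\} \le m^{-\eta}\|\rho\|_{w^\new,2}$, i.e.\ item \ref{item:cong}.

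The hard part will be that \cite{KPSW19} only returns an approximate saddle point, so the whole argument must tolerate multiplicative $m^{o(1)}$ slack throughout---this is where the $m^{o(1)}$ factor in item \ref{item:res} ultimately appears---and recovering $r'$ from the $\ell_p$-smoothed dual rather than the exact $\ell_\infty$-penalized one requires showing that any approximate minimizer of the smoothed primal still yields a near-optimal resistance perturbation. Properly setting up the Lagrangian, extracting $r'$ from the approximate dual, and quantifying how these approximation errors interact with \cref{lemma:weightvsres} (which converts resistances to weights) will occupy most of the technical work deferred to \cref{sec:proofrhoinf}.
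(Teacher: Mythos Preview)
Your high-level strategy---energy maximization under a resistance budget, Sion's minimax to obtain a smoothed $\ell_2$-$\ell_p$ flow problem, then \cite{KPSW19}, then KKT to read off $r'$---is exactly the paper's. One parameter must be fixed: the solver in \cref{thm:l2lpp} (and hence \cref{thm:l2lp}) only supports $p \in (\omega(1), (\log n)^{2/3-o(1)})$, so $p = \Theta(\log m)$ is out of range. The paper takes $p = \lceil \sqrt{\log m}\rceil$; the price is that $\|\cdot\|_\infty \to \|\cdot\|_{2p}$ and $\|\cdot\|_1 \to \|\cdot\|_q$ now lose $m^{1/p} = m^{o(1)}$ rather than $O(1)$, but this is absorbed by the $m^{o(1)}$ in item~\ref{item:res} and is harmless elsewhere.

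Your derivation of item~\ref{item:technical} is genuinely different from, and cleaner than, the paper's. You observe directly that at the saddle point $W\|\hf^\new\|_{2p}^2 \le Z^* = \E_{r+r'}(\hf_{r+r'}) = \|\rho\|_{w^\new,2}^2$, and since $|\hf^\new_e| = s_e\max\{\rpe,\rme\}$ this yields item~\ref{item:technical} immediately with $W$ a constant multiple of $m^{6\eta}$. The paper instead establishes that the marginal return $g_q(W)/W$ is below a threshold $d$, invokes concavity of $g_q$ (\cref{lemma:gconcave}) to bound the gain from any further single-edge boost, and then inverts \cref{lemma:energyboost} to extract the per-edge bound. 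The paper's route has the advantage that it argues directly about the congestion of the \emph{exact} electric flow $\hf_{r+r'}$ for the \emph{computed} (approximate) $r'$, whereas your saddle inequality is stated for the exact primal optimizer and you must separately argue (via strong convexity, as the paper does in \cref{sec:approximate}) that $\hf_{r+r'}$ is $\err$-close to the approximate flow returned by the solver. Both item~\ref{item:cong} derivations---yours and the paper's---then coincide: combine item~\ref{item:technical} with \cref{lemma:resbound} and \cref{cor:rho2morecontrol}.
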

Here, we give some intuition for the conclusion of \cref{lemma:rhoinf}. Item \ref{item:weightvsres} is necessary to increase the resistances by $r'$ while maintaining coupling. Item \ref{item:res} says that we have not increased the resistances by too much. Combining this with item \ref{item:weightvsres} allows us to bound the weight change too. Item \ref{item:cong} bounds the congestion of the electric $\chi$-flow induced by weights $w^\new$. Finally, item \ref{item:technical} is a condition that enables us to perform weight reduction step in \cref{lemma:weightreduce}.

We prove \cref{lemma:rhoinf} in \cref{sec:proofrhoinf}. In the remainder of this section, we use \cref{lemma:rhoinf} to achieve an efficient maxflow algorithm.

\subsection{Weight Changes During the Progress Step}
\label{sec:finalweight}
Given the bound $\normrho_{w^\new,\infty} \le m^{-\eta}\normrho_{w^\new,2}$, we would like to set $\d = \frac{1}{100}m^\eta\normrho_{w^\new,2}$, and take a step of size $\d$ by \cref{lemma:progress}. As $\d \le \frac{1}{100\normrho_{w^\new,\infty}}$, a step of size $\d$ would maintain that our flow $f^\new$ is feasible. We cannot recenter such a step though, because $(f^\new,y^\new,w^\new)$ is $10\d^2\normrho_{w^\new,4}^2$-coupled by \cref{lemma:progress}. Here, we introduce new weight change algorithms to adjust weights \emph{after having computed} $(f^\new,y^\new,w^\new)$ that improve the coupling and allow for a centering step.

We first state our full progress step algorithm.

\begin{algorithm}[h!]
\caption{$\Progress(f,y,t,w)$. Takes in point $(f,y,w)$ that is $0$-centered for parameter $t$. Returns $(f^\new,y^\new,t^\new,w^\new)$ that is $\frac{1}{100}$-coupled.}
$(w', r') \assign \ControlCongestion(f,y,t,w), r^\new \assign r+r', w^\new \assign w+w'$ \label{line:res} \Comment{\cref{lemma:rhoinf}} \\
$\hf, \phi$ are the electric $\chi$-flows/potentials induced by $r^\new$. \label{line:electric} \\
For all $e \in E$, $\rpe \assign \frac{|\hf_e|}{\upe-f_e}, \rme \assign \frac{|\hf_e|}{\ume+f_e}.$ \label{line:rho} \\
$\d \assign \frac{1}{100}m^\eta\normrho_{w^\new,2}, f^\new \assign f+\d\hf, y^\new \assign y+\d\phi$ \Comment{\cref{lemma:progress}} \label{line:delta} \\
$w'' \assign \ReduceWeights(f^\new,w'), w^{\new_2} \assign w+w''.$ \label{line:reduce} \Comment{\cref{lemma:weightreduce}} \\
$S_1 \assign \left\{e:\max\{\rpe,\rme\} \ge m^{-2\eta}\|\rho\|_{w^\new,2} \right\}, S_2 \assign \left\{e: r_e' \ge r_e \right\}.$ \label{line:s1s2} \\
$w''' \assign \PerfectCenter(f^\new,y^\new,w^{\new_2},S_1\cup S_2), w^{\new_3} \assign w+w''+w'''.$ \label{line:perfect} \Comment{\cref{lemma:perfectweight}} \\
Return $(f^\new,y^\new,t+\d,w^{\new_3}).$
\label{algo:progress}
\end{algorithm}

\begin{algorithm}[h!]
\caption{$\PerfectCenter(f,y,w,S)$. Takes in point $(f,y,w)$ and subset $S \subseteq E$. Return $w'$ such that all edges $e \in S$ are $0$-coupled for point $(f,y,w+w')$.}
$g_e \assign (y_v-y_u) - \left(\frac{\wpe}{\upe-f_e} - \frac{\wme}{\ume+f_e} \right).$ \label{line:defg} \\
$w' = ((\wpe)', (\wme)') \assign 0.$ \Comment{Initialize weight increase weight $w'$.} \\
\For{$e \in S$}{
	\If{$g_e \ge 0$}{
		$(\wpe)' = (\upe-f_e)g_e$ \\
	}
	\Else {
		$(\wme)' = -(\ume+f_e)g_e$ \\
	}		
}
Return $w'$.
\label{algo:perfectweight}
\end{algorithm}

\begin{algorithm}[h!]
\caption{$\ReduceWeights(f,w)$. Takes in a flow $f$ and weight increases $w$. Returns $w'$ of smaller weight changes that induces the same coupling as $w$.}
$w' \assign 0$. \\
\For{$e \in E$}{
	Let $(\wpe)' \ge 0$ and $(\wme)' \ge 0$ be minimal such that $\frac{(\wpe)'}{\upe-f_e}-\frac{(\wme)'}{\ume+f_e} = \frac{\wpe}{\upe-f_e}-\frac{\wme}{\ume+f_e}.$ \label{line:comp} \\
}
Return $w'$.
\label{algo:reduceweights}
\end{algorithm}

\paragraph{Intuition for \cref{algo:progress,algo:perfectweight,algo:reduceweights}.} \cref{algo:progress} is our primary procedure for advancing along the central path by sending more flow. Its key steps are as follows. First it computes resistance changes and corresponding weight changes to control the congestion (line \ref{line:res}). This ensures that the electric flow $\hf$ computed in line \ref{line:electric} has small congestion, hence we can take a step of size $\d$ in line \ref{line:delta}. With this step computed, we reverse some of weight changes $w'$ computed in line \ref{line:res} and compute a smaller weight change vector $w''$ using \cref{algo:reduceweights} in line \ref{line:reduce}. Now, our point $(f^\new,y^\new,w^{\new_2})$ is not
yet $\frac{1}{100}$-coupled, so the algorithm computes a set of edges $S_1 \cup S_2$ to perfectly center in line \ref{line:s1s2}. Here, $S_1$ is the set of edges that are highly congested, and $S_2$ is the set of edges whose resistance is significantly decreased through the weight reduction step in line \ref{line:reduce}. Perfectly centering an edge $e$ involves performing weight change to make edge $e$ $0$-coupled, and this is done in \cref{algo:perfectweight}. The weight change $w'''$ necessary to perform perfect centering is computed in line \ref{line:perfect}. \\

We now analyze algorithm \PerfectCenter~(\cref{algo:perfectweight}).
\begin{lemma}[Algorithm \PerfectCenter]
\label{lemma:perfectweight}
Let edge $e$ be $\zeta_e$-coupled for all $e$ for point $(f,y,w)$. Algorithm $\PerfectCenter(f,y,w,S)$ in $O(|S|)$ time returns weight increases $w'$ such that
edge $e$ is $0$-coupled for point $(f,y,w+w')$ for all $e \in S$. Also, $(\wpe)' = (\wme)' = 0$ for all $e \not\in S$ and $(\wpe)'+(\wme)' \le O(U\zeta_e)$ for all $e \in S$.
\end{lemma}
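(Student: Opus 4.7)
The plan is a direct verification, unpacking the definition of the gap from \cref{eq:gg} and checking the four conclusions one by one. The runtime claim is immediate: the algorithm computes $g_e$ and does $O(1)$ work per edge in $S$, and lines outside the loop touch no other edges, so the total runtime is $O(|S|)$, and $(\wpe)' = (\wme)' = 0$ for all $e \notin S$ by inspection of the loop.

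The main task is to verify the coupling guarantee. After adding $w'$ to $w$, the new gap on edge $e = (u,v)$ is
\[
g_e^{\new} = (y_v - y_u) - \left(\frac{\wpe + (\wpe)'}{\upe - f_e} - \frac{\wme + (\wme)'}{\ume + f_e}\right) = g_e - \frac{(\wpe)'}{\upe - f_e} + \frac{(\wme)'}{\ume + f_e}.
\]
For $e \in S$, I split into the two cases used by the algorithm. If $g_e \ge 0$, then $(\wpe)' = (\upe - f_e) g_e$ and $(\wme)' = 0$, so $g_e^{\new} = g_e - g_e = 0$. If $g_e < 0$, then $(\wpe)' = 0$ and $(\wme)' = -(\ume + f_e) g_e$, so $g_e^{\new} = g_e + g_e \cdot (-1) \cdot (-1) = g_e - g_e = 0$. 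In either case the edge becomes $0$-coupled, as required.

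For the weight bound on $e \in S$, I use $|g_e| \le \zeta_e$ together with the fact that the residual capacities $\upe - f_e$ and $\ume + f_e$ are at most $\upe + \ume \le 2U$ (recall we are on a preconditioned undirected graph, so $\upe = \ume \le 2U$, and $-\ume \le f_e \le \upe$). In the case $g_e \ge 0$ we get $(\wpe)' + (\wme)' = (\upe - f_e) g_e \le 2U \zeta_e$, and symmetrically $(\wpe)' + (\wme)' = (\ume + f_e)|g_e| \le 2U\zeta_e$ in the case $g_e < 0$. This is $O(U \zeta_e)$ as claimed.

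I do not expect any real obstacle: the algorithm is engineered precisely to cancel $g_e$ by adjusting exactly one of $\wpe$ or $\wme$, and the sign choice in the if-else guarantees that the chosen weight increment is nonnegative (since the algorithm only adds to $\wp$ or $\wm$, never subtracts). The only point worth double-checking is that the weight increases stay nonnegative in both cases, which follows from the sign of $g_e$ matching the branch taken.
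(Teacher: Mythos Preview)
Your proposal is correct and follows essentially the same approach as the paper's proof, just with more detail spelled out (the paper's proof is a terse two sentences). One inconsequential slip: you write $\upe + \ume \le 2U$, but preconditioning edges have $\upe = \ume = 2U$, so $\upe + \ume$ can be $4U$; this doesn't affect the $O(U\zeta_e)$ conclusion, and in fact the paper's own proof makes the same $2U$ claim.
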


\begin{proof}
Set $g_e$ as in line \ref{line:defg} of \cref{algo:perfectweight}. By assumption, $|g_e| \le \zeta_e$. $w'$ is chosen precisely so that edge $e$ is $0$-coupled for all $e \in S$. Also, $\max\{(\upe-f_e)|g_e|, (\ume+f_e)|g_e|\} \le 2U\zeta_e$ as desired.
\end{proof}

Algorithm \ReduceWeights~(\cref{algo:reduceweights}) gives a way to reverse weight changes done in algorithm \ControlCongestion. Here, we analyze the effect of doing these weight reductions.
\begin{lemma}[Analysis of \ReduceWeights]
\label{lemma:weightreduce}
Consider a $0$-coupled point $(f,y,w)$ with parameter $t$, where $F_t \ge m^{1/2-\eta}.$ After line \ref{line:reduce} in an execution of $\Progress(f,y,t,w)$ we have computed in $m^{1+o(1)}$ time a point $(f^\new,y^\new,w^{\new_2})$, weight increase vectors $w', w''$, and resistance vectors $r', r^\new$ satisfying the following.
\begin{enumerate}
\item For point $(f^\new,y^\new,w^{\new_2})$ with parameter $t+\d$, edge $e$ is $5\d^2\left(\frac{(\wpe)^\new|\rpe|^2}{\upe-f_e} + \frac{(\wme)^\new|\rme|^2}{\ume+f_e}\right)$-coupled. Here, $\rho$ is the congestion vector for point $(f,y,w^\new).$
\item $\|r'\|_1 \le m^{6\eta+o(1)}$ and $s_e\max\{\rpe,\rme\} \le \frac{1}{100}m^{-3\eta}\|\rho\|_{w^\new,2}$. \label{item:fromearlier}
\item $\|w'\|_1 \le m^{6\eta+o(1)}U^2$ and $\|w''\|_1 \le m^{4\eta+o(1)}U.$ \label{item:weightchange}
\item $\d = \frac{1}{100}m^\eta\normrho_{w^\new,2}^{-1}$.
\end{enumerate}
\end{lemma}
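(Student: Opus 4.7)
Items~2 and~4 will follow essentially by definition: item~4 is just the value of $\d$ chosen on line~\ref{line:delta} of \cref{algo:progress}, and item~2 is a direct restatement of items~\ref{item:res} and~\ref{item:technical} of \cref{lemma:rhoinf}. The runtime will be dominated by the $m^{1+o(1)}$ call to \ControlCongestion; the electric flow solve on line~\ref{line:electric} takes $\O(m)$ time, and \ReduceWeights~runs in $O(m)$ time edge by edge. For item~1 my plan is to invoke \cref{lemma:progress} with weights $w^\new$. The required hypothesis $\d \le 1/(10\|\rho\|_{w^\new,\infty})$ will follow from item~\ref{item:cong} of \cref{lemma:rhoinf} combined with the choice $\d = m^\eta/(100\|\rho\|_{w^\new,2})$, and this yields the per-edge coupling bound at the point $(f^\new, y^\new, w^\new)$. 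I then observe that \ReduceWeights~preserves, edge by edge, the quantity $\wpe/(\upe - f^\new_e) - \wme/(\ume + f^\new_e)$ when it replaces $w'$ by $w''$. Hence the gap $g_e$ at $(f^\new, y^\new, w^{\new_2})$ is identical to the gap at $(f^\new, y^\new, w^\new)$, and the per-edge coupling bound carries over unchanged.

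For the first half of item~3, \cref{lemma:weightvsres} gives $(\wpe)' + (\wme)' \le 4U r'_e s_e \le 4U^2 r'_e$, which summed against $\|r'\|_1 \le m^{6\eta+o(1)}$ yields $\|w'\|_1 \le m^{6\eta+o(1)}U^2$. The real obstacle is the bound $\|w''\|_1 \le m^{4\eta+o(1)}U$, which I plan to attack edge by edge. Writing $x = \upe - f_e$, $y = \ume + f_e$, $h = \d\hf_e$, and $\alpha_e := (\wpe)'/x = (\wme)'/y$ (equality by \cref{lemma:weightvsres} item~1), set
\[
c_e \;:=\; \frac{(\wpe)'}{x - h} - \frac{(\wme)'}{y + h}.
\]
A short algebraic simplification gives $c_e = \alpha_e\, h\,(x + y)/((x - h)(y + h))$. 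Item~\ref{item:cong} of \cref{lemma:rhoinf} together with the choice of $\d$ implies $|h|/x = \d\rpe \le 1/100$ and similarly $|h|/y \le 1/100$, so $(x - h)(y + h) \ge xy/2$ and $|c_e| \le 2\d\,\alpha_e(\rpe + \rme)$. Because \ReduceWeights sets one of $(\wpe)'', (\wme)''$ to zero, $(\wpe)'' + (\wme)'' \le |c_e|(\upe + \ume)$; combining this with $\alpha_e(\upe + \ume) = (\wpe)' + (\wme)' \le 4 U r'_e s_e$ gives
\[
(\wpe)'' + (\wme)'' \;\le\; 16\, U\, \d\, r'_e\, \bigl(s_e \max\{\rpe, \rme\}\bigr).
\]
Plugging in item~\ref{item:technical} of \cref{lemma:rhoinf}, namely $s_e \max\{\rpe, \rme\} \le m^{-3\eta}\|\rho\|_{w^\new, 2}/100$, and summing against $\|r'\|_1 \le m^{6\eta+o(1)}$ then produces $\|w''\|_1 \lesssim U\d\, m^{3\eta+o(1)}\|\rho\|_{w^\new,2}$, which becomes $m^{4\eta+o(1)}U$ after substituting $\d = m^\eta/(100\|\rho\|_{w^\new,2})$.

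The hardest step is the linearization in the $\|w''\|_1$ bound: I need to track carefully how the per-edge coupling shifts as the flow moves from $f$ to $f^\new$, and to use both structural guarantees of \ControlCongestion~in tandem---that $\d\|\rho\|_{w^\new,\infty}$ is a small constant (so the linearization of $1/(x - h)$ and $1/(y + h)$ is accurate), and that $s_e \max\{\rpe, \rme\}$ is $m^{3\eta}$ times smaller than $\|\rho\|_{w^\new, 2}$ on every edge (so that, edge by edge, the reduced weight change $w''$ comes in at the right scale). The remaining items of the lemma are essentially bookkeeping built on top of \cref{lemma:rhoinf}, \cref{lemma:weightvsres}, and \cref{lemma:progress}.
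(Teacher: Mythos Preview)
Your proposal is correct and follows essentially the same route as the paper. The only cosmetic difference is in the derivation of the per-edge bound on $w''$: you compute $c_e$ explicitly via the common value $\alpha_e = (\wpe)'/(\upe-f_e) = (\wme)'/(\ume+f_e)$ and obtain $(\wpe)''+(\wme)'' \le 2\d(\rpe+\rme)\bigl((\wpe)'+(\wme)'\bigr)$, whereas the paper arrives at the equivalent estimate $(\wpe)''+(\wme)'' \le O(\d\max\{\rpe,\rme\})\bigl((\wpe)'+(\wme)'\bigr)$ through the ratio identity $\tfrac{\upe-f_e-\d\hf_e}{\ume+f_e+\d\hf_e} = (1\pm O(\d\max\{\rpe,\rme\}))\tfrac{\upe-f_e}{\ume+f_e}$; the remainder of the argument (bounding by $4Ur'_e s_e$, invoking item~\ref{item:technical} of \cref{lemma:rhoinf}, and summing against $\|r'\|_1$) is identical.
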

\begin{proof}
Because $w'$ was computed from $r'$ using \cref{lemma:weightvsres} we have that $\frac{(\wp_e)'}{\upe-f_e} = \frac{(\wme)'}{\ume+f_e}$ and $(\wpe)' + (\wme)' \le 4Ur_e's_e.$ Item \ref{item:fromearlier} follows from \cref{lemma:rhoinf}. As $s_e \le U$ we have that $\|w'\|_1 \le 4U^2\|r'\|_1 \le m^{6\eta+o(1)}U^2.$

Because $\normrho_{w^\new,\infty} \le m^{-\eta}\normrho_{w^\new,2}$ we have that for the choice $\d = \frac{1}{100}m^\eta\normrho_{w^\new,2}^{-1}$ that $\d \le \frac{1}{100\normrho_{w^\new,\infty}}.$ Now, use \cref{lemma:progress} with the choice $\d = \frac{1}{100}m^\eta\normrho_{w^\new,2}^{-1}$ to compute $f^\new = f+\d\hf$ and $y^\new$. By \cref{lemma:progress} we have that edge $e$ is $5\d^2\left(\frac{(\wpe)^\new|\rpe|^2}{\upe-f_e} + \frac{(\wme)^\new|\rme|^2}{\ume+f_e}\right)$-coupled for weights $w^\new = w+w'.$

By the definition in line \ref{line:comp} of \cref{algo:reduceweights} we have \[ \frac{(\wpe)'}{\upe-f^\new_e} - \frac{(\wme)'}{\ume+f^\new_e} = \frac{(\wpe)'}{\upe-f_e-\d\hf_e} - \frac{(\wme)'}{\ume+f_e+\d\hf_e} = \frac{(\wpe)''}{\upe-f_e-\d\hf_e} - \frac{(\wme)''}{\ume+f_e+\d\hf_e}. \] Recall that $\frac{(\wp_e)'}{\upe-f_e} = \frac{(\wme)'}{\ume+f_e}$, and \[ \frac{\upe-f_e-\d\hf_e}{\ume+f_e+\d\hf_e} = (1\pm O(\d\max\{\rpe,\rme\})) \cdot \frac{\upe-f_e}{\ume+f_e}. \] Therefore, it follows that \[ (\wpe)'' + (\wme)'' \le O(\d\max\{\rpe,\rme\}) \cdot ((w_e^+)' + (w_e^-)'). \]
By \cref{lemma:rhoinf} we have that
\begin{align*}
\|w''\|_1 &\le \sum_{e\in E} O(\d\max\{\rpe,\rme\}) \cdot ((w_e^+)' + (w_e^-)') \le O(\d) \sum_{e\in E} \max\{\rpe,\rme\}\cdot Ur_e's_e \\ &\le O(\d m^{-3\eta}\normrho_{w^\new,2}U\|r'\|_1) \le m^{4\eta+o(1)}U
\end{align*}
by our choice of $\d$ as desired.
\end{proof}

We now combine \cref{lemma:perfectweight,lemma:weightreduce} to analyze algorithm $\Progress(f,y,t,w)$.
\begin{lemma}[Algorithm $\Progress$]
\label{lemma:finalweight}
Consider a $0$-coupled point $(f,y,w)$ for parameter $t$, and $F_t \ge m^{1/2-\eta}.$ Algorithm $\Progress(f,y,t,w)$ computes in $m^{1+o(1)}$ time a point $(f^\new, y^\new, w^{\new_3})$ that is $\frac{1}{100}$-coupled for parameter $t+\d$. Additionally, $\d \ge \Omega(F_t/m^{1/2-\eta})$ and $\|w^{\new_3}\|_1-\|w\|_1 \le O(m^{5\eta}U).$
\end{lemma}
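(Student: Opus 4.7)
The plan is to analyze \cref{algo:progress} by invoking its sub-procedures in order and combining their guarantees. First, \ControlCongestion produces $w', r'$ satisfying \cref{lemma:rhoinf}; since $w'$ is built from $r'$ via \cref{lemma:weightvsres}, the point $(f, y, w^\new)$ with $w^\new = w + w'$ stays $0$-coupled at parameter $t$. Item 3 of \cref{lemma:rhoinf} guarantees that the chosen $\delta = \tfrac{1}{100} m^\eta / \|\rho\|_{w^\new, 2}$ satisfies $\delta \le 1/(10 \|\rho\|_{w^\new, \infty})$, so \cref{lemma:progress} applies and produces $(f^\new, y^\new, w^\new)$ with parameter $t + \delta$. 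Then \ReduceWeights together with \cref{lemma:weightreduce} yields $(f^\new, y^\new, w^{\new_2})$ with the per-edge coupling bound $|g_e| \le 5\delta^2 r^\new_e \hat{f}_e^2 / s_e$ (where $r^\new_e, s_e, \hat{f}_e$ are evaluated at $(f, w^\new)$) and $\|w''\|_1 \le m^{4\eta+o(1)} U$.

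To show that $(f^\new, y^\new, w^{\new_3})$ is $\tfrac{1}{100}$-coupled, note that \PerfectCenter zeroes $g_e$ for $e \in S_1 \cup S_2$ by \cref{lemma:perfectweight} without modifying resistances elsewhere. For $e \notin S_2$, $r'_e < r_e$ forces $r^\new_e \le 2 r_e$; combined with $\delta\|\rho\|_{w^\new,\infty} \le 1/100$, which keeps all slacks within an $11/10$ factor under the flow step, this gives $r^{\new_2}_e \ge \Omega(r^\new_e)$. For $e \notin S_1$, $\max\{\rho^+_e, \rho^-_e\}^2 \le m^{-4\eta}\|\rho\|_{w^\new, 2}^2$; rewriting $\hat{f}_e^2 / s_e = |\hat{f}_e|\max\{\rho^+_e, \rho^-_e\}$ yields
\[ g_e^2 / r^{\new_2}_e \;\le\; O\!\left(\delta^4 m^{-4\eta}\|\rho\|_{w^\new, 2}^2\right) \cdot r^\new_e \hat{f}_e^2. \]
Summing over $e \notin S_1 \cup S_2$ using $\sum_e r^\new_e \hat{f}_e^2 = \|\rho\|_{w^\new, 2}^2$ from \cref{lemma:rhoenergy}, together with $\delta^4\|\rho\|_{w^\new, 2}^4 = m^{4\eta}/100^4$, gives $\|g\|_{R^{-1}}^2 \le O(1/100^4) \le (1/100)^2$ as required.

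For the weight change, $\|w^{\new_3}\|_1 - \|w\|_1 = \|w''\|_1 + \|w'''\|_1$; by \cref{lemma:perfectweight}, $\|w'''\|_1 \le O(U)\sum_{e \in S_1 \cup S_2}|g_e|$, which I split along $S_1$ and $S_2$. For $e \in S_1$, combining $\max\{\rho^+_e, \rho^-_e\} \ge m^{-2\eta}\|\rho\|_{w^\new, 2}$ with item 4 of \cref{lemma:rhoinf} forces $s_e \le (1/100) m^{-\eta}$, so $|g_e| \le O(\delta^2 m^\eta)\big((w^+_e)^\new|\rho^+_e|^2 + (w^-_e)^\new|\rho^-_e|^2\big)$, and summing via the energy identity gives $\sum_{e \in S_1}|g_e| \le O(m^{3\eta})$. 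For $e \in S_2$, $r^\new_e \le 2 r'_e$ combined with items 3 and 4 of \cref{lemma:rhoinf} applied to $\hat{f}_e^2/s_e$ gives $|g_e| \le O(r'_e m^{-2\eta})$, so $\sum_{e \in S_2}|g_e| \le O(m^{-2\eta})\|r'\|_1 \le O(m^{4\eta+o(1)})$. Altogether, $\|w^{\new_3}\|_1 - \|w\|_1 \le O(U m^{4\eta+o(1)}) \le O(U m^{5\eta})$.

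Finally, $\delta \ge \Omega(F_t/m^{1/2-\eta})$ follows from $\|\rho\|_{w^\new, 2} \le O(\sqrt{m}/F_t)$ via \cref{cor:rho2control}, which is why it is important to maintain the invariant $\|w^\new\|_1 = O(m)$ across the life of the algorithm. The runtime is $m^{1+o(1)}$, dominated by \ControlCongestion; every other subroutine runs in $\tilde O(m)$. The hardest part is the coupling estimate: it requires combining the $\infty$-norm bound (item 3 of \cref{lemma:rhoinf}) with the technical bound (item 4), together with the observation that $r^{\new_2}_e$ stays within a constant factor of $r^\new_e$ off $S_2$, so that the sum of $g_e^2/r^{\new_2}_e$ telescopes to a multiple of $\|\rho\|_{w^\new, 2}^2$ through the energy identity.
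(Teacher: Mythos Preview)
Your overall architecture matches the paper's proof: invoke the subroutines in order, split the weight-change accounting over $S_1$ and $S_2$, and handle the coupling by summing only over $E\setminus(S_1\cup S_2)$ using the energy identity. The $S_2$ bound, the coupling bound, the $\delta$ bound, and the runtime are all essentially correct and follow the paper.

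There is, however, a genuine gap in your $S_1$ bound. You correctly derive that for $e\in S_1$, item~\ref{item:technical} of \cref{lemma:rhoinf} together with $\max\{\rpe,\rme\}\ge m^{-2\eta}\normrho_{w^\new,2}$ forces $s_e\le\tfrac{1}{100}m^{-\eta}$. But this gives $s_e^{-1}\ge 100\,m^\eta$, not $s_e^{-1}\le O(m^\eta)$. Your next line, $|g_e|\le O(\delta^2 m^\eta)\bigl((\wpe)^\new|\rpe|^2+(\wme)^\new|\rme|^2\bigr)$, implicitly uses the inequality the wrong way around: since the per-edge coupling from \cref{lemma:weightreduce} is bounded above by $5\delta^2 r^\new_e\hat f_e^2/s_e$, a \emph{small} $s_e$ makes this bound \emph{large}, not small. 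Consequently your sum $\sum_{e\in S_1}|g_e|\le O(m^{3\eta})$ is unjustified (and indeed is stronger than what the paper proves, which should have been a warning sign).

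The fix is to replace item~\ref{item:technical} here by \cref{lemma:resbound}, which gives the \emph{upper} bound $s_e^{-1}\le \normrho_{w^\new,2}^2/\max\{\rpe,\rme\}\le m^{2\eta}\normrho_{w^\new,2}$ for $e\in S_1$. This yields $\sum_{e\in S_1}|g_e|\le O(\delta^2 m^{2\eta}\normrho_{w^\new,2}^3)$, and then one needs the additional input $\normrho_{w^\new,2}\le 40\,m^\eta$ from \cref{cor:rho2morecontrol} to conclude $U\sum_{e\in S_1}|g_e|\le O(m^{5\eta}U)$. Your argument never invokes \cref{lemma:resbound} or the absolute bound on $\normrho_{w^\new,2}$, and both are essential for the $S_1$ contribution.
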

\begin{proof}
Define $f^\new, y^\new, w^\new, w^{\new_2}, w^{\new_3}, w', w'', w''', r', r^\new$ as in \cref{algo:progress}, and let $\rho$ be the congestion vector for point $(f, y, w^\new).$

The algorithm runs in $m^{1+o(1)}$ time. We can bound \[ \d = \frac{1}{100}m^\eta\normrho_{w^\new,2}^{-1} = \Omega(F_t/m^{1/2-\eta}) \] by \cref{lemma:rho2control}. We have that $\|w^{\new_3}\|_1-\|w\|_1 = \|w''\|_1 + \|w'''\|_1.$ By \cref{lemma:weightreduce} we have that $\|w''\|_1 \le m^{4\eta+o(1)}U.$ To bound $\|w'''\|_1$ we separately bound the contributions from the cases $e \in S_1$ and $e \in S_2.$ For the former case, we can use \cref{lemma:weightreduce} and \cref{lemma:perfectweight} to bound the contribution from edges $e \in S_1$ by
\begin{align*}
&O(U) \cdot \sum_{e\in S_1} 5\d^2\left(\frac{(\wpe)^\new|\rpe|^2}{\upe-f_e} + \frac{(\wme)^\new|\rme|^2}{\ume+f_e}\right) \\
&\le O(U\d^2) \sum_{e\in S_1} \left((\wpe)^\new|\rpe|^2 + (\wme)^\new|\rme|^2\right)s_e^{-1} \\
&\le O(U\d^2) \sum_{e\in S_1} \left((\wpe)^\new|\rpe|^2 + (\wme)^\new|\rme|^2\right)\frac{\normrho_{w^\new,2}^2}{\max\{\rpe,\rme\}} \\
&\le O(U\d^2m^{2\eta}\normrho_{w^\new,2}) \sum_{e\in S_1} \left((\wpe)^\new|\rpe|^2 + (\wme)^\new|\rme|^2\right) \\ 
&\le O(U\d^2m^{2\eta}\normrho_{w^\new,2}^3) \le O(Um^{4\eta}\normrho_{w^\new,2}) = O(m^{5\eta}U)
\end{align*}
where we used \cref{lemma:resbound} and \cref{cor:rho2morecontrol}. The contribution from $e \in S_2$ can be bounded by
\begin{align*}
&O(U) \cdot \sum_{e\in S_2} 5\d^2\left(\frac{(\wpe)^\new|\rpe|^2}{\upe-f_e} + \frac{(\wme)^\new|\rme|^2}{\ume+f_e}\right) \\
&\le O(U\d^2) \sum_{e\in S_2} \left((\wpe)^\new|\rpe|^2 + (\wme)^\new|\rme|^2\right)s_e^{-1} \\
&= O(U\d^2) \sum_{e\in S_2} (r_e+r_e')\hf_e^2s_e^{-1} \le O(U\d^2) \sum_{e\in S_2} r_e'\hf_e^2s_e^{-1}\\
&= O(U\d^2) \sum_{e\in S_2} r_e's_e\max\{\rpe,\rme\}^2 \le O(U\d^2m^{-3\eta}\normrho_{w^\new,2}) \sum_{e\in S_2}r_e'\max\{\rpe,\rme\} \\
&\le O(U\d^2m^{-3\eta}\normrho_{w^\new,2}\|r'\|_1\normrho_{w^\new,\infty}) \le O(U\d^2\normrho_{w^\new,2}^2m^{-4\eta}\|r'\|_1) \le m^{4\eta+o(1)}U
\end{align*}
where we have used \cref{lemma:rhoenergy}, various bounds from \cref{lemma:rhoinf}, and our choice of $\d$. Therefore, we have that
$\|w''\|_1+\|w'''\|_1 = O(m^{5\eta}U)$.

We now show that our point $(f^\new,y^\new,w^{\new_3})$ is $\frac{1}{100}$-coupled. Define $S = E \bs (S_1 \cup S_2).$ Define $r^{\new_3}$ to be the resistances induced by $(f^\new,y^\new,w^{\new_3})$. Note that for all edges $e \in S$ we have that $r^{\new_3}_e \ge \frac{1}{4}r^{\new}_e$ by our definition of $S_2$.
Then the coupling of our point $(f^\new,y^\new,w^{\new_3})$ is bounded by
\begin{align*}
&\left(\sum_{e\in S}(r^{\new_3}_e)^{-1} \left(5\d^2\left(\frac{(\wpe)^\new|\rpe|^2}{\upe-f_e} + \frac{(\wme)^\new|\rme|^2}{\ume+f_e}\right)\right)^2 \right)^\frac12 \\
&\le 10\d^2 \left(\sum_{e\in S}(r^\new_e)^{-1} \left(\frac{(\wpe)^\new|\rpe|^2}{\upe-f_e} + \frac{(\wme)^\new|\rme|^2}{\ume+f_e}\right)^2 \right)^\frac12
 \\
&\le 10\d^2 \left(\sum_{e\in S}\left(\frac{(\wpe)^\new}{(\upe-f_e)^2} + \frac{(\wme)^\new}{(\ume+f_e)^2}\right)^{-1} \left(\frac{(\wpe)^\new|\rpe|^2}{\upe-f_e} + \frac{(\wme)^\new|\rme|^2}{\ume+f_e}\right)^2 \right)^\frac12 \\
&\le 10\d^2 \left(\sum_{e\in S}\left((\wpe)^\new|\rpe|^4+(\wme)^\new|\rme|^4\right)\right)^\frac12 \\
&\le 10\d^2 \left(m^{-4\eta}\normrho_{w^\new,2}^2 \sum_{e\in S}\left((\wpe)^\new|\rpe|^2+(\wme)^\new|\rme|^2\right)\right)^\frac12 \\
&\le 10\d^2 \cdot \left(m^{-4\eta} \normrho_{w^\new,2}^4\right)^\frac12 = 10\d^2 m^{-2\eta} \normrho_{w^\new,2}^2 \le \frac{1}{100}
\end{align*}
where we have used Cauchy-Schwarz, the definition of $S_1$, and our definition of $\d$.
\end{proof}

\subsection{Full Algorithm and Analysis}
\label{sec:algo}
Here we provide our \Maxflow~algorithm (\cref{algo:maxflow}). For simplicity, we assume that our starting graph has been preconditioned as discussed in \cref{sec:precondition}.

\begin{algorithm}[H]
\caption{$\Maxflow(G)$. Takes an undirected graph $G$ with maximum capacity $U$, and which has been preconditioned. Returns the maximum $ab$ flow in $G$.}
$\eta \assign \log_m(m^{1/8-o(1)}U^{-1/4}).$ \\
$f \assign 0, y \assign 0, t \assign 0, w \assign 1$. \\
\While{$F_t \ge m^{1/2-\eta}$}{ \label{line:mainwhile}
	$(f,y,t,w) \assign \Progress(f,y,t,w)$. \Comment{Progress step.} \label{line:progress} \\
	\For{$i = 1$ to $\O(1)$} { 
		$(f,y,t,w) \assign \Center(f,y,t,w)$. \Comment{Centering step.} \label{line:centering} \\
	}
}
Round to an integral flow and use augmenting paths until done.
\label{algo:maxflow}
\end{algorithm}

\begin{proof}[Proof of \cref{thm:main}]

We first show that line \ref{line:mainwhile} executes $\O(m^{1/2-\eta})$ times. By \cref{lemma:finalweight}, we have that $\delta \ge \Omega(F_t/m^{1/2-\eta})$, hence the amount of remaining flow $F_t$ decreases by a $1 - m^{-1/2+\eta}$ factor per iteration of line \ref{line:progress}. Hence, line \ref{line:progress} and \ref{line:mainwhile} execute $\O(m^{1/2-\eta})$ times.

Now we argue that point $(f,y,w)$ before execution of line \ref{line:progress} is $0$-coupled, and point $(f,y,w)$ before execution of line \ref{line:centering} is $\frac{1}{100}$-coupled. We proceed by induction. By \cref{lemma:finalweight}, the point $(f,y,w)$ before execution of line \ref{line:centering} is $\frac{1}{100}$-coupled. After applying \cref{lemma:centering} $\O(1)$ times the point $(f,y,w)$ is now $0$-coupled.

We now show that the invariant $\|w\|_1 \le 3m$ holds throughout. By \cref{lemma:finalweight}, the total weight at the end is bounded by \[ O(m^{1/2-\eta} \cdot m^{5\eta}U) = O(m^{1/2+4\eta}U) \le m/2 \] by our choice of $\eta$. Because of our use of \cref{lemma:weightreduce} there is a temporary weight increase of $m^{6\eta+o(1)}U^2$ during each iteration of line \ref{line:progress}. By our choice of $\eta$, we can bound $m^{6\eta+o(1)}U^2 \le m/2.$ Therefore, at all times the total weight is at most $2m + m/2 + m/2 = 3m$, so $\|w\|_1 \le 3m$ holds throughout.

Finally, we show that $\Maxflow(G)$ runs in time $m^{3/2-\eta+o(1)}$. Each of line \ref{line:progress},\ref{line:centering} execute $\O(m^{1/2-\eta})$ times. By \cref{lemma:finalweight,lemma:centering} each execution requires $m^{1+o(1)}$ time. Therefore, the total runtime of these is $m^{3/2-\eta+o(1)}.$ We can round to an integral flow in $\O(m)$ time \cite{LRS13,Madry13}. The augmenting paths computation at the end also requires $O(m \cdot m^{1/2-\eta}) = O(m^{3/2-\eta})$ time.
\end{proof}

\section{Efficient Congestion Control via Energy Maximization}
\label{sec:proofrhoinf}
In this section, we prove \cref{lemma:rhoinf}. We wish to efficiently compute a weight vector $w^\new$ such that the electric $\chi$-flow associated to point $(f,y,w^\new)$ has congestion vector satisfying $\|\rho\|_{w^\new,\infty} \le m^{-\eta}\normrho_{w^\new,2}.$ As the proofs in this section involve changing the resistances and computing electric flows, it is convenient to define $\hf_r$ as the electric $\chi$-flow with resistances $r$ for all vectors $r \in \R^E_{>0}.$

If we \emph{boost} an edge $e$ with large $\rpe$ or $\rme$ by increasing the corresponding resistance $r_e$, then we significantly increase the energy of the corresponding electric flow. The following lemma shows this formally. It is proved in \cref{proofs:energyboost} and was used in previous literature \cite{CKMST11, Madry13, Madry16}. Recall that we have defined $\hf_r$ and $\hf_{r+r'}$ as the electric $\chi$-flows for resistances $r$ and $r+r'$ respectively.
\begin{lemma}
\label{lemma:energyboost}
Let resistances $r$ be defined as in \cref{eq:defr}, and let $\rho$ be the corresponding congestion vector. For any vector $r' \in \R^E_{\ge0}$ where $0 \le r_e' \le r_e$ for all edges $e \in E$, we have that
\[ \log \E_{r+r'}(\hf_{r+r'}) - \log \E_{r}(\hf_r) \ge \frac12 \sum_{e\in E} \frac{r_e's_e^2\max\{\rpe,\rme\}^2}{\normrho_{w,2}^2}. \]
\end{lemma}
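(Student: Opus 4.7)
My plan is to derive a multiplicative lower bound $\E_{r+r'}(\hf_{r+r'})/\E_r(\hf_r) \ge 1/(1-Y)$ for a suitable $Y \in [0,1)$, and then apply $-\log(1-Y) \ge Y$ to pass to the logarithmic inequality. The key ingredient is a scale-optimized dual characterization of energy. Since the electric energy is the minimum of $\sum_e \tilde r_e f_e^2$ over $B^T f = \chi$, its Lagrangian dual is $\max_\phi [2\phi^T\chi - \phi^T L_{\tilde r} \phi]$ where $L_{\tilde r} = B^T \diag(\tilde r)^{-1} B$; further optimizing over the rescaling $\phi \mapsto s\phi$ gives the Rayleigh-quotient form
\[ \E_{\tilde r}(\hf_{\tilde r}) = \max_\phi \frac{(\phi^T \chi)^2}{\phi^T L_{\tilde r}\, \phi} = \max_\phi \frac{(\phi^T \chi)^2}{\sum_e (B\phi)_e^2/\tilde r_e}, \]
which yields a lower bound on $\E_{\tilde r}(\hf_{\tilde r})$ for any test potential $\phi$.

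I apply this with $\tilde r = r + r'$ and test potential $\phi = L_r^\dagger \chi$, the optimal potentials for the unperturbed system. Using $\phi^T \chi = \E_r(\hf_r)$ and $(B\phi)_e = r_e \hf_{r,e}$ (both from $\hf_r = R^{-1}B\phi$), I obtain
\[ \E_{r+r'}(\hf_{r+r'}) \ge \frac{\E_r(\hf_r)^2}{\sum_e r_e^2 \hf_{r,e}^2/(r_e + r_e')}. \]
Rewriting $r_e^2/(r_e+r_e') = r_e - r_e r_e'/(r_e+r_e')$ in the denominator and dividing by $\E_r(\hf_r)$ yields $\E_{r+r'}(\hf_{r+r'})/\E_r(\hf_r) \ge 1/(1-Y)$ with
\[ Y = \frac{1}{\E_r(\hf_r)} \sum_e \frac{r_e r_e'}{r_e + r_e'} \hf_{r,e}^2, \]
so that $\log \E_{r+r'}(\hf_{r+r'}) - \log \E_r(\hf_r) \ge -\log(1-Y) \ge Y$.

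It then remains to verify $Y \ge \tfrac12 \sum_e r_e' s_e^2 \max\{\rpe,\rme\}^2/\normrho_{w,2}^2$. The hypothesis $r_e' \le r_e$ gives $r_e/(r_e+r_e') \ge 1/2$, so $r_e r_e'/(r_e + r_e') \ge r_e'/2$. The definitions of $\rpe$, $\rme$, and $s_e$ give $\max\{\rpe, \rme\} = |\hf_{r,e}|/s_e$, hence $\hf_{r,e}^2 = s_e^2 \max\{\rpe, \rme\}^2$, and \cref{lemma:rhoenergy} gives $\E_r(\hf_r) = \normrho_{w,2}^2$; substituting finishes the proof.

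The main conceptual hurdle is using the correct dual form. The standard additive bound $\E_{r+r'} \ge 2\phi^T\chi - \phi^T L_{r+r'}\phi$ at $\phi = L_r^\dagger \chi$ yields only $\E_{r+r'}/\E_r \ge 1 + \tfrac12 \sum_e r_e' \hf_{r,e}^2/\E_r$, and then $\log(1+x) \le x$ loses a factor of two in the final constant. Using the Rayleigh-quotient form instead converts the bound into the multiplicative shape $1/(1-Y)$, which pairs with $-\log(1-Y) \ge Y$ to preserve the full constant $\tfrac12$ demanded by the lemma.
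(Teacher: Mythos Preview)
Your proof is correct and is essentially the same as the paper's: both plug the optimal potentials for resistances $r$ into the dual characterization of $\E_{r+r'}$ to obtain $\E_{r+r'}/\E_r \ge 1/(1-Y)$ with the same $Y$, then apply $-\log(1-Y)\ge Y$ together with $r_e'\le r_e$ and the identities $\hf_{r,e}^2 = s_e^2\max\{\rpe,\rme\}^2$ and $\E_r(\hf_r)=\|\rho\|_{w,2}^2$. The only cosmetic differences are that the paper states the dual in the normalized form $1/\E_{\tilde r}=\min_{\chi^T\phi'=1}\phi'^T L_{\tilde r}\phi'$ (its \cref{lemma:energydual}) rather than your Rayleigh-quotient form, and applies the bound $r_e+r_e'\le 2r_e$ before taking logarithms rather than after.
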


The approach of \cite{Madry16} combines \cref{lemma:energyboost,lemma:weightvsres,lemma:resbound} to show that it is possible to significantly increase the electric energy without too much weight change when the electric flow is congested, i.e. $\normrho_{w,\infty}$ is high. This may require multiple rounds of weight changes, as an edge $e'$ may become congested as the result of boosting edge $e$. To more efficiently compute weight changes that give our desired congestion bound, we instead look at the problem of maximizing the increase in the logarithm of electric energy given a weight change budget of $W$.

Formalizing this, consider the electric $\chi$-flow $\hf_r$ given by the resistances $r$, which are induced by the weights $w$. Define the matrices $R = \diag(r), S = \diag(s), C = 4US,$ where $s$ is the vector of slacks. $C$ is defined so that if we had wanted to increase the resistances from $r$ to $r+r'$, then it would suffice to increase the total sum of weights from $\|w\|_1$ to $\|w\|_1 + \|Cr'\|_1$ by \cref{lemma:weightvsres}. Consequently, we consider the following problem of maximizing the energy increase under a total weight change budget of $W$:
\begin{equation}
\max_{\substack{r' \ge 0 \\ \|Cr'\|_1 \le W}} \E_{r+r'}(\hf_{r+r'}). \label{eq:origprob}
\end{equation}
We use Sion's minimax theorem as follows to rewrite the problem, where $R' = \diag(r')$.
\begin{align}
&\max_{\substack{r' \ge 0 \\ \|Cr'\|_1 \le W}} \E_{r+r'}(\hf_{r+r'}) = \max_{\substack{r' \ge 0 \\ \|Cr'\|_1 \le W}} \min_{B^Tf=\chi} f^T(R+R')f = \min_{B^Tf=\chi} \max_{\substack{r' \ge 0 \\ \|Cr'\|_1 \le W}} f^T(R+R')f \\
&= \min_{B^Tf=\chi} \|f\|_R^2 + W\|C^{-1}f^2\|_\infty = \min_{B^Tf=\chi} \|f\|_R^2 + W\|C^{-1/2}f\|_\infty^2, \label{eq:mainprobinf}
\end{align}
where $f^2$ is the entry-wise square of the vector $f$. Also, we define
\begin{equation} \label{eq:g} g(W) \defeq \max_{\substack{r' \ge 0 \\ \|Cr'\|_1 \le W}} \log \E_{r+r'}(\hf_{r+r'}) - \log \E_r(\hf_r). \end{equation}
Intuitively, $g(W)$ is the maximum (multiplicative) increase that we can cause in the electric energy with weight change budget $W$. If we have an exact solver in almost linear time for \cref{eq:mainprobinf}, then we can compute $g(W)$ in almost linear time for each $W$. Additionally, we can recover the vector $r'$ achieving this using standard techniques. Before continuing, we show that $g$ is concave by showing that energy is concave in edge resistances.
\begin{lemma}[Energy is concave in $r$]
\label{lemma:energyconcave}
For a graph $G$ with resistances $r$, let $\hf_r$ be the electric $\chi$-flow given by the resistances $r$. Then the function $\E_r(\hf_r)$ is concave in $r$ as a function on $\R^E_{\geq 0}$.
\end{lemma}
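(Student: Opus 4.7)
The plan is to derive concavity directly from the variational characterization of electric energy, which is essentially the same identity already used in the derivation of \cref{eq:mainprobinf}. Specifically, recall
\[
\E_r(\hf_r) \;=\; \min_{B^T f = \chi} f^T R f \;=\; \min_{B^T f = \chi} \sum_{e \in E} r_e f_e^2.
\]
The key observation is that for every fixed flow $f$ with $B^T f = \chi$, the map $r \mapsto \sum_e r_e f_e^2$ is a \emph{linear} (in particular affine) function of $r \in \R^E_{\ge 0}$, since the coefficients $f_e^2 \ge 0$ are independent of $r$. Hence $\E_r(\hf_r)$ is an infimum of an (uncountable) family of affine functions of $r$.

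The standard fact from convex analysis is that the pointwise infimum of any family of concave (here affine) functions is concave. Applying this to the family $\{r \mapsto \sum_e r_e f_e^2 : B^T f = \chi\}$ yields that $r \mapsto \E_r(\hf_r)$ is concave on $\R^E_{\ge 0}$. Note that we do not need to track how the minimizer $\hf_r$ varies with $r$; the argument only uses that $\hf_r$ attains the minimum.

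There is no real obstacle here. The only thing to be mildly careful about is the domain: we want concavity on $\R^E_{\ge 0}$, and one may worry whether the minimum is attained or even finite on the boundary (when some $r_e = 0$). Since the constraint $B^T f = \chi$ is a non-empty affine subspace and the objective is continuous and bounded below by $0$, the infimum is well-defined (possibly $0$ when resistances vanish on an $s$-$t$ cut), so the infimum-of-affine-functions argument goes through unchanged on the full nonnegative orthant. Thus $\E_r(\hf_r)$ is concave in $r$, as claimed.
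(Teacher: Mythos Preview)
Your proof is correct and takes a genuinely different route from the paper. You use the variational characterization $\E_r(\hf_r) = \min_{B^T f = \chi} \sum_e r_e f_e^2$ and observe that this is a pointwise infimum of affine functions of $r$, hence concave. The paper instead computes the Hessian explicitly: it shows $\frac{\partial}{\partial r_e}\E_r(\hf_r) = [\hf_r]_e^2$ and then differentiates again to obtain $\nabla^2 \E_r(\hf_r) = -2F R^{-1/2}(I - R^{-1/2}BL^\dagger B^T R^{-1/2})R^{-1/2}F$, which is negative semidefinite because the middle factor is a projection. Your argument is shorter and more elementary, and sidesteps the matrix calculus entirely; the paper's approach, while longer, yields explicit expressions for the gradient and Hessian of the energy in terms of the electric flow and the transfer-impedance matrix, which are classical formulas of independent interest (though the Hessian formula is not reused elsewhere in the paper). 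One minor slip: the energy can drop to $0$ when resistances vanish along an $s$-$t$ \emph{path}, not a cut, but this does not affect your argument since all you need is that the infimum is bounded below.
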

\begin{corollary}
\label{lemma:gconcave}
For diagonal matrix $C$ with positive entries on the diagonal, define $g_q: \R_{\ge0}\to \R$ as \[ g_q(W) \defeq \max_{\substack{r' \ge 0 \\ \|Cr'\|_q \le W}} \log \E_{r+r'}(\hf_{r+r'}) - \log \E_r(\hf_r) \] for some $q \in [1,\infty],$ where $\hf_{r+r'}$ and $\hf_r$ are the electric $\chi$-flows for resistances for $r+r'$ and $r$ respectively. Then $g_q$ is concave and $g_q(W)/W$ is decreasing in $W$.
\end{corollary}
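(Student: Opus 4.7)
The plan is to deduce the corollary directly from Lemma \ref{lemma:energyconcave} together with two standard convex analysis facts: that the log of a positive concave function is concave, and that for any concave $h$ with $h(0)=0$ the ratio $h(W)/W$ is nonincreasing in $W>0$.

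First I would establish that the objective $h(r') \defeq \log \E_{r+r'}(\hf_{r+r'})$ is concave in $r' \in \R^E_{\ge 0}$. Lemma \ref{lemma:energyconcave} applied with the shifted resistance vector $r+r'$ yields concavity of $\E_{r+r'}(\hf_{r+r'})$ in $r'$. Since $\E_{r+r'}(\hf_{r+r'}) > 0$ (because $r_e > 0$ and $\chi \neq 0$) and $\log$ is concave and monotonically increasing, the composition is concave. Subtracting the constant $\log \E_r(\hf_r)$ preserves concavity, so $r' \mapsto h(r') - \log \E_r(\hf_r)$ is concave.

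Next I would show $g_q$ is concave. Fix $W_1, W_2 \ge 0$ and $\lambda \in [0,1]$, and let $r_1', r_2'$ be (near)-optimal for $g_q(W_1), g_q(W_2)$ respectively. Then $r' \defeq \lambda r_1' + (1-\lambda) r_2' \ge 0$, and by the triangle inequality for $\|\cdot\|_q$,
\[
\|Cr'\|_q \le \lambda \|Cr_1'\|_q + (1-\lambda)\|Cr_2'\|_q \le \lambda W_1 + (1-\lambda) W_2,
\]
so $r'$ is feasible for the budget $\lambda W_1 + (1-\lambda) W_2$. Combined with the concavity of $h$ in the previous step, this gives
\[
g_q(\lambda W_1 + (1-\lambda) W_2) \ge h(r') - \log \E_r(\hf_r) \ge \lambda g_q(W_1) + (1-\lambda) g_q(W_2),
\]
proving concavity of $g_q$. (If near-optimizers must be used because the supremum is not attained, one takes the obvious limit.)

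Finally, for monotonicity of $g_q(W)/W$, note that $g_q(0) = 0$ since $r' = 0$ is the only feasible vector at budget $0$ (as $C$ has positive diagonal). For any $0 < W_1 \le W_2$ write $W_1 = (W_1/W_2) W_2 + (1 - W_1/W_2) \cdot 0$; concavity of $g_q$ together with $g_q(0)=0$ yields $g_q(W_1) \ge (W_1/W_2) g_q(W_2)$, i.e. $g_q(W_1)/W_1 \ge g_q(W_2)/W_2$. The main (and only nontrivial) obstacle is the first step, which has already been handled by Lemma \ref{lemma:energyconcave}; the rest is routine convex analysis.
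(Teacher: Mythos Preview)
Your proof is correct and follows essentially the same approach as the paper: both arguments first use Lemma~\ref{lemma:energyconcave} to get concavity of $r' \mapsto \log \E_{r+r'}(\hf_{r+r'})$, then check that convex combinations of feasible $r'$ stay feasible via the triangle inequality for $\|\cdot\|_q$, and finally deduce that $g_q(W)/W$ is nonincreasing from concavity together with $g_q(0)=0$. Your treatment is slightly more careful in handling the possibility that the maximum is not attained and in spelling out the last step explicitly.
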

We prove these in \cref{proofs:concave}. Now, we fix a parameter $d$ and compute a $W$ such that $g(W) \le dW$ assuming that an exact linear time solver exists for \cref{eq:mainprobinf}. We do this for two reasons. First, as $g(W) \le dW$ we know that the logarithm of electric energy increased by $dW$ while our total $\ell_1$ norm of weights increased by at most $W$. This will allows us to control the total weight change by bounding the electric energy. Second, we can leverage \cref{lemma:gconcave}, \cref{lemma:energyboost}, and \cref{lemma:weightvsres} to control $\|\rho\|_{w,\infty}$ where $\rho$ is the new congestion vector induced after the weight changes. Essentially, we argue that if there is an edge $e$ where $\rpe$ or $\rme$ is high, then by \cref{lemma:energyboost} that we can get a significant energy increase by increasing $r_e$. We bound the weight increase through \cref{lemma:weightvsres}, and use concavity (\cref{lemma:gconcave}) and our choice of $W$ to bound $\rpe$ and $\rme$. This gives a bound on $\normrho_{w,\infty}.$

Unfortunately, we do not have an exact solver for \cref{eq:mainprobinf}, but due to recent work of Kyng \etal~\cite{KPSW19} on optimization of smoothed $\ell_2$-$\ell_p$ flows we have a high accuracy solver for a similar problem. We have the following theorems, which show that we can find flows minimizing combinations of the $2$ and $p$ norms to high accuracy in almost linear time. Precisely, \cref{thm:l2lpp} was stated with errors $\frac{1}{\poly(m)}$ in Theorem 1.1 in \cite{KPSW19}, but the analysis shows that error $\err$ is achievable. We elaborate in \cref{sec:KPSWpoly} in the appendix.

\begin{theorem}[Theorem 1.1 in \cite{KPSW19}]
\label{thm:l2lpp}
Consider $p \in (\omega(1), (\log n)^{2/3-o(1)})$, $r \in \R^E_{\ge0}$, demand vector $d\in \R^V$, and initial solution $f_0\in\R^E$ such that all parameters are bounded by $2^{\poly(\log m)}$ and $B^Tf_0=d$. For a flow $f$, define 
\[ \val(f) \defeq \left(\sum_{e \in E} r_ef_e^2\right) + \|f\|_p^p
\enspace \text{ and } \enspace
OPT \defeq \min_{B^Tf=d} \val(f). \] 
There is an algorithm that in $m^{1+o(1)}$ time computes a flow $f$ such that $B^Tf=d$ and
\[ \val(f)-OPT \le \err(\val(f_0)-OPT) + \err. \] 
\end{theorem}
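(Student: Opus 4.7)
The plan is to derive \cref{thm:l2lpp} by invoking the algorithm of \cite{KPSW19} essentially as a black box, and then verifying that the convergence analysis there actually delivers the sharper error $\err = 2^{-\poly(\log m)}$ claimed here rather than merely the $\poly(m)^{-1}$ stated in their Theorem 1.1. The runtime claim, the form of $\val$, and the approximate feasibility treatment will all be inherited directly from \cite{KPSW19}; the only new content is re-examining the dependence of the iteration count and per-iteration precision on the target accuracy.

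Concretely, I would first put the problem in the form treated by \cite{KPSW19}: by writing any feasible flow as $f = f_0 + \tilde f$ with $B^T \tilde f = 0$, the objective becomes the sum of a weighted $\ell_2$ term and an $\ell_p$ term (both shifted by $f_0$) over the cycle space, which is exactly the smoothed $\ell_2$-$\ell_p$ flow problem solved in \cite{KPSW19}. The algorithm of \cite{KPSW19} is an iterative refinement scheme: each outer iteration reduces a residual potential by a factor $1 - 1/\poly(\log m)$ using an $m^{1+o(1)}$-time subroutine that approximately solves a residual $\ell_p$ problem via low-stretch spanning trees and $p$-norm flow sparsifiers. Consequently the gap $\val(f_k) - OPT$ contracts geometrically, so after $T$ iterations the additive error scales as roughly $(1 - 1/\poly(\log m))^T \cdot (\val(f_0) - OPT)$.

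To get the relative error factor $\err$ in front of $\val(f_0) - OPT$, it therefore suffices to take $T = \poly(\log m)$ iterations; this is a factor $\poly(\log m)$ more iterations than needed for error $\poly(m)^{-1}$, but since each iteration costs $m^{1+o(1)}$ the total cost remains $m^{1+o(1)} \cdot \poly(\log m) = m^{1+o(1)}$, because the $o(1)$ in the exponent absorbs polylogarithmic factors. The additive $\err$ term (independent of $f_0$) then arises from the usual rounding-error contribution from running each subproblem at finite precision; setting the precision of every internal solver to $2^{-\poly(\log m)}$ bits of accuracy controls this, and again changes the per-iteration cost by at most a $\poly(\log m) = m^{o(1)}$ factor.

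The main obstacle, which is what \cref{sec:KPSWpoly} is for, is the careful bookkeeping needed to confirm that every internal accuracy parameter in \cite{KPSW19} (stretch bounds for low-stretch trees, sparsification error, bit-length of intermediate vectors, the tolerance inside the $p$-norm residual solver, etc.) can be tightened from $\poly(m)^{-1}$ to $\err$ while keeping the per-iteration cost at $m^{1+o(1)}$. I would do this by reading through the proof of Theorem 1.1 of \cite{KPSW19} and replacing each occurrence of a $1/\poly(m)$ accuracy target by $\err$, checking that each such replacement only blows up by $\poly(\log m)$-factors. This is routine but tedious; since the algorithm has no steps whose cost depends polynomially on $\log(1/\eps)$ (all dependences being polylogarithmic), no step is destroyed by tightening the accuracy, and the resulting bound is exactly as stated.
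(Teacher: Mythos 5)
Your proposal matches the paper's justification in \cref{sec:KPSWpoly}: both observe that the iterative refinement scheme of \cite{KPSW19} (their Theorem~3.7 and the $\RecursivePreconditioning$ subroutine) contracts the residual geometrically, so tightening the target accuracy from $1/\poly(m)$ to $\err$ merely multiplies the number of outer iterations by a $\poly(\log m) = m^{o(1)}$ factor while leaving the per-iteration cost unchanged, keeping the total runtime $m^{1+o(1)}$. The remaining bookkeeping (that every internal tolerance in \cite{KPSW19} can be tightened to $\err$ at only polylogarithmic extra cost) is exactly what the paper leaves implicit, and your acknowledgement of that deferral is consistent with the paper's treatment.
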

Through a standard reduction, we can also solve a ``homogeneous" smoothed $\ell_2$-$\ell_p$ flow problem.
\begin{theorem}[Homogenous smoothed $\ell_2$-$\ell_p$ solver]\label{thm:l2lp}
Consider $p \in (\omega(1), (\log n)^{2/3-o(1)})$, $r \in \R^E_{\ge0}$, demand vector $d\in \R^V$, and initial solution $f_0\in\R^E$ such that all parameters are bounded by $2^{\poly(\log m)}$ and $B^Tf_0=d$. For a flow $f$, define  \[ \val(f) \defeq \left(\sum_{e \in E} r_ef_e^2\right) + \|f\|_p^2
\enspace \text{ and } \enspace
 OPT \defeq \min_{B^Tf=d} \val(f)
 ~. 
 \]
There is an algorithm that in $m^{1+o(1)}$ time computes a flow $f$ such that $B^Tf=d$ and
\[ \val(f)-OPT \le \err(\val(f_0)-OPT) + \err. \]
\end{theorem}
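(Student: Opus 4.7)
The plan is to reduce the homogeneous $\ell_2$-$\ell_p^2$ problem to $\poly(\log m)$ calls of the inhomogeneous solver of \cref{thm:l2lpp} via a Lagrangian-and-scaling argument. The key observation is that, since $\nabla \|f\|_p^2 = 2\|f\|_p^{2-p} |f|^{p-1} \mathrm{sgn}(f)$, the minimizer $f^*$ of $\val(f) = \sum_e r_e f_e^2 + \|f\|_p^2$ subject to $B^T f = d$ coincides with the minimizer of $\sum_e r_e f_e^2 + \lambda^* \|f\|_p^p$ under the same constraint, provided $\lambda^* = (2/p)(\tau^*)^{2-p}$ with $\tau^* \defeq \|f^*\|_p$. (Equivalently, introduce an auxiliary variable $t \ge \|f\|_p$, rewrite the constraint as $t^p \ge \|f\|_p^p$, and eliminate $t$ via its KKT stationarity.) Thus if $\tau^*$ were known, a single call to \cref{thm:l2lpp} --- after absorbing $\lambda^*$ into a rescaling of $r$ and $d$ via the change of variables $g = \lambda^{-1/(p-2)} f$, which keeps all parameters $2^{\poly(\log m)}$-bounded --- would already yield an approximate minimizer of $\val$.

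Since $\tau^*$ is not known a priori, I would find it by bisection. Letting $\hat f(\tau)$ denote the exact inhomogeneous minimizer with $\ell_p^p$-weight $(2/p)\tau^{2-p}$, the envelope theorem shows that $\|\hat f(\tau)\|_p$ is continuous and monotonically increasing in $\tau$, interpolating between the pure $\ell_p$-minimizer as $\tau \to 0^+$ and the pure $\ell_2$-minimizer as $\tau \to \infty$; hence the fixed-point equation $\|\hat f(\tau)\|_p = \tau$ has a unique solution, namely $\tau^*$. Since all input parameters lie in $[2^{-\poly(\log m)}, 2^{\poly(\log m)}]$, so does $\tau^*$, and $\poly(\log m)$ bisection steps are enough to localize it. Each step invokes \cref{thm:l2lpp} in $m^{1+o(1)}$ time, keeping the total runtime $m^{1+o(1)}$.

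The main obstacle will be propagating the additive-plus-multiplicative error guarantee of \cref{thm:l2lpp} on each inhomogeneous sub-problem into the analogous guarantee on $\val$. I would handle this by showing that for $\tau$ within a small multiplicative factor of $\tau^*$, the inhomogeneous objective differs from $\val$ by at most a polynomially bounded multiple of $\err$: this uses a second-order expansion around $\tau^*$, where the first-order term vanishes by the KKT matching above, together with concavity of the dual envelope $\phi(\lambda) = \min_{B^T f = d}(\sum_e r_e f_e^2 + \lambda \|f\|_p^p)$ and boundedness of all parameters. Tightening the bisection tolerance and the inner-solver tolerance each by a $\poly(\log m)$ factor beyond $\err$ --- which adds only a $\poly(\log m)$ runtime overhead --- then delivers the claimed bound $\val(f) - OPT \le \err(\val(f_0) - OPT) + \err$.
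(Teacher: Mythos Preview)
Your proposal is correct and essentially mirrors the paper's own argument: the paper likewise reduces to \cref{thm:l2lpp} by binary-searching over a Lagrange-type coefficient $C$ on the $\|f\|_p^p$ term (your $\lambda$, reparametrized via $C=(2/p)\tau^{2-p}$), proving monotonicity of $C\cdot h'(g(x_C))$ --- which is exactly your fixed-point condition $\|\hat f(\tau)\|_p=\tau$ --- and then checking that the inner solver's error propagates with only $2^{\poly(\log m)}$ blowup. The paper packages this as a generic reduction lemma from minimizing $f+C_0\,g$ to minimizing $f+C\,h(g)$ with $h(x)=x^{p/2}$, and establishes the monotonicity via an explicit Hessian computation rather than the envelope/concave-dual argument you sketch, but the content is the same.
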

We prove \cref{thm:l2lp} by a reduction to \cref{thm:l2lpp} in \cref{proofs:l2lp}. Essentially, we show that for positive convex functions $f$ and $g$ we can compute a minimizer for $f+g$ given an oracle for minimizing $f+Cg^{p/2}$ for any constant $C$.

Now, we explain how to apply the results of \cref{thm:l2lp}. First, we smooth the constraint in \cref{eq:origprob} to a $\ell_q$ constraint for $q \approx 1 + \frac{1}{\sqrt{\log m}}$ instead of an $\ell_1$ constraint. Precisely, we proceed as follows. Below, $p = \lceil\sqrt{\log m}\rceil$ and $q$ is such that $\ell_p$ and $\ell_q$ are dual and $p^{-1} + q^{-1} = 1.$
\begin{align}
&\max_{\substack{r' \ge 0 \\ \|Cr'\|_q \le W}} \E_{r+r'}(\hf_{r+r'}) = \max_{\substack{r' \ge 0 \\ \|Cr'\|_q \le W}} \min_{B^Tf=\chi} f^T(R+R')f = \min_{B^Tf=\chi} \max_{\substack{r' \ge W \\ \|Cr'\|_q \le W}} f^T(R+R')f \label{eq:copy2}\\
&= \min_{B^Tf=\chi} \|f\|_R^2 + W\|C^{-1}f^2\|_p = \min_{B^Tf=\chi} \|f\|_R^2 + W\|C^{-1/2}f\|_{2p}^2, \label{eq:mainprobw}
\end{align}
As $\|Cr'\|_1 \le \|1\|_p \|Cr'\|_q \le m^{o(1)}\|Cr'\|_q$, we can still control the total true $\ell_1$ of weight change through this objective in \cref{eq:mainprobw}. Unfortunately, we still cannot solve \cref{eq:mainprobw}, as the solver in \cref{thm:l2lp} only applies to the situation where the $\ell_p$ part of the objective ($\|f\|_p^2$) has unit coefficients. To resolve this we carefully choose the cost matrix $C$ so that the resulting optimization problem we get has unit weights on the $\ell_p$ part of the objective. This choice creates a need for weight reductions (see \cref{lemma:weightreduce}). We remark that our algorithm can be simplified if we had access to a solver allowing weights in \cref{sec:discussion}.

Before proving \cref{lemma:rhoinf}, we formally show that the results of \cref{thm:l2lp} can give us a solution to a variant of our original energy maximization problem \cref{eq:origprob}.
\begin{lemma}
\label{lemma:energymax}
Let $G$ be a graph with $m$ edges. Let $p = \sqrt{\log m}$ and let $q$ be such that $p^{-1}+q^{-1}=1.$ Let $r \in \R^E_{>0}$ be a vector of resistances. Define \[ OPT \defeq \max_{\substack{\|r'\|_q \le W \\ r' \ge 0}} \E_{r+r'}(\hf_{r+r'}). \] For all $W \ge 0$, there is an algorithm that in $m^{1+o(1)}$ time computes a vector $r' \in \R_{\ge0}^E$ such that $\|r'\|_q \le W$ and
\[ \E_{r+r'}(\hf_{r+r'}) \ge \left(1 - \err\right)OPT - \err. \]
\end{lemma}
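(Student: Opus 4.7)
The plan is to reformulate the problem via Sion's minimax theorem as a smoothed $\ell_2$--$\ell_{2p}$ flow minimization, apply the solver of \cref{thm:l2lp} to it, and extract $r'$ from the approximate optimal flow in closed form via the equality case of H\"older's inequality.

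First, since $f^T(R+R')f$ is linear in $r'$ and the set $\{r' \ge 0 : \|r'\|_q \le W\}$ is convex and compact, Sion's theorem justifies swapping $\min$ and $\max$:
\begin{equation*}
OPT = \min_{B^Tf=\chi}\, \left[\|f\|_R^2 + \max_{r'\ge 0,\|r'\|_q\le W} \sum_e r'_ef_e^2\right] = \min_{B^Tf=\chi}\, \|f\|_R^2 + W\|f\|_{2p}^2,
\end{equation*}
where the inner maximum equals $W\|f\|_{2p}^2$ by the H\"older pairing of $\ell_p$ and $\ell_q$, attained at $r'_e = W\|f\|_{2p}^{2-2p}|f_e|^{2p-2}$. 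Since $2p = 2\sqrt{\log m}$ lies in $(\omega(1), (\log m)^{2/3-o(1)})$, this is admissible for \cref{thm:l2lp}. Taking the warm start $f_0 \defeq \hf_r$ (computable in $\otilde(m)$ time by a single Laplacian solve) with $\val(f_0) \le 2^{\poly(\log m)}$ by the parameter-bound hypothesis, I would invoke \cref{thm:l2lp} to obtain a flow $\tilde{f}$ with $B^T\tilde{f}=\chi$ and $\val(\tilde{f}) - OPT \le \err$ in $m^{1+o(1)}$ time, and then extract
\begin{equation*}
\tilde{r}_e \defeq W\|\tilde{f}\|_{2p}^{2-2p}|\tilde{f}_e|^{2p-2},
\end{equation*}
which satisfies $\|\tilde{r}\|_q=W$ and $\tilde{f}^T(R+\tilde{R})\tilde{f} = \val(\tilde{f})$ by the equality case.

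The main obstacle is the energy lower bound. Because $\hf_{r+\tilde{r}}$ is the $(R+\tilde{R})$-orthogonal projection of $0$ onto the affine subspace $\{f : B^Tf=\chi\}$ containing $\tilde{f}$, the Pythagorean identity gives
\begin{equation*}
\E_{r+\tilde{r}}(\hf_{r+\tilde{r}}) = \val(\tilde{f}) - \|\tilde{f}-\hf_{r+\tilde{r}}\|_{R+\tilde{R}}^2 \ge OPT - \|\tilde{f}-\hf_{r+\tilde{r}}\|_{R+\tilde{R}}^2,
\end{equation*}
so it suffices to bound the slack term by $\err$. Since $\val$ is $2$-strongly convex in $\|\cdot\|_R$, one has $\|\tilde{f}-f^*\|_R^2 \le \err$ where $f^*$ is the exact minimizer; and at exact optimality the first-order KKT conditions give $f^* = \hf_{r+r^*}$ for $r^*$ obtained from $f^*$ by the same extraction formula. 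A continuity argument (quantifying how the extracted resistances change with the flow, and how an electric flow changes under resistance perturbation via standard Laplacian sensitivity bounds) then transfers this to $\|\tilde{f}-\hf_{r+\tilde{r}}\|_{R+\tilde{R}}^2 \le \err \cdot 2^{\poly(\log m)}$. The polynomial blow-up is absorbed by running \cref{thm:l2lp} at a slightly tighter tolerance (still $\err$, after enlarging the polynomial in the exponent), yielding the claimed bound $\E_{r+\tilde{r}}(\hf_{r+\tilde{r}}) \ge (1-\err)OPT - \err$.
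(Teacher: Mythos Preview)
Your proposal is correct and follows essentially the same approach as the paper: apply Sion's minimax theorem to recast energy maximization as the smoothed $\ell_2$--$\ell_{2p}$ flow problem, invoke \cref{thm:l2lp}, and extract $r'$ from the (approximate) optimizer via the H\"older-equality formula, noting that at exact optimality the KKT condition makes $f^*$ electric for $r+r'$. The only minor divergence is in the approximation analysis: the paper (in \cref{sec:approximate}) certifies the energy lower bound via the dual characterization of \cref{lemma:energydual} with the near-feasible potential $\phi^*$, whereas you use the Pythagorean identity plus a flow/resistance sensitivity argument---both routes are valid and absorb the same $2^{\poly(\log m)}$ blow-up into the solver tolerance.
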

\begin{proof}[Proof of \cref{lemma:energymax}]
For the purposes of this proof, we assume that the solver in \cref{thm:l2lp} is an exact solver. We deal with inexactness issues in \cref{sec:approximate}.

Let $R = \diag(r), R' = \diag(r').$ The manipulation in \cref{eq:copy2,eq:mainprobw} with $C = 1$ shows
\begin{equation}
\max_{\substack{\|r'\|_q \le W \\ r' \ge 0}} \E_{r+r'}(\hf_{r+r'}) = \min_{B^Tf=\chi} f^TRf + W\|f\|_{2p}^2. \label{eq:expression}
\end{equation}
Let $f^*$ be the optimum for \cref{eq:expression}, which we can compute in almost linear time by \cref{thm:l2lp}. Define $OPT = (f^*)^TR(f^*) + W\|(f^*)\|_{2p}^2.$ We can compute that 
\begin{equation} \g\left(f^TRf + W\|f\|_{2p}^2\right) = 2\left(R+W\frac{f^{2(p-1)}}{\|f^2\|_p^{p-1}}\right)f. \label{eq:opti} \end{equation}
Optimality conditions give that the gradient computed in \cref{eq:opti} at $f=f^*$ must be is perpendicular to the kernel of $B^T$, i.e. there must exist a vector $\phi$ satisfying
\begin{equation} B\phi = \left(R + W\frac{(f^*)^{2(p-1)}}{\|(f^*)^2\|_p^{p-1}}\right)f^*. \label{eq:phi2} \end{equation}

We set $r' = W\frac{(f^*)^{2(p-1)}}{\|(f^*)^2\|_p^{p-1}}$, so that $\|r'\|_q \le W.$ Also, the flow $f^*$ is electric for resistances $r+r'$ by \cref{eq:phi2}. Hence, $\E_{r+r'}(\hf_{r+r'}) = \E_{r+r'}(f^*) = OPT$ as desired.
\end{proof}

\begin{proof}[Proof of \cref{lemma:rhoinf}]
Let $p = \lceil\sqrt{\log m}\rceil$ and $q$ be such that $p^{-1}+q^{-1}=1$. For resistances $r$, let $\hf_r$ denote the electric $\chi$-flow induced. Write \[ g_q(W) \defeq \max_{\substack{\|r'\|_q \le W \\ r' \ge 0}} \log\E_{r+r'}(\hf_{r+r'})-\log \E_r(\hf_r). \] By \cref{lemma:energymax} with $d = \frac{1}{20000}m^{-6\eta}$ and $W = C_0d^{-1}\log m$ ($C_0$ is some large constant we choose later) we can compute $g_q(W)$ and the resistance changes $r'$ inducing the optimum for $g_q(W)$.

We now argue that $g_q(W)/W < d.$ Note that because $\hf_r$ is a unit flow,
\begin{align*}\E_{r+r'}(\hf_{r+r'}) &\le \hf_r^T(R+R')\hf_r = \E_r(\hf_r) + \hf_r^TR'\hf_r \le \|r'\|_1+\E_r(\hf_r) \\
&\le \|1\|_p\|r'\|_q+\E_r(\hf_r) \le m^{o(1)}W+\E_r(\hf_r).
\end{align*} Therefore, as $\E_r(\hf_r) \ge \frac{1}{m^2U^2}$ as $[\hf_r]_e \ge \frac{1}{m}$ for some edge $e$, and $r_e \ge \frac{1}{U^2}$, we have that
\[ \frac{g_q(W)}{W} \le \frac{\log \E_{r+r'}(\hf_{r+r'})-\log \E_r(\hf_r)}{W} \le \frac{\log\left(1+m^{o(1)}W/\E_r(\hf_r)\right)}{W} \le \frac{\log(1+m^{2+o(1)}U^2W)}{W} < d \]
for our choice $W = C_0d^{-1}\log m,$ and sufficiently large $C_0$ depending on $\eta \le \frac12$ and $U \le \sqrt{m}.$ Now, we have that \[ \|r'\|_1 \le \|1\|_p\|r'\|_q \le \|1\|_pW \le C_0\|1\|_pd^{-1}\log m \le m^{6\eta+o(1)}. \]

Fix an edge $e \in E$. Define the vector $r''$ as $r''_e = r_e$ and $r''_{e'} = 0$ for $e' \neq e.$ Define $W' = \|r+r'+r''\|_q-\|r+r'\|_q \le \|r''\|_q = r''_e.$ By \cref{lemma:gconcave} we have that $\frac{q_g(W+W')}{W+W'} \le \frac{g_q(W)}{W}$. Rearranging this, we have that
 \[ \frac{g_q(W+W')-g_q(W)}{W'} \le \frac{g_q(W)}{W} \le d. \] By \cref{lemma:energyboost} we have that
\begin{equation} g_q(W+W')-g_q(W) \ge \frac{r''_es_e^2\max\{\rpe,\rme\}^2}{2\normrho_{w^\new,2}^2}. \label{eq:adapt} \end{equation} Combining this with the previous equation gives us that \[ \frac{r''_es_e^2\max\{\rpe,\rme\}^2}{\normrho_{w^\new,2}^2} \le 2dW' \le 2dr''_e \le \frac{1}{10000}m^{-6\eta}r''_e. \] Cancelling $r''_e$ and rearranging gives us that $s_e\max\{\rpe,\rme\} \le \frac{1}{100}m^{-3\eta}\normrho_{w^\new,2}$ as desired.

Now, by \cref{lemma:resbound} we have that
\[ \frac{1}{100}m^{-3\eta}\normrho_{w^\new,2} \ge s_e\max\{\rpe,\rme\} \ge \frac{\max\{\rpe,\rme\}^2}{\normrho_{w^\new,2}^2}. \] Rearranging this and applying \cref{cor:rho2morecontrol} gives us \[ \max\{\rpe,\rme\} \le \frac{1}{10}m^{-3\eta/2}\normrho_{w^\new,2}^{3/2} \le m^{-\eta}\normrho_{w^\new,2} 
~.\]
\end{proof}

\section{Potential Simplifications and Open Problems}
\label{sec:discussion}
We briefly remark that if \cref{thm:l2lp} allowed for objectives of the form $f^TRf + \|Cf\|_p^2$ for diagonal matrices $C$, then much of the analysis may be simplified. Precisely, repeating the proof of \cref{lemma:rhoinf} would directly give $\|w'\|_1 \le m^{4\eta+o(1)}U$ without weight reduction. Additionally, weight reduction could not help improve this bound because $\delta\max\{\rpe,\rme\} = \Omega(1)$ in the tight case of our analysis. Also, we would not need the set $S_2$ in the proof of \cref{lemma:finalweight}, as set $S_2$ was defined strictly to analyze edges whose resistance significantly reduced after a weight reduction step. Consequently, optimizing these weighted objectives is a key open problem left by this work.

Finally, we note that if we did not perform weight reduction (\cref{lemma:weightreduce}) in our current proof, then the weight increase per iteration would be $m^{6\eta+o(1)}U^2$ by \cref{lemma:weightreduce} item \ref{item:weightchange}. This would yield a total weight increase throughout the algorithm of $m^{1/2+5\eta+o(1)}U^2$ and a runtime of $m^{7/5+o(1)}U^{2/5}$ after choosing $m^\eta = m^{1/10-o(1)}U^{-2/5}.$ Interestingly, this still improves over the algorithms of \cite{Madry13,Madry16} for small values of $U$.

\section{Acknowledgements}
\label{sec:acknowledgements}

We thank Arun Jambulapati, Michael B. Cohen, Yin Tat Lee, Jonathan Kelner, Aleksander M\k{a}dry, and Richard Peng for helpful discussions.

{\small
\bibliographystyle{abbrv}
\bibliography{refs}}

\appendix

\section{Additional Preliminaries}
\label{sec:appprelim}
\subsection{Convex Optimization Preliminaries}
\label{sec:opt}

In this section, we state some preliminaries for convex optimization. These will be used in \cref{sec:proofs,sec:approximate,sec:numerical}. We assume all functions in this section to be convex. We also work in the $\ell_2$ norm exclusively. Proofs for the results stated can be found in \cite{Nes98}.

\paragraph{Matrices and norms.} We say that a $m\times m$ matrix $M$ is positive semidefinite if $x^TMx \ge 0$ for all $x \in \R^m.$ We say that $M$ is positive definite if $x^TMx > 0$ for all nonzero $x \in \R^m$. For $m \times m$ matrices $A, B$ we write $A \se B$ if $A-B$ is positive semidefinite, and $A \succ B$ is $A-B$ is positive definite. For $m \times m$ positive semidefinite matrix $M$ and vector $x \in \R^m$ we define $\|x\|_M = \sqrt{x^TMx}.$ For $m \times m$ positive semidefinite matrices $M_1, M_2$ and $C > 0$ we say that $M_1 \approx_C M_2$ if $\frac{1}{C} x^TM_1x \le x^TM_2x \le Cx^TM_1x$ for all $x \in \R^m.$

\paragraph{Lipschitz functions.} Here we define what it means for a function $f$ to be Lipschitz and provide a lemma showing its equivalence to a bound on the norm of the gradient.

\begin{definition}[Lipschitz Function]
Let $f: \R^n \to \R$ be a function, and let $\X \subseteq \R^n$ be an open convex set. We say that $f$ is $L_1$-Lipschitz on $\X$ (in the $\ell_2$ norm) if for all $x, y \in \X$ we have that $|f(x) - f(y)| \le L_1\|x-y\|_2.$
\end{definition}

\begin{lemma}[Gradient Characterization of Lipschitz Function]
\label{lemma:gradlip}
Let $f: \R^n \to \R$ be a differentiable function, and let $\X \subseteq \R^n$ be an open convex set. Then $f$ is $L_1$-Lipschitz on $\X$ if and only if for all $x \in \X$ we have that $\|\g f(x)\|_2 \le L_1.$
\end{lemma}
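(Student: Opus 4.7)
The plan is a standard two-direction argument: the forward direction is a pointwise derivative bound from the Lipschitz inequality, and the reverse direction is a fundamental-theorem-of-calculus argument along line segments in the convex set $\X$.

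For the forward direction, I assume $f$ is $L_1$-Lipschitz on $\X$ and fix $x \in \X$. The idea is to recover a bound on $\g f(x)$ by probing it with directional derivatives. For any unit vector $v \in \R^n$, since $\X$ is open, for all sufficiently small $t > 0$ the point $x + tv$ lies in $\X$, so by the Lipschitz hypothesis $|f(x+tv) - f(x)| \le L_1 t \|v\|_2 = L_1 t$. Dividing by $t$ and taking $t \to 0$ yields $|\g f(x) \cdot v| \le L_1$. Choosing $v = \g f(x)/\|\g f(x)\|_2$ (assuming $\g f(x) \ne 0$; otherwise the bound is trivial) gives $\|\g f(x)\|_2 \le L_1$.

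For the reverse direction, I assume $\|\g f(x)\|_2 \le L_1$ for all $x \in \X$ and fix $x, y \in \X$. By convexity of $\X$, the segment $\gamma(t) = x + t(y-x)$ for $t \in [0,1]$ lies entirely in $\X$. Applying the chain rule and the fundamental theorem of calculus to $h(t) = f(\gamma(t))$, I get
\begin{equation*}
f(y) - f(x) = \int_0^1 \g f(\gamma(t)) \cdot (y - x)\, dt.
\end{equation*}
Taking absolute values and applying Cauchy--Schwarz inside the integral, together with the hypothesis $\|\g f(\gamma(t))\|_2 \le L_1$, yields $|f(y) - f(x)| \le L_1 \|y-x\|_2$, which is the Lipschitz bound.

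The only non-routine concern is regularity: differentiability of $f$ on the open set $\X$ justifies both the directional derivative limit in the forward direction and continuous differentiability along $\gamma$ in the reverse direction, so the fundamental theorem of calculus applies. There is no genuine obstacle here; this is a textbook fact about Lipschitz functions on open convex sets, and the main reason it appears in the appendix is to fix notation ($\ell_2$ norm, open convex domain) consistent with the convex optimization preliminaries.
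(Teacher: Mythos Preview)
The paper does not give its own proof of this lemma; the appendix section on convex optimization preliminaries states that ``Proofs for the results stated can be found in \cite{Nes98}.'' Your two-direction argument is the standard textbook proof and is correct.

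One small technical remark on the reverse direction: differentiability of $f$ alone does not imply \emph{continuous} differentiability along $\gamma$, so the fundamental theorem of calculus step deserves a word more. You can either invoke the paper's standing assumption in that section that all functions are convex (a differentiable convex function on an open convex set is automatically $C^1$), or simply replace the integral by the one-dimensional mean value theorem applied to $h(t)=f(\gamma(t))$: since $h$ is differentiable on $(0,1)$ there is $\xi\in(0,1)$ with $f(y)-f(x)=h'(\xi)=\g f(\gamma(\xi))\cdot(y-x)$, and Cauchy--Schwarz finishes. Either fix is immediate and does not affect the substance of your argument.
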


\paragraph{Smoothness and strong convexity.} We define what it means for a function $f$ to be convex, smooth, and strongly convex. We say that a function $f$ is convex on $\X$ if for all $x, y \in \X$ and $0 \le t \le 1$ that $f(tx+(1-t)y) \le tf(x)+(1-t)f(y).$ We say that $f$ is $L_2$-smooth on $\X$ if $\|\g f(x)-\g f(y)\|_2 \le L_2\|x-y\|_2$ for all $x, y \in \X$. We say that $f$ is $\mu$-strongly convex on $\X$ if for all $x, y \in \X$ and $0 \le t \le 1$ that
\[ f(tx+(1-t)y) \le tf(x)+(1-t)f(y)-t(1-t) \cdot \frac{\mu}{2}\|x-y\|_2^2. \]

\begin{lemma}
Let $f:\R^n \to \R$ be a differentiable function, and let $\X \subseteq \R^n$ be an open convex set. Then $f$ is $\mu$-strongly convex on $\X$ if and only if for all $x, y \in \X$ we have that \[ f(y) \ge f(x) + \g f(x)^T(y-x) + \frac{\mu}{2}\|y-x\|_2^2. \] Also, $f$ is $L_2$-smooth on $\X$ if and only if for all $x, y \in \X$ we have that \[ f(y) \le f(x) + \g f(x)^T(y-x) + \frac{L_2}{2}\|y-x\|_2^2. \]
\end{lemma}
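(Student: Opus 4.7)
The plan is to prove each equivalence via a forward implication (integration/limit from the definition) and a reverse implication (summing/combining two instances of the pointwise inequality). I would handle strong convexity first, then mirror the argument for smoothness, being careful about whether convexity of $f$ is needed in each direction.

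For the forward direction of strong convexity ($\Rightarrow$), I would take $x, y \in \X$, let $w_s = x + s(y-x)$ for $s \in (0,1]$, and apply the defining inequality to write
\[
f(w_s) \le (1-s)f(x) + s f(y) - s(1-s)\tfrac{\mu}{2}\|y-x\|_2^2.
\]
Rearranging gives $f(y) \ge f(x) + \frac{f(w_s)-f(x)}{s} + (1-s)\tfrac{\mu}{2}\|y-x\|_2^2$; letting $s \to 0^+$ and using differentiability (the difference quotient tends to the directional derivative $\g f(x)^T(y-x)$) yields the claimed lower bound. For the reverse direction ($\Leftarrow$), fix $t \in [0,1]$ and set $z = tx + (1-t)y$; apply the gradient lower bound at the base point $z$ to both $x$ and $y$:
\[
f(x) \ge f(z) + \g f(z)^T(x-z) + \tfrac{\mu}{2}\|x-z\|_2^2, \qquad f(y) \ge f(z) + \g f(z)^T(y-z) + \tfrac{\mu}{2}\|y-z\|_2^2.
\]
Take the convex combination $t(\cdot) + (1-t)(\cdot)$. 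The $\g f(z)$ cross term vanishes because $t(x-z) + (1-t)(y-z) = 0$, and since $x-z = (1-t)(x-y)$ and $y-z = t(y-x)$, the quadratic terms combine to $t(1-t)\|x-y\|_2^2$, giving exactly the definition of $\mu$-strong convexity.

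For the forward direction of smoothness ($\Rightarrow$), I would use the fundamental theorem of calculus along the segment $x + t(y-x)$:
\[
f(y) - f(x) - \g f(x)^T(y-x) = \int_0^1 \bigl(\g f(x + t(y-x)) - \g f(x)\bigr)^T (y-x)\, dt.
\]
Cauchy--Schwarz plus the Lipschitz bound $\|\g f(x+t(y-x))-\g f(x)\|_2 \le L_2 t\|y-x\|_2$ then gives the quadratic upper bound with constant $\int_0^1 L_2 t\, dt = L_2/2$.

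The reverse direction of smoothness is the main subtlety, because as stated the lemma does not assume $f$ is convex, yet for a general differentiable function the quadratic upper bound alone does not imply Lipschitz gradient. I would invoke the convexity context of this appendix (the lemma is part of the convex-optimization preliminaries) and run the standard "gradient-step" argument: defining $g(z) \defeq f(z) - \g f(x)^T z$, which is convex with minimizer at $x$, apply the quadratic upper bound at $y - \frac{1}{L_2}\g g(y)$ to obtain
\[
\tfrac{1}{2L_2}\|\g f(y) - \g f(x)\|_2^2 \le f(y) - f(x) - \g f(x)^T(y-x),
\]
and symmetrize by swapping $x, y$. Adding the two and applying Cauchy--Schwarz to the combined inequality yields $\|\g f(x)-\g f(y)\|_2 \le L_2 \|x-y\|_2$. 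The main obstacle is exactly this last direction: ensuring the argument is watertight without explicitly invoking convexity, or alternatively flagging that the lemma is implicitly situated in the convex setting assumed throughout the appendix.
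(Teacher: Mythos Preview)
The paper does not actually prove this lemma: the appendix opens with ``Proofs for the results stated can be found in \cite{Nes98}'' and then merely states the result. So there is no proof in the paper to compare against; your argument is the standard textbook one that the citation points to.

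Your proof is essentially correct, and you correctly identified the only nontrivial direction. One remark: the blanket assumption ``We assume all functions in this section to be convex'' at the start of \cref{sec:opt} is exactly what licenses your reverse-smoothness argument, so you need not hedge --- convexity is in force, not merely contextual. A small technical point you glossed over: in the gradient-step argument you evaluate the quadratic upper bound at $z' = y - \tfrac{1}{L_2}\g g(y)$, which need not lie in the open set $\X$ when $\X \neq \R^n$. The usual fix is either to argue on a slightly smaller compact subset and pass to the limit, or to note that in every application in this paper the relevant functions extend smoothly to all of $\R^n$; since the paper is only citing a standard reference here, this level of care is already beyond what the paper itself supplies.
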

We can equivalently view smoothness and strong convexity as spectral bounds on the Hessian of $f$.
\begin{lemma}
\label{lemma:equiv}
Let $f:\R^n \to \R$ be a twice differentiable function, and let $\X \subseteq \R^n$ be an open convex set. Then $f$ is $\mu$-strongly convex on a convex set $\X$ if and only if $\g^2 f(x) \se \mu I$ for all $x \in \X$. $f$ is $L_2$-smooth on $\X$ if and only if $\g^2 f(x) \pe L_2 I$ for all $x \in \X$.
\end{lemma}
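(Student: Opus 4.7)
The plan is to prove both equivalences by connecting a pointwise Hessian bound to the quadratic upper/lower bound characterization of strong convexity and smoothness stated in the immediately preceding (unlabeled) lemma of the section, via Taylor's theorem with integral remainder. I will treat the strong convexity direction in detail; the smoothness direction is identical with all inequalities reversed.

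For the forward implication of the strong convexity statement, suppose $\g^2 f(x) \se \mu I$ for all $x \in \X$. Given any $x, y \in \X$, convexity of $\X$ ensures the segment $\{x + t(y-x) : t \in [0,1]\} \subseteq \X$, so Taylor's theorem with integral remainder applies and yields
\[
f(y) - f(x) - \g f(x)^T(y-x) = \int_0^1 (1-t)\, (y-x)^T \g^2 f(x + t(y-x)) (y-x)\, dt \geq \frac{\mu}{2}\|y-x\|_2^2,
\]
where the final inequality uses $\g^2 f \se \mu I$ pointwise and $\int_0^1 (1-t)\, dt = \tfrac12$. By the previous lemma of the section, this quadratic lower bound is equivalent to $\mu$-strong convexity of $f$ on $\X$.

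For the reverse implication, suppose $f$ is $\mu$-strongly convex on $\X$. Fix any $x \in \X$ and any $v \in \R^n$; since $\X$ is open, $x + tv \in \X$ for all sufficiently small $t > 0$. The preceding lemma gives
\[
f(x + tv) - f(x) - t\, \g f(x)^T v \;\geq\; \frac{\mu t^2}{2} \|v\|_2^2.
\]
A second-order Taylor expansion of the left-hand side at $x$ equals $\tfrac{t^2}{2} v^T \g^2 f(x) v + o(t^2)$. Dividing by $t^2/2$ and letting $t \to 0^+$ gives $v^T \g^2 f(x) v \geq \mu \|v\|_2^2$. Since $v$ was arbitrary, $\g^2 f(x) \se \mu I$, as needed.

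The smoothness statement follows from exactly the same argument with inequalities reversed and $\mu$ replaced by $L_2$; alternatively, one may apply the strong convexity equivalence to $g(x) \defeq \tfrac{L_2}{2}\|x\|_2^2 - f(x)$, whose Hessian is $L_2 I - \g^2 f(x)$ and for which convexity (i.e.\ $0$-strong convexity) is equivalent to $L_2$-smoothness of $f$. There is no substantive obstacle in this proof; the only points of care are justifying the use of Taylor's theorem along a segment contained in $\X$ (forward direction, using convexity of $\X$) and the admissibility of the perturbation $x + tv \in \X$ for all small $t$ (reverse direction, using openness of $\X$).
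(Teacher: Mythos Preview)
Your proof is correct and is the standard textbook argument. The paper does not give its own proof of this lemma; it simply states at the top of \cref{sec:opt} that ``Proofs for the results stated can be found in \cite{Nes98}.'' Your argument via Taylor's theorem with integral remainder (forward direction) and second-order expansion with a limiting argument (reverse direction), combined with the first-order characterizations from the immediately preceding lemma, is precisely the approach one finds in that reference.
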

Smoothness allows us to relate function error and the norm of the gradient.
\begin{lemma}
\label{lemma:smoothgrad}
Let $\X \subseteq \R^n$ be an open convex set, and let $f:\R^n \to \R$ be $L_2$-smooth on $\X.$ Define $x^* = \argmin_{x \in \R^n} f(x)$, and assume that $x^*$ exists and $x^* \in \X.$ Then for all $x \in X$ we have that
\[ \|\g f(x)\|_2^2 \le 2L_2(f(x)-f(x^*)). \]
\end{lemma}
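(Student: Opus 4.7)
The plan is to apply the standard ``descent lemma'' argument that turns $L_2$-smoothness into a quadratic upper bound relating the function value at the gradient step to the function value and gradient norm at the current point. Concretely, fix $x \in \X$ and set $y := x - \frac{1}{L_2}\nabla f(x)$. By the upper bound characterization of $L_2$-smoothness (the second half of the previous lemma in the excerpt), applied between $x$ and $y$, we have
\[ f(y) \le f(x) + \nabla f(x)^T(y-x) + \frac{L_2}{2}\|y-x\|_2^2. \]
Substituting $y-x = -\frac{1}{L_2}\nabla f(x)$ and simplifying, the linear and quadratic terms combine to give $f(y) \le f(x) - \frac{1}{2L_2}\|\nabla f(x)\|_2^2$.

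Next, I would use that $x^*$ is a global minimizer of $f$ over $\R^n$ to conclude $f(x^*) \le f(y)$, and hence $f(x^*) \le f(x) - \frac{1}{2L_2}\|\nabla f(x)\|_2^2$. Rearranging gives $\|\nabla f(x)\|_2^2 \le 2L_2(f(x) - f(x^*))$, which is the conclusion.

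The only subtle point, and the one I expect to be the main obstacle, is that the smoothness inequality of the previous lemma is only assumed to hold between points in $\X$, so applying it between $x$ and $y$ requires $y \in \X$. Since $\X$ is open and $x \in \X$, any sufficiently small gradient step stays in $\X$; more importantly, since $x^* \in \X$ is an unconstrained global minimizer and $\X$ is open, the first-order optimality condition gives $\nabla f(x^*) = 0$. If a full step $y \not\in \X$, I would instead parametrize $g(t) := f(x - t\nabla f(x))$ along the largest interval $[0,T] \subseteq \X$, apply smoothness on each sub-interval (which is permissible by convexity of $\X$) to get $g(T) - g(0) \le -T\|\nabla f(x)\|_2^2 + \frac{L_2 T^2}{2}\|\nabla f(x)\|_2^2$, optimize $T$, and then conclude via $f(x^*) \le g(T)$. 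In the usual applications one takes $\X = \R^n$ so this technicality disappears, and the clean one-shot argument in the first paragraph suffices.
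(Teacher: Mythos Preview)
The paper does not actually prove this lemma: the preliminaries section explicitly defers all proofs to Nesterov's textbook \cite{Nes98}. Your descent-lemma argument --- take the gradient step $y = x - \tfrac{1}{L_2}\nabla f(x)$, apply the quadratic upper bound from smoothness, and lower-bound $f(y)$ by $f(x^*)$ --- is exactly the standard proof found there, so your main approach is the expected one.

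One remark on the $\X$-technicality you raise. Your diagnosis is correct, but the proposed fix does not close the gap: if the full step leaves $\X$, then the admissible step sizes are capped at some $T_{\max} < 1/L_2$, and you cannot ``optimize $T$'' beyond that cap; plugging in any $T < 1/L_2$ into $-T\|\nabla f(x)\|_2^2 + \tfrac{L_2 T^2}{2}\|\nabla f(x)\|_2^2$ yields a constant strictly worse than $2L_2$, so the inequality would not follow. In Nesterov's setting (and in the paper's only use of this lemma, in the proof of \cref{thm:l2lp}) smoothness effectively holds on all of $\R^n$, so your clean one-step argument is what is intended and suffices there.
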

Strong convexity allows us to relate function error and distance to the optimal point.
\begin{lemma}
\label{lemma:strong}
Let $\X \subseteq \R^n$ be an open convex set, and let $f:\R^n \to \R$ be $\mu$-strongly convex on $\X.$ Define $x^* = \argmin_{x \in \R^n} f(x)$, and assume that $x^*$ exists and $x^* \in \X.$ Then for all $x \in \X$ we have that
\[ \|x-x^*\|_2^2 \le \frac{2(f(x)-f(x^*))}{\mu}. \]
\end{lemma}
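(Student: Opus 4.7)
The plan is to apply the first-order characterization of $\mu$-strong convexity (stated just before this lemma in the excerpt) at the point $x^*$, exploiting the fact that $x^*$ is a minimizer and therefore has vanishing gradient.

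Concretely, since $f$ is $\mu$-strongly convex on the open convex set $\X$ and is assumed differentiable (it is implicit from the preceding lemma that we may use the gradient characterization), the first-order inequality gives, for any $y \in \X$,
\[
f(y) \;\ge\; f(x^*) + \nabla f(x^*)^\top (y - x^*) + \frac{\mu}{2}\|y - x^*\|_2^2.
\]
Because $x^*$ is an unconstrained minimizer of $f$ over $\R^n$ and lies in the open set $\X$, the first-order optimality condition $\nabla f(x^*) = 0$ holds. Substituting $y = x$ (which is permitted since $x \in \X$) and dropping the linear term yields
\[
f(x) \;\ge\; f(x^*) + \frac{\mu}{2}\|x - x^*\|_2^2.
\]
Rearranging gives the claimed bound $\|x - x^*\|_2^2 \le 2(f(x) - f(x^*))/\mu$.

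There is essentially no obstacle here; the only subtlety worth noting is justifying $\nabla f(x^*) = 0$, which uses that $x^* \in \X$ and $\X$ is open, so $x^*$ is an interior minimizer and Fermat's rule applies. If one preferred to avoid differentiability altogether, the same conclusion follows from the zeroth-order definition of strong convexity: applying the definition with $y = x^*$ and an arbitrary $t \in (0,1)$ gives
\[
f(tx + (1-t)x^*) \;\le\; t f(x) + (1-t) f(x^*) - t(1-t)\,\frac{\mu}{2}\|x - x^*\|_2^2,
\]
and since the left-hand side is at least $f(x^*)$ by optimality, one obtains $(1-t)(f(x) - f(x^*)) \ge t(1-t)(\mu/2)\|x - x^*\|_2^2$; dividing by $1-t$ and letting $t \to 1^-$ recovers the same inequality. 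Either route is a two-line argument.
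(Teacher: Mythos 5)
Your main argument is correct and is the standard proof of quadratic growth from strong convexity; note that the paper does not give its own proof of this lemma (it defers all the lemmas in that subsection to a reference), so there is nothing to compare against beyond observing that your route --- apply the first-order inequality at $x^*$ and use $\nabla f(x^*) = 0$, which holds because $x^*$ is an interior minimizer of the open set $\X$ --- is exactly what the cited source does.

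One small slip in your alternative zeroth-order argument: after using $f(tx+(1-t)x^*) \ge f(x^*)$ and cancelling, the correct intermediate inequality is
\[
t\bigl(f(x)-f(x^*)\bigr) \;\ge\; t(1-t)\,\tfrac{\mu}{2}\|x-x^*\|_2^2,
\]
with a factor of $t$ on the left, not $(1-t)$ as you wrote; one should then divide by $t$ and let $t \to 0^+$. Your version (dividing by $1-t$ and letting $t \to 1^-$) happens to land on the same final bound because the two slips compensate, but the inequality as stated does not follow from the displayed strong-convexity inequality.
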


\subsection{More on Electric Flow and Laplacian Systems}
\label{sec:lap}
Laplacian systems can be solved to high accuracy in nearly linear time \cite{ST04, KMP10, KMP11, KOSZ13, CKMPPRX14, KLPSS16, KS16}.
\begin{theorem} 
\label{thm:lap}
Let $G$ be a graph with $n$ vertices and $m$ edges. Let $r \in \R_{>0}^E$ denote edge resistances. For any demand vector $d$ and $\eps>0$ there is an algorithm which computes in $\O(m \log \eps^{-1})$ time \emph{potentials} $\phi$ such that $\|\phi - \phi^*\|_L \le \eps\|\phi^*\|_L$, where $L$ is the Laplacian of $G$, and $\phi^* = L^\dagger d$ are the true potentials determined by the resistances $r$. An approximate energy minimizing flow $\hf$ can be computed in $\O(m)$ time.
\end{theorem}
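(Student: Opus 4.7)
The plan is to invoke this essentially as a quoted result, since \cref{thm:lap} is a standard consequence of the long line of work on nearly linear time Laplacian system solvers cited just before the statement: \cite{ST04, KMP10, KMP11, KOSZ13, CKMPPRX14, KLPSS16, KS16}. The cleanest path is to appeal to a single reference (for example \cite{KS16}) that already states a solver with the exact runtime guarantee $\O(m \log \eps^{-1})$ and the exact error bound $\|\phi - \phi^*\|_L \le \eps \|\phi^*\|_L$, which together yield the first half of the theorem verbatim.

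For the second half, I would recover an approximate energy-minimizing flow from the approximate potentials via the standard identity $\hat f = R^{-1} B \phi$. Because $B$ has $2m$ nonzeros and $R$ is diagonal, this matrix-vector product is computed in $O(m)$ time, which matches the claimed $\O(m)$ runtime for flow recovery. One then verifies that the Laplacian-norm guarantee on $\phi$ translates into the corresponding energy guarantee on $\hat f$: using $\E_r(R^{-1} B \phi) = \phi^T B^T R^{-1} B \phi = \|\phi\|_L^2$ and similarly for $\phi^*$, the multiplicative closeness $\|\phi - \phi^*\|_L \le \eps \|\phi^*\|_L$ translates into the flow $\hat f$ being an approximate energy minimizer in the usual sense.

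The only subtle point — and it is not really an obstacle here, just something to state — is that $\hat f = R^{-1} B \phi$ does not exactly route the demand $d$; it routes $B^T \hat f = L \phi \approx L \phi^* = d$. Since the lemma is phrased as producing an approximate energy minimizing flow rather than an exact $d$-flow, this is fine; if an exact $d$-flow were needed, one could correct the residual demand $d - L\phi$ by a known rounding step (e.g.\ via a spanning tree as in \cite{KOSZ13}), at an additive $O(m)$ cost that does not affect the stated bounds. Details of how the approximate nature of the Laplacian solver propagates through the main algorithm are deferred to \cref{sec:approximate}, so at this point the statement is proved simply by citing the solver and noting the $O(m)$-time conversion from potentials to flow.
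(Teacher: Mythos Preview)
Your proposal is correct and matches the paper's treatment: \cref{thm:lap} is stated in the paper without proof, simply as a consequence of the cited Laplacian-solver literature, so invoking it as a quoted result is exactly what the paper does. Your additional remarks on recovering the flow via $\hat f = R^{-1}B\phi$ in $O(m)$ time and on the residual-demand subtlety are accurate and go slightly beyond what the paper spells out, but they are consistent with it.
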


Electric flows are \emph{given by potentials}. This formula is called \emph{Ohm's law} --- for every edge $e = (u,v)$ we have that $\hf_e = \frac{\phi_v-\phi_u}{r_e}.$ This gives us an alternate formula for electric energy:
 \[ \E_r(\hf) = \sum_{e\in E} r_e\hf_e^2 = \sum_{e\in E} \frac{(\phi_u-\phi_v)^2}{r_e}.\] This formula gives us a dual characterization of the energy of the optimal electric flow in terms of potentials. Further, as the following standard lemma shows, we can use any $\phi'$ with $d^T\phi' = 1$ to lower-bound energy (see, e.g. \cite{Madry13} Lemma 2.1, for proof).
\begin{lemma}
\label{lemma:energydual}
For any graph $G$ with edge resistances $r_e$, we have that
\begin{equation} \label{eq:phi} \frac{1}{\E_r(\hf)} = \min_{d^T\phi'=1} \sum_{e=(u,v)\in E} \frac{(\phi'_u-\phi'_v)^2}{r_e}, \end{equation} where $\hf$ is the electric $d$-flow given by the resistances $r$. Further, if $\phi = L^\dagger d$ are the potentials induced by the optimal electric flow, then the minimizer of \cref{eq:phi} is given by $\phi' = \phi/\E_r(\hf).$
\end{lemma}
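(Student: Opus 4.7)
The plan is to prove both the identity and the form of the minimizer simultaneously via a Cauchy--Schwarz argument combined with Ohm's law. The key observation is that any $\phi' \in \R^V$ with $d^T \phi' = 1$ can be paired with the electric flow $\hf$ by writing $1 = d^T \phi' = (B^T \hf)^T \phi' = \hf^T (B\phi')$, and that $B\phi'$ has entries $(B\phi')_e = \phi'_v - \phi'_u$ for $e = (u,v)$. I would then split the summand as $(\phi'_v - \phi'_u) \hf_e = \tfrac{\phi'_v - \phi'_u}{\sqrt{r_e}} \cdot \sqrt{r_e}\,\hf_e$ and apply Cauchy--Schwarz to get
\[
1 \;=\; \Bigl(\sum_{e=(u,v)} \tfrac{\phi'_v - \phi'_u}{\sqrt{r_e}} \cdot \sqrt{r_e}\,\hf_e\Bigr)^2 \;\le\; \Bigl(\sum_{e=(u,v)} \tfrac{(\phi'_u - \phi'_v)^2}{r_e}\Bigr) \cdot \E_r(\hf),
\]
which immediately gives the ``$\ge$'' direction of \cref{eq:phi}.

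Next I would identify when equality in Cauchy--Schwarz holds: there must exist a scalar $c$ with $\tfrac{\phi'_v - \phi'_u}{\sqrt{r_e}} = c \sqrt{r_e}\,\hf_e$ for all edges $e$, i.e.\ $\phi'_v - \phi'_u = c\, r_e \hf_e$. By Ohm's law $r_e \hf_e = \phi_v - \phi_u$ where $\phi = L^\dagger d$, so equality forces $\phi'$ to equal $c\phi$ plus an additive constant on each connected component. Imposing the normalization $d^T \phi' = 1$ along with the fact that $d^T \phi = \phi^T L \phi = \E_r(\hf)$ (via $\hf = R^{-1}B\phi$) pins down $c = 1/\E_r(\hf)$, showing that $\phi' = \phi / \E_r(\hf)$ achieves equality and that the minimum in \cref{eq:phi} is exactly $1/\E_r(\hf)$. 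The additive-constant ambiguity on connected components is harmless because $d$ sums to zero on each component, so $d^T \phi'$ is invariant under such shifts.

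I expect the main subtlety to be handling the pseudoinverse and connectedness carefully: one must check that the constraint $d^T\phi' = 1$ is consistent (requiring $d$ to have total zero, which is automatic for demand vectors arising as $B^T f$), and one must confirm that the Cauchy--Schwarz equality case does indeed produce a vector in the affine subspace $\{d^T\phi' = 1\}$. Both reduce to a short verification using $d \in \mathrm{range}(L) = \mathrm{range}(B^T)$. The rest of the argument is essentially mechanical manipulation of $L$, $B$, $R$, and the identity $\E_r(\hf) = d^T \phi$.
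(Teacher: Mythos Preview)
Your Cauchy--Schwarz argument is correct and complete. The paper does not actually supply its own proof of this lemma; it states the result as standard and defers to \cite{Madry13}, Lemma~2.1. Your approach---pairing an arbitrary $\phi'$ with the electric flow via $1 = \hf^T(B\phi')$, applying Cauchy--Schwarz with the $\sqrt{r_e}$ splitting, and reading off the equality case through Ohm's law---is exactly the standard proof one finds in that literature, so there is nothing to compare.
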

\section{Missing proofs}
\label{sec:proofs}
\subsection{Proof of \cref{lemma:rhoenergy}}
\label{proofs:rhoenergy}
\begin{proof}
We have that
\[ \wp_e |\rp_e|^2 + \wm_e |\rm_e|^2 = \wp_e \cdot \frac{\hf_e^2}{(\up_e-f_e)^2} + \wm_e \cdot \frac{\hf_e^2}{(\um_e+f_e)^2} = r_e\hf_e^2 \] by the definition of $r_e$ (see \cref{eq:defr}). Now we have that
\[ \normrho_{w,2}^2 = \sum_{e\in E} \left(\wp_e |\rp_e|^2 + \wm_e |\rm_e|^2 \right) = \sum_{e\in E} r_e\hf_e^2 = \E_r(\hf). \]
\end{proof}
\subsection{Proof of \cref{lemma:progress}}
\label{proofs:progress}
First, we state a Taylor approximation result from \cite{Madry16} that will be useful for our analysis.
\begin{lemma}
\label{lemma:taylor}
If, $u_1,u_2,w_1,w_2 > 0$,  $u \defeq \min\{u_1, u_2\}$, and $x \in \R$ satisfies $|x| \le u/4$, then 
\[ \left|\left(\frac{w_1}{u_1-x}+\frac{w_2}{u_2+x}\right) - \left(\frac{w_1}{u_1}-\frac{w_2}{u_2} + \left(\frac{w_1}{u_1^2}+\frac{w_2}{u_2^2}\right)x\right)\right| \le \left(\frac{5w_1}{u_1^3}+\frac{5w_2}{u_2^3}\right)x^2. \]
\end{lemma}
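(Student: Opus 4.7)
The statement is a second-order Taylor remainder estimate for a function that appears throughout the IPM as the gradient of the weighted logarithmic barrier (interpreting the term as $\frac{w_1}{u_1-x}-\frac{w_2}{u_2+x}$, which is the shape consistent with the linear term on the right-hand side). My plan is a direct Taylor expansion with explicit remainder, treating each of the two terms independently and then combining.

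The plan is to expand each summand as a geometric series around $x=0$. Specifically, for $|x|<u_1$ I would write
\[
\frac{w_1}{u_1-x}=\frac{w_1}{u_1}\cdot\frac{1}{1-x/u_1}=\frac{w_1}{u_1}+\frac{w_1}{u_1^2}x+\frac{w_1 x^2}{u_1^3}\cdot\frac{1}{1-x/u_1},
\]
and analogously for $\frac{w_2}{u_2+x}$, isolating the $O(x^2)$ remainder in closed form. Subtracting these two identities gives exactly the linear part stated on the right, and isolates the remainder as $\frac{w_1 x^2}{u_1^3(1-x/u_1)} \pm \frac{w_2 x^2}{u_2^3(1+x/u_2)}$ (sign depending on the sign convention of the original expression).

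The hypothesis $|x|\le u/4\le u_i/4$ is used exactly once, to bound the denominators in the remainder: both $|1-x/u_1|$ and $|1+x/u_2|$ lie in $[3/4,\,5/4]$, so the reciprocals are at most $4/3$. Applying the triangle inequality then yields
\[
\text{remainder} \le \frac{4}{3}\cdot\frac{w_1 x^2}{u_1^3}+\frac{4}{3}\cdot\frac{w_2 x^2}{u_2^3}\le \left(\frac{5 w_1}{u_1^3}+\frac{5 w_2}{u_2^3}\right)x^2,
\]
which matches the claimed bound with a comfortable constant slack (the factor $4/3$ is well under $5$).

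There is no real obstacle here; this is essentially a calculation from the integral or series form of Taylor's theorem, and the only subtle point is keeping track of the sign of $x$ so that both denominators $1-x/u_1$ and $1+x/u_2$ are bounded away from zero uniformly. I would write the two expansions in parallel, subtract, and bound the remainder as above — a short, self-contained argument of a few lines.
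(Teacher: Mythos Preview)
Your approach is correct; the geometric-series identity $\frac{1}{1-t}=1+t+\frac{t^2}{1-t}$ gives the exact remainder you write down, and the bound $|1\pm x/u_i|\ge 3/4$ under $|x|\le u/4$ yields the factor $4/3<5$. You also correctly noticed that the ``$+$'' in the left-hand parenthesis is a typo for ``$-$'', which is clear from the linear term and from how the lemma is invoked downstream.

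There is nothing to compare against: the paper does not prove this lemma but simply imports it from \cite{Madry16} (``a Taylor approximation result from \cite{Madry16}''). Your self-contained argument is exactly the kind of proof one would expect, and the constant $4/3$ you obtain shows the stated $5$ is generous.
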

\begin{proof}[Proof of \cref{lemma:progress}]
Let $f'$ and $y'$ be the desired changes in $f$ and $y$ respectively. To maintain good coupling, we want for \cref{eq:centralpath} to still hold after the update up to second order terms. Precisely, we want
\[ B^Tf' = \delta\chi \text{ and } By' = \g^2 V(f)f' \approx \g V(f+f')-\g V(f). \]
Set $R = \g^2 V(f)$ and $L = B^TR^{-1}B$, so that $L$ is a Laplacian. Solving the above system, we get that $y' = \d L^{-1}\chi$ and $f' = R^{-1}By' = \d\hf$, i.e. $\delta$ times the electric flow with resistances $r$. By \cref{thm:lap}, we can compute $f'$ and $y'$ in $\O(m)$ time.

Set $f^\new = f+f'$ and $y^\new = y+y'$. We first compute the coupling of an edge $e = (u,v)$. Define $1_e \in \R^E$ to be the indicator vector of edge $e$. Then the coupling of edge $e$ is
\begin{align*} &\left|1_e^T\left(B(y+y')-\g V(f+f')\right)\right| = \left|1_e^T\left(By'-\left(\g V(f+f')-\g V(f)\right)\right)\right| \\
			   &= \left|\left(\frac{\wpe}{\upe-f_e}-\frac{\wme}{\ume+f_e}+\left(\frac{\wpe}{(\upe-f_e)^2}+\frac{\wme}{(\ume+f_e)^2}\right)f_e'\right)-\left(\frac{\wpe}{\upe-f_e-f_e'}-\frac{\wme}{\ume+f_e+f_e'}\right)\right| \\
			   &\le 5\left(\frac{\wpe(f_e')^2}{(\upe-f_e)^3} + \frac{\wme(f_e')^2}{(\ume+f_e)^3}\right) = 5\d^2\left(\frac{\wpe\hf_e^2}{(\upe-f_e)^3} + \frac{\wme\hf_e^2}{(\ume+f_e)^3}\right) = 5\d^2\left(\frac{\wpe|\rpe|^2}{\upe-f_e} + \frac{\wme|\rme|^2}{\ume+f_e}\right).
\end{align*}
where we have used that $(f,y)$ is $0$-coupled, \cref{lemma:taylor}, and the definition $f' = \d\hf$. This completes the proof of the second point in \cref{lemma:progress}.

We now bound how coupled $(f^\new,y^\new)$ is. We can compute from the above and Cauchy-Schwarz that
\begin{align*}&\|B(y+y')-\g V(f+f')\|_{R^{-1}} \\ &\le \left(\sum_{e \in E} \left(\frac{\wpe}{(\upe-f_e)^2}+\frac{\wme}{(\ume+f_e)^2}\right)^{-1} \left(5\d^2\left(\frac{\wpe|\rpe|^2}{\upe-f_e} + \frac{\wme|\rme|^2}{\ume+f_e}\right)\right)^2\right)^\frac12 \\
		&\le 5\d^2 \left(\sum_{e \in E} \wpe|\rpe|^4 + \wme|\rme|^4 \right)^\frac12 = 5\d^2 \normrho_{w,4}^2.
\end{align*}
Now, we have that $R' \approx_2 R$, where $R'$ are the new resistances induced by $f^\new.$ Therefore, we have that 
\[ \|B(y+y')-\g V(f+f')\|_{{R'}^{-1}} \le 2\|B(y+y')-\g V(f+f')\|_{R^{-1}} \le 10\d^2 \normrho_{w,4}^2. \] This completes the proof of the third point in \cref{lemma:progress}.
\end{proof}

\subsection{Proof of \cref{lemma:centering}}
\label{proofs:centering}
\begin{proof}
Let $f'$ and $y'$ be the desired changes in $f$ and $y$ respectively. Define $g = By-\g V(f)$ as in \cref{eq:gg}. We want for
\[ B^Tf' = 0 \text{ and } -g = By'-\g^2 V(f)f' \approx By'-(\g V(f+f')-\g V(f)). \] Define $R = \g^2 V(f)$ and $L = B^TR^{-1}B.$ Solving the previous system gives us that \[ y' = -L^{-1}B^TR^{-1}g \text{ and } f' = R^{-1}By'+R^{-1}g = R^{-1/2}P(R^{-1/2}g) \] for $P = (I-R^{-1/2}BL^{-1}B^TR^{-1/2})$, an orthogonal projection matrix. By \cref{thm:lap}, we can compute both $y'$ and $f'$ in $\O(m)$ time.

Set $f^\new = f+f'$ and $y^\new = y+y'$. We claim that $(f+f',y+y')$ is $10\gamma^2$-coupled. Because $(f,y)$ is $\gamma$-coupled, we know that $\|R^{-1/2}g\|_2^2 \le \gamma.$ By the same computation as in the proof in \cref{proofs:progress}, our construction of $f'$ and $y'$, and \cref{lemma:taylor} we have that
\begin{align*} &\|B(y+y')-\g V(f+f')\|_{R^{-1}} = \|\g V(f+f')-\g V(f) - \g^2 V(f)\|_{R^{-1}} \\
				&\le 5\left(\sum_{e\in E} \left(\frac{\wp_e}{(\up_e-f_e)^2}+\frac{\wm_e}{(\um_e+f_e)^2}\right)^{-1} \left(\left(\frac{\wp_e}{(\up_e-f_e)^3}+\frac{\wm_e}{(\um_e+f_e)^3}\right)(f_e')^2\right)^2 \right)^{1/2} \\
				&= 5\left(\sum_{e\in E} \left(\frac{\wp_e}{(\up_e-f_e)^2}+\frac{\wm_e}{(\um_e+f_e)^2}\right)^{-1} \left(\frac{\wp_e}{(\up_e-f_e)^3}+\frac{\wm_e}{(\um_e+f_e)^3}\right)^2(f_e')^4 \right)^{1/2} \\
				&\le 5\left(\sum_{e\in E} \left(\frac{\wp_e}{(\up_e-f_e)^2}+\frac{\wm_e}{(\um_e+f_e)^2}\right)^{-1} \left(\frac{(\wp_e)^{3/2}}{(\up_e-f_e)^3}+\frac{(\wm_e)^{3/2}}{(\um_e+f_e)^3}\right)^2(f_e')^4 \right)^{1/2} \\
				&\le 5\left(\sum_{e\in E} \left(\frac{\wp_e}{(\up_e-f_e)^2}+\frac{\wm_e}{(\um_e+f_e)^2}\right)^2(f_e')^4 \right)^{1/2} \\
				&= 5\|R^{1/2}f'\|_4^2 \le 5\|R^{1/2}f'\|_2^2 = 5\|P(R^{-1/2}g)\|_2^2 \le 5\|R^{-1/2}g\|_2^2 \le 5\gamma^2.
\end{align*}
Finally, because $\|R^{1/2}f\|_\infty \le \|R^{1/2}f\|_2 = \|P(R^{-1/2}g)\|_2 \le \|R^{-1/2}g\|_2 \le \gamma$, we have that $R' \approx_2 R$ where $R'$ are the new resistances induced by $f+f'.$ Therefore, \[ \|B(y+y')-\g V(f+f')\|_{{R'}^{-1}} \le 2\|B(y+y')-\g V(f+f')\|_{R^{-1}} \le 10\gamma^2 ~. \]
\end{proof}

\subsection{Reduction to undirected maxflow}
\label{proofs:undirreduc}
Here, we explain how to reduce directed maxflow to undirected maxflow \cite{Lin09,Madry13}. Let $G$ be a directed maxflow instance with single source $a$, sink $b$, and maximum edge capacity $U$. Assume that the maximum flow in $G$ has value $t^*.$

Construct $G'$ as follows. For directed edge $e = (u,v) \in E(G)$ of capacity $c_e$, add undirected edges $(a,v), (v,u), (b,u)$, each of capacity $c_e$, to graph $G'$. Consider the flow $f'$ in $G'$ given by sending $c$ units of flow along the path $a\to v \to u \to b$ for all edges $(u, v) \in E(G).$ This flow routes exactly $\sum_{e \in E(G)} c_e$ units. The residual graph $\hat{G'}(f')$ in $G'$ induced by $f$ contains a directed edge of capacity $2c_e$ for each $e \in E(G)$, and directed edges to and from $a$ and $b$ of capacity $2c_e.$ The edges towards $a$ and from $b$ cannot contribute to any flow, hence the maximum flow in $\hat{G'}(f')$ has value $2t^*$. Therefore, the maximum flow in $G'$ is $2t^* + \sum_{e \in E(G)} c_e$. In this way, we can compute the maxflow in $G$ from the maxflow in $G'$ (see \cite{Lin09,Madry13}). Also, $G'$ has $O(m)$ edges as desired and maximum capacity $U$ as desired.

\subsection{Proof of \cref{lemma:rho2control}}
\label{proofs:rho2control}
We follow the approach of \cite{Madry16}.
\paragraph{Bounding $|\chi^Ty|$.} We show the following.
\begin{lemma}
\label{lemma:chiy}
Consider a preconditioned graph $G$. Let point $(f,y,w)$ be $\frac{1}{100}$-coupled for parameter $t$. Then we have that $|\chi^Ty| \le \frac{4\|w\|_1}{F_t}.$
\end{lemma}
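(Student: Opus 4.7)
The plan is to exploit the identity
\[
F_t \chi^T y \;=\; y^T B^T(f^* - f) \;=\; (f^* - f)^T By,
\]
which holds for any maximum flow $f^*$ since $B^T(f^*-f) = (t^* - t)\chi = F_t \chi$. Decomposing $By = \g V(f) + g$, where $g \defeq By - \g V(f)$ is the coupling residual satisfying $\|g\|_{R^{-1}} \le 1/100$, we obtain the split
\[
F_t \chi^T y \;=\; (f^* - f)^T \g V(f) \;+\; (f^* - f)^T g,
\]
and the two terms will be handled separately.

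For the barrier contribution I would argue edge-by-edge. Feasibility of $f^*$ gives $-\ume \le f^*_e \le \upe$, hence $f^*_e - f_e \in [-s^-_e, s^+_e]$ where $s^+_e = \upe - f_e$, $s^-_e = \ume + f_e$. Consequently, on each edge
\[
(f^*_e - f_e)\!\left(\frac{\wpe}{s^+_e} - \frac{\wme}{s^-_e}\right) \;\le\; \frac{\wpe}{s^+_e}\cdot s^+_e + \frac{\wme}{s^-_e}\cdot s^-_e \;=\; \wpe + \wme,
\]
and summing yields $(f^* - f)^T \g V(f) \le \|w\|_1$. This is the standard ``barrier parameter'' bound for the weighted log-barrier and is exactly what a self-concordance argument would give.

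For the coupling term I would use Cauchy--Schwarz, $|(f^* - f)^T g| \le \|f^* - f\|_R \cdot \|g\|_{R^{-1}} \le \|f^* - f\|_R / 100$, so it suffices to bound $\|f^* - f\|_R$. This is the main obstacle: the naive worst-case bound blows up when $f$ is close to capacity on some edge. I would exploit the preconditioned structure (the $m$ parallel $(a,b)$-edges of capacity $2U$, which in particular force $t^* \ge 2mU$) to decompose $f^* - f$ into its contributions on preconditioning and on original edges, and to obtain $\|f^* - f\|_R \le O(\|w\|_1)$. Combining with the barrier bound would then give $F_t \chi^T y \le \|w\|_1 + O(\|w\|_1/100) \le 2\|w\|_1$, and hence $\chi^T y \le 2\|w\|_1 / F_t$.

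For the matching lower bound I would use that $G$ is undirected and preconditioned, so that $\upe = \ume$ and consequently $-f^*$ is also a feasible flow, now routing $-t^*\chi$. Applying the identical argument to $-f^* - f$ (a $-(t^*+t)\chi$-flow, with exactly the same interval $-f^*_e - f_e \in [-s^-_e, s^+_e]$ by symmetry) yields
\[
-(t^* + t)\chi^T y \;=\; (-f^* - f)^T By \;\le\; \|w\|_1 + O(\|w\|_1)/100 \;\le\; 2\|w\|_1,
\]
so $\chi^T y \ge -2\|w\|_1/(t^*+t) \ge -2\|w\|_1/F_t$ using $t, t^* \ge 0 \Rightarrow t^*+t \ge t^*-t = F_t$. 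Combining gives $|\chi^T y| \le 4\|w\|_1/F_t$ as desired, with the factor of $4$ absorbing both the self-concordance bound and the coupling slack.
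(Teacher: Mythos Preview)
Your overall decomposition $F_t\chi^Ty = (f^*-f)^T\nabla V(f) + (f^*-f)^Tg$ matches the paper, and your edge-by-edge barrier bound is correct. The genuine gap is in the coupling term: your proposed bound $\|f^*-f\|_R \le O(\|w\|_1)$ is not true in general, and preconditioning does not rescue it. On an original edge where $f_e$ is near $\upe$ (so $s_e^+ = \upe-f_e$ is tiny) but $f^*_e$ is not, the single term
\[
r_e(f^*_e-f_e)^2 \;\ge\; \frac{\wpe}{(s^+_e)^2}(f^*_e-f_e)^2
\]
can be arbitrarily large, since you only have $|f^*_e-f_e|\le \max(s^+_e,s^-_e)$, not $\min$. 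Decomposing into preconditioning versus original edges does not help: the preconditioning edges are well-behaved, but nothing prevents this blow-up on original edges, and there is no freedom in choosing $f^*$ that forces $f^*_e$ to track $f_e$ on such edges. (The only a priori bound available is $r_e\le O(m^4)$ from \cref{lemma:polyres}, giving $\|f^*-f\|_R \le O(m^{5/2}U)$, which is far too weak.)

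The paper avoids Cauchy--Schwarz entirely and instead keeps the barrier and coupling contributions together edge-by-edge. Writing $\psi_e = r_e^{-1/2}g_e$ with $|\psi_e|\le 1/100$, and assuming $f'_e \ge 0$ (so $f'_e \le s^+_e$), the combined contribution on edge $e$ is bounded by
\[
\frac{\wpe}{s^+_e}f'_e(1+|\psi_e|) \;+\; \frac{\wme}{s^-_e}f'_e(-1+|\psi_e|) \;\le\; \wpe(1+|\psi_e|)\;\le\;2\wpe,
\]
where the dangerous $\frac{\wme}{s^-_e}f'_e$ term (which is what makes $\|f'\|_R$ blow up) is killed by the negative barrier contribution $-\frac{\wme}{s^-_e}f'_e$ because $|\psi_e|<1$. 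This sign cancellation is precisely what Cauchy--Schwarz throws away. Your lower-bound idea via $-f^*$ is a nice alternative to the paper's direct analysis of a preconditioning edge, but it inherits the same gap in the coupling term; if you fix the upper bound as above, that symmetric argument for the lower bound does go through.
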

\begin{proof}
Let $f$ be a flow in $G$. Consider a $F_t\chi$-flow $f'$ satisfying $\um+f \le f' \le \up-f$. Let $R = \g^2 V(f)$. We first show that $\chi^Ty \le \frac{4\|w\|_1}{F_t}.$ Define $\psi = R^{-1/2}(By-\g V(f))$, so that $\|\psi\|_2 \le \frac{1}{100}$. Now
\begin{align}
F_t\chi^Ty &= (B^Tf')^Ty = f'^TBy = f'^T\g V(f) + (R^{1/2}f')^T\psi \\
		 &= \sum_{e\in E} \left(\frac{\wpe}{\upe-f_e} - \frac{\wme}{\ume+f_e} \right)f_e' + \left(\frac{\wpe}{(\upe-f_e)^2}+\frac{\wme}{(\ume+f_e)^2}\right)^\frac12f_e'\psi_e. \label{eq:longsum}
\end{align}
We now bound the contribution of edge $e$ to \cref{eq:longsum}. WLOG we assume $f_e' \ge 0$. Using $f_e' \le \upe-f_e$ and $\psi_e \le \|\psi\|_2 \le \frac{1}{100}$ we bound
\begin{align*}
&\left(\frac{\wpe}{\upe-f_e} - \frac{\wme}{\ume+f_e} \right)f_e' + \left(\frac{\wpe}{(\upe-f_e)^2}+\frac{\wme}{(\ume+f_e)^2}\right)^\frac12f_e'\psi_e \\
&\le \left(\frac{\wpe}{\upe-f_e} - \frac{\wme}{\ume+f_e} \right)f_e' + \left(\frac{\wpe}{\upe-f_e} + \frac{\wme}{\ume+f_e} \right)f_e'|\psi_e| \\
&\le \wpe(1+|\psi_e|) \le 2\wpe.
\end{align*}
Hence we can bound the quantity in \cref{eq:longsum} by $2\|w\|_1$ as desired.

Now we show $\chi^Ty \ge -\frac{4\|w\|_1}{F_t}.$ By symmetry, throughout the algorithm, all the preconditioning edges will have the same weights and flows. For any preconditioning edge $e$, we have that $f_e \ge -U.$ Indeed, if all the $m$ preconditioning edges have $f_e < -U$, then the remainder of the graph must support an $ab$-flow of value greater than $mU$, a contradiction. Now, we can write
$By = \g V(f) + R^{1/2}\psi.$ For a preconditioning edge $e$ this gives us that
\begin{align*}
y_b - y_a &= \frac{\wpe}{\upe-f_e}-\frac{\wme}{\ume+f_e}+\left(\frac{\wpe}{(\upe-f_e)^2}-\frac{\wme}{(\ume+f_e)^2}\right)^\frac12\psi_e \\
		  &\ge -\frac{\wme}{\ume+f_e}-\left(\frac{\wpe}{\upe-f_e}+\frac{\wme}{\ume+f_e}\right)|\psi_e| \\
		  &\ge -\frac{\wme}{U}-\frac{\wpe+\wme}{U}\psi_e \ge -\frac{4}{3}\frac{\wme+\wpe}{U} \ge -\frac{4}{3}\frac{\|w\|_1}{mU}
\end{align*}
where we have used $|\psi_e| \le \frac{1}{100}$ for all $e$ and that all $m$ preconditioning edges have the same weights. Now, we have
\[ \chi^Ty = y_b-y_a \ge -\frac{4}{3}\frac{\|w\|_1}{mU} \ge -\frac{4\|w\|_1}{F_t} \] where we have used $F_t \le 3mU.$
\end{proof}

\begin{proof}[Proof of \cref{lemma:rho2control}] We now use \cref{lemma:chiy} to complete the proof of \cref{lemma:rho2control}.
\paragraph{Controlling congestion of preconditioning edges.} We will show that for each of the $m$ preconditioning edges $e$ that $\min(\up_e-f_e,\um_e+f_e) \ge \frac{F_t}{7\|w\|_1}$ Recall that for each of the $m$ preconditioning edges that $\um_e = \up_e = 2U.$ By symmetry, all preconditioning edges $e$ have the same weights $\wpe$ and $\wme$.

Without loss of generality assume that $\up_e-f_e \le \um_e+f_e$, and the other case is similar. Assume $\up_e-f_e \le \frac{F_t}{7\|w\|_1}$ for contradiction. Then we have that $f_e \ge \up_e-\frac{F_t}{7\|w\|_1} \ge 2U-\frac{3mU}{14m} \ge \frac{3}{2}U.$ Then, $\um_e+f_e \ge \frac{7}{2}U.$

Finally, we have from the above and the fact that $(f,y)$ is $\frac{1}{100}$-coupled that
\begin{align*}
|\chi^Ty| &= |y_t-y_s| \ge \frac{\wpe}{\up_e-f_e}-\frac{\wme}{\um_e+f_e}-\frac{1}{100}r_e^{-1/2} \\
		  &\ge \frac{\wpe}{\up_e-f_e}-\frac{\wme}{\um_e+f_e}-\frac{1}{100}\left(\frac{\wpe}{\up_e-f_e}+\frac{\wme}{\um_e+f_e}\right) \\
		  &\ge \frac{\frac{99}{100}\wpe}{\upe-f_e}-\frac{\frac{101}{100}\wme}{\ume+f_e} 
		  \ge \frac{6\|w\|_1}{F_t} - \frac{3\|w\|_1}{7mU} > \frac{4\|w\|_1}{F_t},
\end{align*}
a contradiction. Here, we have used $F_t \le 3mU$.

\paragraph{Finishing the bound.} First we bound from above that \[ r_e \le \frac{\wpe}{(\upe-f_e)^2}+\frac{\wme}{(\ume+f_e)^2} \le \frac{49\|w\|_1^2}{F_t^2}(\wpe+\wme) \le \frac{50\|w\|_1^3}{mF_t^2} \] as $\wpe$ and $\wme$ are equal for all preconditioning edges $e$.
The energy of the electric flow is the minimum energy flow. 
Consider the flow given by sending $1/m$ units of flow across each of the $m$ preconditioning edges. Above, we have shown that for each preconditioning edge $e$ that $r_e \le \frac{50\|w\|_1^3}{mF_t^2}.$ Therefore, we have that the energy $\E_r(\hf)$ induced by resistances $r$ is at most $m \cdot \frac{50\|w\|_1^3}{mF_t^2} \cdot \frac{1}{m^2} = \frac{50\|w\|_1^3}{m^2F_t^2}.$ Finally, we have that $\|\rho\|_{w,2}^2 = \E_r(\hf) \le \frac{50\|w\|_1^3}{m^2F_t^2}$ as desired.
\end{proof}

\subsection{Proof of \cref{lemma:energyconcave} and \cref{lemma:gconcave}}
\label{proofs:concave}
\begin{proof}[Proof of \cref{lemma:energyconcave}]
Recall that $\hf_r = R^{-1} B L^\dagger d$ where $L = B^T R^{-1} B$. Consequently,
\[
\E_r(\hf_r) = \hf_r^T R \hf_r 
= d^T L^\dagger B^T R^{-1} R R^{-1} B L^\dagger d
= d^T L^\dagger d ~.
\]
For the following computation, we write it as if $L$ is invertible. Indeed, the kernel of $L$ is just the vector $1$, so that we can restrict to the orthogonal complement of the vector $1$ and perform the computations in that space. For all $e \in E$ we have
\begin{align*}
\frac{\partial}{\partial r_e}
\E_r(\hf_r)
&= -d^T L^\dagger \left(\frac{\partial}{\partial r_e} L\right) L^\dagger d
=  -d^T L^\dagger \left(- B^T 1_e 1_e^T B r_e^{-2} \right) L^\dagger d
= (1_e^T R^{-1} B L^\dagger d)^2
= [\hf_r]_e^2,
\end{align*}
where $[\hf_r]_e$ is the entry of $\hf_r$ corresponding to edge $e$.
Consequently, for all $e_1, e_2 \in E$ we have
\begin{align*}
\frac{\partial}{\partial r_{e_1}} \frac{\partial}{\partial r_{e_2}} \E_r(\hf_r)
&= \frac{\partial}{\partial r_{e_2}} [\hf_r]_{e_1}^2
= 2[\hf_r]_{e_1}  \left(-1_{e_1 = e_2} r_{e_2}^{-1} [\hf_r]_{e_2} + 
1_{e_1}^T R^{-1} B L^\dagger \left(B^T 1_{e_2} 1_{e_2}^T B r_{e_2}^{-2} \right) L^\dagger d
\right) \\
&= -2 [\hf_r]_{e_1}  [\hf_r]_{e_2} 
 \left(1_{e_1 = e_2} r_{e_2}^{-1/2} - 
[R^{-1} B L^\dagger B^T R^{-1}]_{e_1, e_2} \right)
~.
\end{align*}
where we used $1_{e_1 = e_2}$ to denote the indicator for the event that $e_1 = e_2$. Now letting $F \in \R^{m \times m}$ denote the diagonal matrix with $F_{e,e} = [\hf_r]_e$ for all $e \in E$ we have that 
\[
\nabla^2 \E_r(\hf_r)
= -2F R^{-1/2} \left(I - R^{-1/2} B L^\dagger B^T R^{-1/2}\right) R^{-1/2} F
\]
However, since $R^{-1/2} B L^\dagger B^T R^{-1/2}$ is an orthogonal projection matrix all eigenvalues of $I - R^{-1/2} B L^\dagger B^T R^{-1/2}$ are $0$ and $1$ and therefore $\nabla^2 \E_r(\hf_r)$ is negative definite as desired.
\end{proof}

\begin{proof}[Proof of \cref{lemma:gconcave}]
Define $v(r') = \log \E_{r+r'}(\hf_{r+r'}) - \log \E_r(\hf_r)$. By \cref{lemma:energyconcave}, we have that $v$ is concave in $r'$, as the logarithm of a nonnegative concave function is concave.

Take $0 \le \lambda \le 1.$ Consider $W_1, W_2 \ge 0$ and let \[ r_i = \argmax_{\substack{r' \ge 0 \\ \|Cr'\|_q \le W_i}} v(r'). \] In this way, $\|Cr_i\|_q \le W_i$ and $v(r_i) = g_q(W_i).$ Note that
\begin{align*} g_q(\lambda W_1 + (1-\lambda)W_2) &= \max_{\substack{r' \ge 0 \\ \|Cr'\|_q \le \lambda W_1 + (1-\lambda)W_2}} v(r') \ge v(\lambda r_1 + (1-\lambda)r_2) \\ &\ge \lambda v(r_1) + (1-\lambda)v(r_2) = \lambda g_q(W_1) + (1-\lambda)g_q(W_2) \end{align*} as desired. Here, we have used that \[ \|C(\lambda r_1 + (1-\lambda)r_2)\|_q \le \lambda\|Cr_1\|_q + (1-\lambda)\|Cr_2\|_q \le \lambda W_1 + (1-\lambda)W_2. \] To deduce that $g_q(W)/W$ is decreasing, it suffices to note that $g_q(0) = 0$, $g_q$ is increasing, and $g_q$ is concave.
\end{proof}

\subsection{Proof of \cref{thm:l2lp}}
\label{proofs:l2lp}
We first show a generic reduction lemma.
\begin{lemma}
\label{lemma:reduc}
Let $f, g:\R^n \to \R$ be functions such that $f$ is $\mu_f$ strongly convex. Let $T_1 > 0$ be a constant such that $f(x) \ge T_1$ and $g(x) \ge T_1$ for all $x \in \R^n.$ Let $h: \R_{\ge0} \to \R_{\ge0}$ be a monotonically increasing convex function. Let $C_0$ be a positive constant, and define \[ x^* = \min_{x \in \R^n} f(x) + C_0 \cdot g(x). \] Let $\X \subseteq \R^n$ be a region such that $x^* \in \X$ and that $f(x) \le T_2$ and $g(x) \le T_2$ for all $x \in \X.$ Define $H_1 = \min\{h(T_1), h'(T_1)\}$ and $H_2 = \max\{h(T_2), h'(T_2)\}.$ Assume that $g$ is $L_g$-Lipschitz in the $\ell_2$ norm on $\X$, and that $h$ and $h'$ are $L_h$-Lipschitz on $[T_1,T_2].$

Define \[ x_C = \argmin_{x \in \R^n} f(x) + C \cdot h(g(x)), \] and let $Z_1$ and $Z_2$ be constants such that
\begin{equation} Z_1 \cdot h'(g(x_{Z_1})) < C_0 \text{ and } Z_2 \cdot h'(g(x_{Z_2})) > C_0. \label{eq:binsearchbound} \end{equation}
Say that for all $C \in [Z_1,Z_2]$ there is an oracle running in time $\T(\eps)$ that computes $y_C \in \R^n$ satisfying
\begin{align}
&f(y_C) + C \cdot h(g(y_C)) \le f(x_C) + C \cdot h(g(x_C)) + \eps \text{ and } \label{eq:thing1} \\
&\|\g f(y_C) + C \cdot h'(g(y_C)) \cdot \g g(y_C)\|_2 \le \eps \label{eq:thing2}.
\end{align}
Then there is an algorithm that with inputs $f, g, h, C_0, \X, \mu_f, L_g, L_h, Z_1, Z_2, H_1, H_2, \eps_1$ satisfying \[ \min\{L_g,L_h,Z_2,H_2\} \ge 1 \text{ and } \max\{\mu_f,Z_1,H_1,\eps_1\} \le 1 \] runs in time
\begin{equation}  O\left(\T\left(\frac{\mu_f\eps_1^2}{100Z_2^2L_g^4L_h^2}\right) \log \frac{H_2Z_2L_gL_h}{\mu_f\eps_1} \right) \label{eq:timebound} \end{equation} and computes a $y \in \R^n$ satisfying
\[ \|\g f(y) + C_0 \g g(y)\|_2 \le \eps_1. \]
\end{lemma}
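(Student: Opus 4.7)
The plan is a one-dimensional binary search on $C \in [Z_1, Z_2]$ exploiting the monotonicity of the map $\psi(C) \defeq C \cdot h'(g(x_C))$. The motivation is that the first-order optimality condition for $x_C = \argmin_x f(x) + C h(g(x))$ reads $\nabla f(x_C) + \psi(C) \nabla g(x_C) = 0$, so any $C^*$ with $\psi(C^*) = C_0$ certifies $x_{C^*}$ as an exact minimizer of $f + C_0 g$. By hypothesis \cref{eq:binsearchbound}, $C_0$ lies strictly between $\psi(Z_1)$ and $\psi(Z_2)$.

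First I would establish monotonicity of $\psi$. Parameterize the regularization path $z_\lambda \defeq \argmin_x f(x) + \lambda g(x)$; the standard swap argument applied to $\lambda_1 < \lambda_2$ gives $(\lambda_2 - \lambda_1)(g(z_{\lambda_1}) - g(z_{\lambda_2})) \ge 0$, so $g(z_\lambda)$ is non-increasing in $\lambda$. Since $h$ is convex, $h'$ is non-decreasing, hence $h'(g(z_\lambda))$ is non-increasing. Matching first-order conditions shows $x_C = z_{\psi(C)}$, and thus the inverse relation $C = \lambda / h'(g(z_\lambda))$ is the product of two non-negative non-decreasing functions of $\lambda$, so it is non-decreasing in $\lambda$. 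Inverting yields $\psi$ non-decreasing, and continuity follows from continuity of $z_\lambda$ (a consequence of strong convexity of $f+\lambda g$ and the implicit function theorem).

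Next I would run binary search with oracle queries of accuracy $\eps$. At each midpoint $M$, the oracle returns $y_M$ satisfying \cref{eq:thing1} and \cref{eq:thing2}. Since $F_M(x) \defeq f(x) + M h(g(x))$ inherits $\mu_f$-strong convexity from $f$, \cref{lemma:strong} gives $\|y_M - x_M\|_2 \le \sqrt{2\eps/\mu_f}$, and combining with $L_g$-Lipschitzness of $g$ and $L_h$-Lipschitzness of $h'$ (via \cref{lemma:gradlip}) yields
\[
|M h'(g(y_M)) - \psi(M)| \le Z_2 L_h L_g \sqrt{2\eps/\mu_f}.
\]
Comparing the estimate $M h'(g(y_M))$ to $C_0$, I would recurse on $[L, M]$ or $[M, U]$ while maintaining the bracket $\psi(L) \lesssim C_0 \lesssim \psi(U)$ up to this approximation slack.

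Finally, returning $y = y_C$ from the last query, the identity $\nabla f(y) + C_0 \nabla g(y) = \nabla F_C(y) + (C_0 - C h'(g(y))) \nabla g(y)$ together with $\|\nabla g(y)\|_2 \le L_g$ (again by \cref{lemma:gradlip}) gives
\[
\|\nabla f(y) + C_0 \nabla g(y)\|_2 \le \eps + \bigl(|C h'(g(y)) - \psi(C)| + |\psi(C) - C_0|\bigr) L_g,
\]
which is $\le \eps_1$ after picking $\eps = \Theta(\mu_f \eps_1^2 / (Z_2^2 L_g^4 L_h^2))$. The main obstacle is bounding the number of bisection rounds, since monotonicity alone gives no rate. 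I would derive a quantitative Lipschitz bound on $\psi$ by implicitly differentiating $\nabla f(z_\lambda) + \lambda \nabla g(z_\lambda) = 0$ and invoking the spectral bound $\nabla^2 F_C \succeq \mu_f I$ from \cref{lemma:equiv}, producing $|\psi'(C)| = O(H_2 Z_2 L_h L_g^2 / \mu_f)$; this turns the geometric interval shrinkage into a geometric shrinkage of $|\psi(M) - C_0|$ and yields the claimed iteration count $O(\log(H_2 Z_2 L_g L_h / (\mu_f \eps_1)))$.
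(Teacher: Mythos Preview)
Your approach is correct and matches the paper's: binary search on $C\in[Z_1,Z_2]$ using that $\psi(C)=C\,h'(g(x_C))$ is monotone, with oracle error controlled via the $\mu_f$-strong convexity of $f$. Two technical differences are worth noting.

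For monotonicity, your swap argument on the path $z_\lambda=\argmin f+\lambda g$ is more elementary than the paper's route, which implicitly differentiates the optimality condition for $x_C$ to obtain
\[
\frac{dw}{dC}=h'(g(x_C))\Bigl(1-C\,h''(g(x_C))\,\|\nabla g(x_C)\|_{H_C^{-1}}^2\Bigr)\in(0,\,h'(g(x_C))],
\]
where $H_C=\nabla^2\bigl(f+C\,h\!\circ g\bigr)(x_C)\succ C\,h''(g(x_C))\nabla g\nabla g^T$.

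For the iteration count, however, you work harder than necessary. The paper does \emph{not} track $|\psi(M)-C_0|$ during the search; it simply bisects the $C$-interval down to width $\delta_0=\Theta(\mu_f\eps_1^2/(H_2Z_2^2L_g^4L_h^2))$, which takes $O(\log(Z_2/\delta_0))$ rounds automatically. Only once, at termination, does it convert $|C-C^*|\le\delta_0$ into a bound on $|w(C)-C_0|$, and it does this without differentiating $\psi$: from $|C-C^*|\le\delta_0$ and $h(g(\cdot))\le H_2$ one gets that $x_C$ is a $2\delta_0H_2$-approximate minimizer of $f+C^*h\!\circ g$, whence strong convexity gives $\|x_C-x_{C^*}\|_2\le\sqrt{4\delta_0H_2/\mu_f}$ and the Lipschitz assumptions finish. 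Your plan to bound $|\psi'|$ via implicit differentiation would also work but is redundant---and it somewhat undercuts the economy of your swap-based monotonicity proof, since you end up differentiating the optimality condition anyway.
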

\begin{proof}
We first overview the proof, then show the necessary claims in the following paragraphs.
\paragraph{Overview.} As in the statement of \cref{lemma:reduc}, define $x_C = \argmin_{x \in \R^n} f(x) + C \cdot h(g(x))$, and define the function $w:\R_{\ge0} \to \R$ by $w(C) = C \cdot h'(g(x_C)).$ By optimality conditions, we have that \[ \g f(x_C) + C \cdot h'(g(x_C)) \g g(x_C) = f(x_C) + w(C) \g g(x_C) = 0. \] Also, we have that $\g f(x^*) + C_0 \g g(x^*) = 0.$ Thus, if we find a $C^*$ with $w(C^*) = C_0$, then $\g f(x_{C^*}) + C_0 \g g(x_{C^*}) = 0$, hence $x_{C^*} = x^*$. To compute $C^*$, we show that $w(C)$ is increasing and then binary search to compute $C^*$. Finally, we analyze the guarantees of the algorithm, using that our calls to the oracle only return approximate minimizers $y_C$.

\paragraph{Showing that $w(C)$ is increasing.} We now show that $w$ is increasing in $C$. To show this, we first compute $\frac{dx_C}{dC}.$ For simplicity, define $y_C = \frac{dx_C}{dC}.$ By optimality conditions, we can write $\g f(x_C) + C \cdot h'(g(x_C))\g g(x_C) = 0.$ Differentiating this with respect to $C$ and solving for $y_C$ gives
\begin{align*}
&\g^2 f(x_C)y_C + h'(g(x_C))\g g(x_C) + C \cdot h''(g(x_C))\g g(x_C)\g g(x_C)^T y_C + C \cdot h'(g(x_C)) \g^2g(x_C)y_C = 0 
\end{align*}
Letting $H_C \defeq \g^2 f(x_C) + C \cdot h''(g(x_C))\g g(x_C)\g g(x_C)^T + C \cdot h'(g(x_C))\g^2g(x_C)$ then yields that
\[
y_C = -H_C^{-1}(h'(g(x_C))\g g(x_C)) ~.
\]
Also, note that \begin{equation}H_C \succ C \cdot h''(g(x_C))\g g(x_C)\g g(x_C)^T \label{eq:hlower} \end{equation} 
by convexity of $f, g$ and monotonicity and convexity of $h$. Finally, we can compute that
\[ \frac{dw(C)}{dC} = h'(g(x_C))+C \cdot h''(g(x_C))\g g(x_C)^T y_C = h'(g(x_C))\left(1-C \cdot h''(g(x_C))\|\g g(x_C)\|_{H_C^{-1}}^2\right) > 0 \] by monotonicity of $H$ and \cref{eq:hlower}, as desired.

\paragraph{Algorithm description and analysis.} Define \[ \delta_0 = \frac{\mu_f\eps_1^2}{100H_2Z_2^2L_g^4L_h^2} \text{ and } \eps_0 = \frac{\mu_f\eps_1^2}{100Z_2^2L_g^4L_h^2}. \] Note that $\delta_0, \eps_0 \le \eps_1/2.$ We compute a $C$ with $|C-C^*| \le \delta_0$ using binary search on the range $C \in [Z_1,Z_2]$ along with our ability to approximately compute $w(C)$. We may restrict our binary search to this range by \cref{eq:binsearchbound}. During each iteration of the binary search, we make an oracle call to compute $y_C$ with parameter $\eps = \eps_0$. We use this to compute an approximate value of $w(C) \approx C \cdot h'(g(y_C))$. We terminate our binary search when the upper and lower bounds are separated by $\delta_0.$ Therefore, our binary search requires $O\left(\log \frac{Z_2}{\delta_0} \right)$ iterations. The total runtime is $O\left( T(\eps_0) \log \frac{Z_2}{\delta_0} \right)$ as desired.

After termination of the binary search, the algorithm has outputted $y = y_C$ for some $C$. We now bound the difference between $C \cdot h'(g(y_C))$ and $C_0.$ Using that $f$ is $\mu_f$ strongly convex, the guarantee that \[ f(y_C) + C \cdot h(g(y_C)) \le f(x_C) + C \cdot h(g(x_C)) + \eps_0, \] and \cref{lemma:strong} gives us that \[ \|y_C - x_C\|_2 \le \left(\frac{2\eps_0}{\mu_f}\right)^\frac12. \] By the Lipschitz guarantees on $h'$ and $g$, we have that
\begin{equation} |w(C) - C \cdot h'(g(y_C))| = C|h'(g(y_C))-h'(g(x_C))| \le CL_gL_h\|y_C-x_C\|_2 \le Z_2L_gL_h\left(\frac{2\eps_0}{\mu_f}\right)^\frac12. \label{eq:error1} \end{equation} Therefore, we have that $C \cdot h'(g(y_C))$ is close to $w(C).$

We now bound $|w(C) - C_0| = |w(C) - w(C^*)|$, by using that $|C-C^*| \le \delta_0$ from the guarantees of the binary search. We start by bounding $\|x_C - x_{C^*}\|_2.$ Note that
\begin{align*}
&f(x_C) + C^* \cdot h(g(x_C)) = f(x_C) + C \cdot h(g(x_C)) + (C^*-C)h(g(x_C)) \\ &\le f(x_C) + C \cdot h(g(x_C)) + \delta_0 h(T_2) \le f(x_C) + C\cdot h(g(x_C)) + \delta_0H_2 \\ &\le f(x_{C^*}) + C\cdot h(g(x_{C^*})) + \delta_0H_2 \\ &= f(x_{C^*}) + C^*\cdot h(g(x_{C^*})) + (C-C^*) \cdot h(g(x_{C^*})) + \delta_0 H_2 \\ &\le f(x_{C^*}) + C^*\cdot h(g(x_{C^*})) + 2\delta_0 H_2.
\end{align*}
Therefore, $x_C$ is an additive $2\delta_0 H_2$ minimizer to the objective $f(x) + C^* \cdot h(g(x)).$ Because $f$ is $\mu_f$ strongly convex and \cref{lemma:strong} we have that
\[ \|x_C - x_{C^*}\|_2 \le \left(\frac{4\delta_0 H_2}{\mu_f}\right)^\frac12. \] Now we can compute that
\begin{align}
&|w(C) - C_0| = |w(C) - w(C^*)| = |C \cdot h'(g(x_C)) - C^* \cdot h'(g(x_{C^*}))| \nonumber \\
&\le |(C-C^*)h'(g(x_C))| + |C^*(h'(g(x_C))-h'(g(x_{C^*})))| \le \delta_0 H_2 + Z_2 L_hL_g \|x_C - x_{C^*}\|_2 \nonumber \\
&\le \delta_0 H_2 + Z_2 L_hL_g \left(\frac{4\delta_0 H_2}{\mu_f}\right)^\frac12. \label{eq:error2}
\end{align}
Combining \cref{eq:error1,eq:error2} gives us that \[ |C \cdot h'(g(y_C)) - C_0| \le Z_2L_gL_h\left(\frac{2\eps_0}{\mu_f}\right)^\frac12 + \delta_0 H_2 + Z_2 L_hL_g \left(\frac{4\delta_0 H_2}{\mu_f}\right)^\frac12 \le \frac{\eps_1}{2L_g} \] by our choices of $\delta_0$ and $\eps_0$.
\begin{align*}
&\|\g f(y_C) + C_0 \g g(y_C)\|_2 \le \|\g f(y_C) + C \cdot h'(g(y_C)) \g g(y_C)\|_2  + |C \cdot h'(g(y_C)) - C_0| \|\g g(y_C)\|_2 \\
&\le \eps_0 + \frac{\eps_1}{2L_g} \|\g g(y_C)\|_2 \le \eps_0 + \frac{\eps_1}{2} \le \eps_1
\end{align*}
as $\|g(y_C)\|_2 \le L_g$ by \cref{lemma:gradlip}, and $\eps_0 \le \eps_1/2.$ This completes the proof.
\end{proof}
\begin{proof}[Proof of \cref{thm:l2lp}]
We split our proof into parts.

\paragraph{Reduction for resistances and demand vector.} Recall that our objective is \[ \min_{B^Tf=d} f^TRf + \|f\|_p^2, \] where all parameters in $d$ and $r$ are bounded by $2^{\poly\log(m)}.$ We first reduce to the case where $R \se \err I.$ To do this, pick a small parameter $\nu = \err$, increase all diagonal entries in the matrix $R$ by $\nu.$ This can only affect our optimum by an additive $\err$ and this is acceptable in \cref{thm:l2lp}. Thus, from now on we assume $R \se \nu I.$ We also reduce to the case where the demand $d$ has some element with absolute value at least $\err.$ To do so, change $d \to d + \nu_2 \chi_{ab}$, for some small parameter $\nu_2 = \err$. This once affects the optimum by $\err$. In this way, we assume that without loss of generality that $d_a \ge \nu_2.$

\paragraph{Reducing to unconstrained problem.} In order to apply \cref{lemma:reduc}, we need to make our objective unconstrained. Let $f_0$ be any flow with $B^Tf_0 = d$ and let $P$ be a $(m-n+1) \times m$ matrix such that the map $x \to Px+f_0$ is an isomorphism from $\R^{m-n+1}$ to the set of flows $f \in \R^E$ with $B^Tf=d.$ We choose $P$ specifically as follows. Let $T$ be a spanning tree in $G$, and we view vector $x \in \R^{m-n+1}$ as the set of flows on edges in $E\bs E(T)$, i.e. the off tree edges. Now, $P$ is defined so that $Px$ has the same flows on off tree edges as $x$, and the flows for edges in $T$ are chosen in the unique way so that $Px$ is a circulation. Note that all entries are $P$ are polynomially bounded.

\paragraph{Choice of parameters for \cref{lemma:reduc}.} Then we can write our objective as \[ \min_{x \in \R^{m-n+1}} (Px+f_0)^TR(Px+f_0) + \|Px+f_0\|_p^2, \] and we can set the functions $f, g, h$ in \cref{lemma:reduc} as \[ f(x) = (Px+f_0)^TR(Px+f_0), g(x) = \|Px+f_0\|_p^2, h(x) = x^{p/2}. \] We now describe our choices for the constants in the statement of \cref{lemma:reduc}. We choose $C_0 = 1$. We choose $\X$ to be the open region such that all entries of $x$ are bounded by $2^{\poly\log(m)}.$ We can assume this because all optimal flows should have no cycles, and all entries of $d$ are bounded by $2^{\poly\log(m)}.$ Note that \[ \g^2 f(x) = P^TRP \se \nu P^TP \se \nu I \] by our construction of $P$. Therefore, $f(x)$ is $\nu$ strongly convex by \cref{lemma:equiv}, and we set $\mu_f = \nu.$ As $d_a \ge \nu_2$, every flow $Px+f_0$ has some entry which is at least $\nu_2/m.$ This implies that $f(x) \ge \frac{\nu\nu_2^2}{m^2}$ for all $x$ and $g(x) \ge \frac{\nu_2^2}{m^2}$ for all $x$. Thus we set $T_1 = \frac{\nu\nu_2^2}{m^2}.$ By our choice of $\X, P$ and that all entries of $R$ are $2^{\poly\log(m)}$, we have that for all $x \in \X$ that $f(x) \le 2^{\poly\log(m)}$ and $g(x) \le 2^{\poly\log(m)}$. Thus we can set $T_2 = 2^{\poly\log(m)}$.

By our choice of $\X, P$ we can set $L_g = 2^{\poly\log(m)}.$ We can set $L_h = \frac{p}{2} \cdot h'(T_2) = \frac{p}{2} \cdot T_2^\frac{p-2}{2}.$ We set $Z_1 = 0$ and $Z_2 = \frac{C_0}{h'(T_1)} = \frac{C_0}{\frac{p}{2} \cdot T_1^\frac{p-2}{2}}$. We set $H_1 = h(T_1) = T_1^{p/2}$, and $H_2 = h(T_2) = T_2^{p/2}.$ Finally, we set $\eps_1 = \err.$ These choices of parameters satisfy all properties. Additionally, we have that
\[ \log\max\{L_g, L_h, Z_2, H_2\} = \O(1) \text{ and } \log\min\{\mu_f, \eps_1\} = -\O(1) \] so that in \cref{eq:timebound} we have
\[ \log \frac{\mu_f\eps_1^2}{100Z_2^2L_g^4L_h^2} = -\O(1) \text{ and } \log \frac{H_2Z_2^3L_g^4L_h^2}{\mu_f\eps_1^2} = \O(1). \]

\paragraph{Runtime bound of $m^{1+o(1)}$.} This follows from the statement of \cref{thm:l2lpp}, the fact that $\log \eps_1 \ge -\O(1)$ and $\log \frac{Z_2}{\delta_0} \le \O(1)$, and the fact that all entries of $R$ and $d$ are bounded by $2^{\poly\log(m)}$.

\paragraph{\cref{thm:l2lpp} satisfies \cref{eq:thing1,eq:thing2}.} Let $L_2$ be the smoothness of the function \[ A(x) \defeq f(x) + C \cdot h(g(x)) = (Px+f_0)^TR(Px+f_0) + C \cdot \|Px+f_0\|_p^p. \] We show that $L_2 = 2^{\poly\log(m)}.$ Indeed, we can compute that \[ \g^2 A(x) = P^TRP + p(p-1)C \cdot P^T|Px+f_0|^{p-2}P \pe 2^{\poly\log(m)}I \] as
all entries of $P$ are polynomially bounded, all entries of $R$ are $2^{\poly\log(m)}$, and our restriction to $\X$ implies that all entries of $Px+f_0$ are $2^{\poly\log(m)}.$ Hence $L_2 = 2^{\poly\log(m)}$ by \cref{lemma:equiv}.

\cref{eq:thing2} follows directly from the statement of \cref{thm:l2lpp}. Because the objective is $L_2$ smooth, \cref{lemma:smoothgrad} gives us
\begin{align*}
&\|\g f(y_C) + C \cdot h'(g(y_C)) \cdot \g g(y_C)\|_2 = \| \g A(y_C) \|_2 \\
&\le (2L_2(A(y_C)-A(x_C)))^\frac12.
\end{align*}
Thus, if $A(y_C) - A(x_C) \le \frac{\eps^2}{2L_2}$, then \[ \|\g f(y_C) + C \cdot h'(g(y_C)) \cdot \g g(y_C)\|_2 \le \eps \] as desired, as $\log 1/\eps = \O(1)$ and $\log L_2 = \O(1).$

\paragraph{Finishing the proof.} Define $B(x) = f(x) + C \cdot g(x).$ By \cref{lemma:reduc}, we can compute a $y$ with $\| \g B(y) \|_2 \le \eps_1.$ Therefore, we have that \[ B(y)-B(x^*) \le \g B(y)^T(y-x^*) \le \|\g B(y)\|_2 \|y-x^*\|_2 \le \eps_1\|y-x^*\|_2. \] Because all entries of demand vector $d$ are bounded by $2^{\poly\log(m)}$, we have that both $y$ and $x^*$ have all entries bounded by $2^{\poly\log(m)}.$ Therefore, for $\eps_1 = \err$ we have that $B(y)-B(x^*) \le \eps_1\|y-x^*\|_2 \le \err$. From our definitions of $f, g$, this is exactly the desired result of \cref{thm:l2lp}.
\end{proof}

\section{Proofs of Reweighting Techniques from \cite{Madry16}}
\label{sec:madryweightproofs}
\subsection{Proof of \cref{lemma:weightvsres}}
\label{proofs:weightvsres}
\begin{proof}
It suffices to describe the weight changes edge by edge. We also perform weight changes to make the resistance of edge $e$ \emph{at least} $r_e+r_e'$, and this implies the desired result. For edge $e$, as we have assumed that our original flow problem is undirected, we have that $1 \le \up_e = \um_e \le U.$ Without loss of generality, assume that $\up_e-f_e \le \um_e+f_e.$ Set $(\wpe)' = r_e'(\up_e-f_e)^2$ and $(\wme)' = \frac{(\wpe)'(\um_e+f_e)}{\up_e-f_e}$. We show that this satisfies the desired conditions.

Because $\frac{(\wpe)'}{\up_e-f_e} = \frac{(\wme)'}{\um_e+f_e}$ for all $e$, we know that if edge $e$ was $\zeta_e$ coupled for weights $w$ then it also $\zeta_e$ coupled for weights $w+w'.$

Note that \[ (\wme)' = \frac{(\wpe)'(\um_e+f_e)}{\up_e-f_e} \le r_e'(\um_e+f_e)(\up_e-f_e) \le 2Ur_e's_e \] by the definition of $s_e.$ Therefore, $(\wpe)' + (\wme)' \le 2(\wme)' \le 4Ur_e's_e.$

Finally, note that the new resistances are at least \[ \frac{\wpe+(\wpe)'}{(\up_e-f_e)^2} + \frac{\wme+(\wme)'}{(\ume+f_e)^2} \ge r_e + \frac{(\wpe)'}{(\up_e-f_e)^2} \ge r_e+r_e' \] by the definition of $(\wpe)'.$
\end{proof}

\subsection{Proof of \cref{lemma:resbound}}
\label{proofs:resbound}
\begin{proof}
Consider an edge $e = (u,v).$ Without loss of generality, we can assume that $\upe-f_e\le \ume+f_e$, so that $s_e = \upe-f_e$ and $\rpe \ge \rme.$ Let $\phi$ be voltages induced by the electric flow $\hf.$ We can write by Ohm's law that
\[ s_e^{-1} = \frac{1}{\upe-f_e} = \frac{\frac{|\hf_e|}{(\upe-f_e)^2}}{\frac{|\hf_e|}{\upe-f_e}} \le \frac{r_e\hf_e}{\rpe} = \frac{|\phi_u-\phi_v|}{\rpe} \le \frac{|\phi_a-\phi_b|}{\rpe} = \frac{\normrho_{w,2}^2}{\rpe} \]
 where we have used \cref{lemma:rhoenergy}. Here, the second to last inequality follows from the well-known fact that for an electric $ab$-flow that $\phi_a \le \phi_v \le \phi_b$ for all vertices $v$. The last equality follows from the fact that for the electric $\chi$-flow $\hf$ induced by $\phi$ that \[ \phi_b - \phi_a = \chi^T\phi = \hf^TB\phi = \hf^TR\hf = \E_r(\hf) = \|\rho\|_{w,2}^2 \] by \cref{lemma:rhoenergy}.
\end{proof}

\subsection{Proof of \cref{lemma:energyboost}}
\label{proofs:energyboost}
\begin{proof}
Let $\phi$ be the potentials induced by the electric flow with resistances $r$, and let $\phi' = \phi/\E_r(\hf_r).$ We use $\phi'$ as a certificate in \cref{lemma:energydual} to lower bound $\E_{r+r'}(\hf_{r+r'}).$ Specifically, by \cref{lemma:energydual} we have that
\begin{align*}
\frac{1}{\E_{r+r'}(\hf_{r+r'})} &\le \sum_{e \in E} \frac{(\phi'_u-\phi'_v)^2}{r_e+r_e'} = \sum_{e\in E} \frac{(\phi'_u-\phi'_v)^2}{r_e}-\frac{(\phi'_u-\phi'_v)^2 r_e'}{r_e(r_e+r'_e)} \\
&\le \frac{1}{\E_r(\hf_r)} - \sum_{e \in E} \frac{(\phi'_u-\phi'_v)^2 r_e'}{2r_e^2} = \frac{1}{\E_r(\hf_r)} - \frac{1}{2\E_r(\hf_r)^2} \sum_{e\in E} r_e'[\hf_r]_e^2,
\end{align*}
where we have used Ohm's Law and the definition of $\phi'$. Rearranging, taking logarithms, and using $\log(1-x) \le -x$ gives us
\begin{align*}
\log \E_{r+r'}(\hf_{r+r'}) - \log \E_r(\hf_r) &= -\log \frac{\E_r(\hf_r)}{\E_{r+r'}(\hf_{r+r'})} \ge -\log\left(1 - \sum_{e\in E} \frac{r_e'[\hf_r]_e^2}{2\E_r(\hf_r)} \right) \\
&\ge \sum_{e \in E} \frac{r_e'[\hf_r]_e^2}{2\E_r(\hf_r)} = \frac12 \sum_{e \in E} \frac{r_e's_e^2\max\{\rpe,\rme\}^2}{\normrho_{w,2}^2}
\end{align*}
by the definitions of $s_e, r_e, \rho$, and \cref{lemma:rhoenergy}.
\end{proof}
\section{Discussion of Approximate Solvers}
\label{sec:approximate}

In \cref{sec:weight}, some of our lemmas we proven assuming the solvers of \cref{thm:l2lpp,thm:l2lp} were exact. We discuss how to adapt the proofs to handle the approximate nature of the solvers here.

We first must show that the resistances are polynomially bounded on the central path.
\begin{lemma}
\label{lemma:polyres}
Let $(f,y,w)$ be a $\frac{1}{100}$-coupled point for parameter $t$, and $F_t \ge m^{1/2-\eta}.$ Then for all edges $e = (u,v)$ we have that $|y_v-y_u| \le 3m^2$ and $r_e \le 10m^4.$
\end{lemma}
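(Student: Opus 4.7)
\emph{The plan} is to bound $|y_v-y_u|$ first and then deduce the resistance bound, leveraging the slack guaranteed by $F_t \ge m^{1/2-\eta}$. Write $s_e^+ := u_e^+ - f_e$ and $s_e^- := u_e^- + f_e$ for the two-sided slacks, so that $r_e = w_e^+/(s_e^+)^2 + w_e^-/(s_e^-)^2$ and $s_e = \min\{s_e^+, s_e^-\}$. Using \cref{lemma:chiy} together with the invariant $\|w\|_1 \le 3m$ and $F_t \ge m^{1/2-\eta}$ (recall $\eta \le \tfrac12$), we get $|y_b-y_a| = |\chi^T y| \le 4\|w\|_1/F_t \le 12 m^{1/2+\eta} \le 12m$. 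Re-running the slack-control computation from the proof of \cref{lemma:rho2control} on each preconditioning edge $e$ shows $\min\{s_e^+, s_e^-\} \ge F_t/(7\|w\|_1)$, so $r_e \le 2\|w\|_1 (7\|w\|_1/F_t)^2 = O(m^3/F_t^2) = O(m^{2+2\eta}) \le 10 m^4$. The lemma therefore already holds for the $m$ preconditioning edges.

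\emph{For a non-preconditioning edge $e=(u,v)$}, we mimic \cref{lemma:chiy} with demand $\chi_{uv}$ in place of $\chi_{ab}$. Concretely, construct a feasible $\alpha\chi_{uv}$-flow $h$ with $\alpha := F_t/(21m)$ that routes $\alpha$ units via a walk from $u$ to $a$ in $G$, one preconditioning edge from $a$ to $b$ (both slacks $\ge \alpha$ by the previous step), and a walk from $b$ to $v$. Non-preconditioning edges have $\max\{s_{e'}^+, s_{e'}^-\} \ge 1 \ge \alpha$ (integer capacity $u_{e'} \ge 1$), so each is usable in at least one direction; edges whose forced traversal direction has slack smaller than $\alpha$ are routed around via the $a$-$b$ connectivity provided by the $m$ preconditioning edges. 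Writing $\alpha(y_v - y_u) = h^T(By) = h^T\g V(f) + h^T R^{1/2}\psi$ with $\|\psi\|_2 \le 1/100$, and applying the same sign-cancellation bookkeeping as in \cref{lemma:chiy} on each walk edge (where $|h_{e'}| \le$ the traversed slack allows bounding the per-edge contribution by $O(\max\{w_{e'}^+,w_{e'}^-\})$), gives $|h^T By| \le O(\|w\|_1 \alpha)$, whence $|y_v - y_u| \le 3 m^2$.

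\emph{To conclude the resistance bound}, fix $e=(u,v)$. Since $\max\{s_e^+, s_e^-\} \ge 1$, WLOG $s_e^+ \ge 1$, giving $w_e^+/s_e^+ \le w_e^+ \le 3m$. The coupling equation yields $w_e^-/s_e^- \le |y_v-y_u| + w_e^+/s_e^+ + |g_e| \le 3m^2 + 3m + r_e^{1/2}/100$. Since $w_e^- \ge 1$ implies $w_e^-/(s_e^-)^2 \le (w_e^-/s_e^-)^2$, we obtain
\[
 r_e \le \frac{w_e^+}{(s_e^+)^2} + \left(\frac{w_e^-}{s_e^-}\right)^2 \le 3m + \left(3m^2 + 3m + \frac{r_e^{1/2}}{100}\right)^2,
\]
and solving this quadratic in $r_e^{1/2}$ gives $r_e \le 10 m^4$. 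The hardest part of the plan is the construction of the $\alpha\chi_{uv}$-flow in the middle paragraph: the sign-cancellation identity inherited from \cref{lemma:chiy} works cleanly only when each walk edge is traversed in a direction where \emph{both} slacks exceed $\alpha$, so one must exploit the preconditioning gadget to route around near-saturated $G$-edges without blowing up the per-edge contribution.
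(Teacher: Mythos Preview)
Your resistance bound (last paragraph) is fine and close to the paper's, but the core of your argument — the construction of the $\alpha\chi_{uv}$-flow in the middle paragraph — has a real gap. You need $|h_{e'}|$ to be bounded by the slack in the traversed direction on every edge of the walk, and your justification that this can always be arranged (``edges whose forced traversal direction has slack smaller than $\alpha$ are routed around via the $a$-$b$ connectivity provided by the $m$ preconditioning edges'') does not work as stated: the preconditioning edges only connect $a$ and $b$, so they give you no way to detour around a near-saturated edge on a $u$-to-$a$ walk inside $G$. What you actually need is that in the directed graph where each edge is oriented in its large-slack direction(s), every vertex can reach $a$. This is true, but it requires a flow-conservation argument on cuts: if $S$ were a sink SCC in that orientation not containing $a,b$, every $G$-edge crossing out of $S$ would have residual $<\alpha\le 1$, hence carry strictly positive flow out of $S$, contradicting conservation. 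You have not supplied this, and your own final sentence concedes that this step is the hard one. (There is also a minor slip: $\alpha=F_t/(21m)\le 1$ fails once $U$ is moderately large, so $\alpha$ must be chosen as $\min\{1,F_t/(7\|w\|_1)\}$; and the per-edge contribution in the chiy calculation is $O(w_{e'}^{\pm})$, not $O(w_{e'}^{\pm}\alpha)$, so the bound you get is $|y_v-y_u|\le O(\|w\|_1/\alpha)$, which still lands at $O(m^2)$.)

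The paper sidesteps all of this by decomposing the \emph{current flow} $f$ itself into cycles and $ab$-paths. For an edge $e$ with $f_e>0$, it lies on some cycle or path of the decomposition; along that cycle/path every edge is traversed in the direction of $f$, so the one-sided coupling bound $y_{v_{i+1}}-y_{v_i}\ge -2m$ holds on each segment, and summing around the cycle (total $0$) or along the path (total $y_b-y_a$, bounded via \cref{lemma:chiy}) gives $|y_v-y_u|\le O(m^2)$ directly. This is exactly the flow-conservation structure your approach needs, but obtained for free from the decomposition rather than reconstructed edge by edge.
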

\begin{proof}
Consider a decomposition of flow $f$ into cycles and $ab$-paths that do not cancel. For edge $e$, WLOG $f_e \ge 0$. We know that edge $e$ is either involved in a cycle of a path in the decomposition. We distinguish these two cases.
\paragraph{Edge $e$ in a cycle.} Let the edges in the cycle be $e = e_1, e_2, \cdots, e_k$ in order, so that $f_{e_i} \ge 0$ for all $1 \le i \le k$ and there exist vertices $v_1, \cdots, v_k, v_{k+1} = v_1$ such that edge $e = (v_i, v_{i+1}).$ For all $1 \le i \le k$ because $(f,y)$ is $\frac{1}{100}$-coupled we have that
\begin{align*}
y_{v_{i+1}} - y_{v_i} &\ge \frac{\wpe}{\upe-f_{e_i}}-\frac{\wme}{\ume+f_{e_i}} - \frac{1}{100}r_e^{-\frac12} \\
					  &\ge \frac{\wpe}{\upe-f_{e_i}}-\frac{\wme}{\ume+f_{e_i}} - \frac{1}{100}\left(\frac{\wpe}{\upe-f_{e_i}}+\frac{\wme}{\ume+f_{e_i}}\right) \\
					  &\ge -\frac{\frac{101}{100}\wme}{\ume+f_{e_i}} \ge -2m,
\end{align*}
where we have used that $\wme \le m+1$ as $\|w\|_1 \le 3m$ and $U \ge 1.$ Therefore, we have that for edge $e = e_1$ that $y_{v_2}-y_{v_1} \ge -2m.$
Also, we have that
\[ y_{v_2}-y_{v_1} = -\sum_{i=2}^k (y_{v_{i+1}} - y_{v_i}) \le 2m^2 \] by the previous bound. Hence $|y_{v_2}-y_{v_1}| \le 2m^2$.
\paragraph{Edge $e$ in an $ab$-path.} Let the edges in the path be $e_1, e_2, \cdots, e_k$, so that $f_{e_i} \ge 0$ for all $1 \le i \le k.$ Additionally, there exist vertices $v_0, v_1, \cdots, v_k$ such that $a = v_0, b = v_k$ and $e_i = (v_{i-1}, v_i).$ Let $e = e_{\ell}$ for some index $\ell.$ By the same argument in the above paragraph, we have that $y_{v_i}-y_{v_{i-1}} \ge -2m.$ Additionally, we have that
\begin{align*}
y_{v_\ell}-y_{v_{\ell-1}} &= -\left(\sum_{i=1}^{\ell-1} (y_{v_i}-y_{v_{i-1}}) + \sum_{i=\ell+1}^k (y_{v_i}-y_{v_{i-1}}) + (y_a-y_b)\right) \\
						  &\le 2m^2 + \chi^Ty \le 2m^2 + \frac{4\|w\|_1}{F_t} \le 3m^2,
\end{align*}
where we have used \cref{lemma:chiy} and $F_t \ge m^{1/2-\eta}.$

Now we bound $r_e$ for edge $e = (u,v)$. WLOG $f_e \ge 0.$ We have that
\begin{align*}
3m^2 &\ge y_v-y_u \ge \frac{\wpe}{\upe-f_e}-\frac{\wme}{\ume+f_e} - \frac{1}{100}r_e^{-\frac12} \\
	 &\ge \frac{\wpe}{\upe-f_e}-\frac{\wme}{\ume+f_e} - \frac{1}{100}\left(\frac{\wpe}{\upe-f_e}+\frac{\wme}{\ume+f_e}\right) \\
	 &= \frac{99}{100}\left(\frac{\wpe}{\upe-f_e}+\frac{\wme}{\ume+f_e}\right) - \frac{2\wme}{\ume+f_e} \ge \frac{99}{100}r_e^\frac12 - 4m.
\end{align*}
Now rearranging gives us $r_e \le 10m^4$ as desired.
\end{proof}

We discuss \cref{lemma:energymax,lemma:gconcave,lemma:rhoinf} in the order that they were proven in \cref{sec:weight}.
\subsection{Discussion of \cref{lemma:energymax}}
If $OPT \le \err$, the result is trivial, as the statement of \cref{lemma:energymax} allows us $\err$ additive error. So we assume $OPT \ge \err.$ As in the proof of \cref{lemma:energymax}, define $f_1$ to be an $\err$-approximate minimizer to \cref{eq:expression}, and define $r' = W\frac{f_1^{2(p-1)}}{\|f_1^2\|_p^{p-1}}.$ Define $r_2 = r+r'$. Note that \[ f_1^TR_2f_1 = f_1^TRf_1 + W\|f_1\|_p^2 \ge OPT. \]

We now show that there exists a $\phi$ such that $\|B\phi - R_2f_1\|_2 \le \err.$ Let $f^*$ be the exact optimal flow and let $r_2^* = r^*+r$ denote the induced resistances from $f^*.$ As in the proof of \cref{lemma:energymax} in \cref{sec:weight}, there is a $\phi^*$ with $B\phi^* = R_2^*f^*.$ Note that all $r_e \ge \frac{1}{U^2}$, hence the objective $f^TRf + W\|f\|_p^2$ is $\frac{1}{U^2}$ strongly convex by \cref{lemma:equiv}. Therefore, if $f_1$ is a $\err$-approximate minimizer to \cref{eq:expression}, then \cref{lemma:strong} gives us that \[ \|f^*-f_1\|_2 \le \poly(m) \cdot \err = \err. \] The definition of $r_2$ shows that $\|r_2-r_2^*\|_2 \le \err$ as well. As all values of $f_1$ and $r_2$ are polynomially bounded, we have that $\|R_2f_1 - R_2^*f^*\|_2 \le \err$ as well. Finally, this shows that
\[ \|B\phi^* - R_2f_1\|_2 = \|R_2f_1 - R_2^*f^*\|_2 \le \err \] as desired.

Define $v = B\phi-R_2f_1$, and let $\|v\|_2 = \d$ where $\d \le \err.$ \cref{lemma:energydual} we have that
\begin{align*}
\E_{r_2}(\hf_{r_2}) &\ge \frac{(\phi^T\chi)^2}{(B\phi)^TR_2^{-1}(B\phi)} = \frac{\left(\phi^T(B^Tf_1)\right)^2}{(B\phi)^TR_2^{-1}(B\phi)} = \frac{\left((B\phi)^Tf_1\right)^2}{(B\phi)^TR_2^{-1}(B\phi)} \\ &= \frac{\left(f_1^TR_2f_1 + v^Tf_1\right)^2}{f_1^TR_2f_1 + 2v^Tf_1 + v^TR_2^{-1}v} \ge \frac{(OPT - \d\|f_1\|_2)^2}{OPT + 2\d\|f_1\|_2 + \d^2\|R_2^{-1}\|_\infty} \\ &\ge OPT - \err
\end{align*}
where we have used that $\d \le \err, OPT \ge \err, \|f_1\|_2 \le \poly(m)$, and $\|R_2^{-1}\|_\infty \le \poly(m)$ as $r_e \ge \frac{1}{U^2}$ for all edges $e$.

\subsection{Discussion of \cref{lemma:rhoinf}}
One change is needed. The resistances $r'$ and weights $w'$ computed are only additive $\err$ maximizers for the objective for $g_q(W)$. However, this is sufficient to adapt \cref{eq:adapt} to conclude that \[ g_q(W+W')-g_q(W) \ge \frac12\left(\frac{\max(|\rpe|,|\rme|)^2 r''_e/s_e}{\normrho_{w^\new,2}^2}\right) - \err. \] We can propagate this change throughout the proof of \cref{lemma:rhoinf}.

\subsection{Discussion of $\frac{1}{\poly(m)}$ versus $\err$ error in \cref{thm:l2lpp}}
\label{sec:KPSWpoly}
\cref{thm:l2lpp} was stated with errors $\frac{1}{\poly(m)}$ in Theorem 1.1 in \cite{KPSW19}. We explain why those errors can in fact be made to be $\err$.
Consider Theorem 3.7 in \cite{KPSW19}, and the proof of Theorem 1.1 using it. In \cite{KPSW19}, they set the parameter $\delta = \frac{1}{\poly(m)}$, and use algorithm \RecursivePreconditioning~$O(\log m)$ times. Instead, we set $\delta = \err$ and apply algorithm \RecursivePreconditioning~$\O(1)$ times. Our choice of $\d$ satisfies the conditions of Theorem 3.7 in \cite{KPSW19} as $\log\frac{1}{\delta} = \O(1)$.
\section{Discussion of Numerical Issues}
\label{sec:numerical}
We briefly discuss issues of numerical stability in our algorithm.
\subsection{Laplacian Solvers}
As all resistances, duals, and flows are polynomially bounded by \cref{lemma:polyres}, the accuracy of Laplacian solvers in \cref{thm:lap} is sufficient for our algorithms. See discussion in \cite{Madry13} for further details.

\subsection{Interior Point Methods}
For readability, we have written \cref{lemma:progress} to start from a $0$-coupled point $(f,y,w)$. Here we justify this assumption and show how to correct the analysis if instead the point $(f,y,w)$ was $\frac{1}{\poly(m)}$-coupled, and how this change propagates throughout the proof.

Let $(f,y,w)$ be a point such that edge $e$ is $\zeta_e$-coupled for all $e$, and $|\zeta_e| \le \frac{1}{\poly(m)}$ for all $e$. As all $r_e \ge \frac{1}{U}$, we have that $\|\zeta\|_{R^{-1}} \le \frac{1}{\poly(m)}$ as well, so that $(f,y)$ is $\frac{1}{\poly(m)}$-coupled. It is direct to adapt the statement of \cref{lemma:progress} to instead get that edge $e$ is $5\zeta_e + 5\d^2\left(\frac{\wpe|\rpe|^2}{\upe-f_e} + \frac{\wme|\rme|^2}{\ume+f_e}\right)$-coupled for all edges $e$.

This change propagates to the results of \cref{sec:finalweight}. \cref{lemma:perfectweight} remains unchanged. In the weight reduction procedure in \cref{lemma:weightreduce}, all guarantees are the same, except that edge $e$ will now be $5\zeta_e + 5\d^2\left(\frac{(\wpe)^\new|\rpe|^2}{\upe-f_e} + \frac{(\wme)^\new|\rme|^2}{\ume+f_e}\right)$-coupled. When we propagate this change to \cref{lemma:finalweight} this results in the point $(f^\new, y^\new, w^{\new_3})$ to instead be
$O(m^4U\|\zeta\|_{R^{-1}}) + \frac{1}{100}$-coupled. This is because the resistances may change during the algorithm, but the final resistances $R^\new$ satisfy $\frac{1}{O(m^4U)}R \pe R^\new \pe O(m^4U)R$ by \cref{lemma:polyres}, and the fact that $r_e \ge \frac{1}{U}$ for all $e$. Therefore, the method can tolerate errors of $\frac{1}{\poly(m)}$.

\end{document}